\newtheorem{observation}{Observation}
\DeclareMathOperator{\pref}{prefix}
\DeclareMathOperator{\suff}{suffix}
\DeclareMathOperator{\mirror}{R}
\DeclareMathOperator{\bin }{bin}
\DeclareMathOperator{\pad }{pad}
\DeclareMathOperator{\chain }{chain}
\newcommand{\NP}{\textbf{NP}}
\newcommand{\hide}[1]{}
\newcommand{\dollar}{\ensuremath{\$}}
\newcommand{\var}{\ensuremath{\hat{x}}}
\newcommand{\lit}{\ensuremath{\hat{c}}}
\newcommand{\underscore}[4][black] {
  \draw[#1,very thick] (#2,#3) ++(-.4,-.3) -- +(0,.1);
  \draw[#1,very thick] (#2,#3) +(-.4,-.3) -- ++(#4.4,-.3) -- +(0,.1);
}
\newcommand{\overscore}[4][black] {
  \draw[#1,very thick] (#2,#3) ++(-.4,+.3) -- +(0,-.1);
  \draw[#1,very thick] (#2,#3) +(-.4,+.3) -- ++(#4.4,+.3) -- +(0,-.1);
}
\begin{document}
\mainmatter              

\title{The complexity of string partitioning}
\titlerunning{Complexity of string partitioning}
\author{Anne Condon\inst{1} \and J\'{a}n Ma\v{n}uch\inst{1,2} \and Chris Thachuk\inst{1}
}
\authorrunning{Condon, Ma\v{n}uch and Thachuk}
\institute{
  Dept. of Computer Science, University of British Columbia, Vancouver
  BC, Canada, \email{\{condon,jmanuch,cthachuk\}@cs.ubc.ca}
  \and
  Dept. of Mathematics, Simon Fraser University, Burnaby BC, Canada
}

\maketitle

\begin{abstract}
Given a string $w$ over a finite alphabet $\Sigma$ and an integer $K$, can
$w$ be partitioned into strings of length at most $K$,
such that there are no \emph{collisions}?  We refer to this question
as the \emph{string partition} problem and show it is \NP-complete
for various definitions of collision and for a number of interesting
restrictions including $|\Sigma|=2$.  This establishes the hardness of
an important problem in contemporary synthetic biology, namely, oligo
design for gene synthesis.
\end{abstract}

\section{Introduction}
\label{sec:introduction}

Many problems in genomics have been solved by the application of
elegant polynomial-time string algorithms, while others amount to solving
known \NP-complete problems; for instance, sequence assembly amounts to
solving \emph{shortest common superstring} \cite{Karp1993}, and
genome rearrangement to \emph{sorting strings by reversals and
  transpositions} \cite{ChrIrv2001}.  The hardness of
these problems has motivated extensive research into heuristic
algorithms as well as polynomial-time algorithms for useful 
restrictions
\cite{Eriksen2002,Hartman2003,YanEtAl2005,Hannenhalli1996,GolKolZhe2005,MyersEtAl2000,PevTanWat2001}.
In a similar vein, we establish the hardness of the following
fundamental question: can a string be partitioned into factors 
(\textit{i.e.} substrings), of
bounded length, such that no two \emph{collide}?  We refer to this
as the \emph{string partition} problem and study it under various
restrictions and definitions of what it means for two factors to
\emph{collide}.

The study of string partitioning is motivated by an increasingly
important problem arising in contemporary synthetic biology, namely
gene synthesis.  This technology is emerging as an important tool
for a number of purposes including the determination of RNAi targeting
specificity of a particular gene \cite{KumGusKle2006}, design of novel proteins
\cite{CoxEtAl2007} and the construction of complete bacterial genomes
\cite{GibEtAl2008}. There have been numerous studies utilizing
synthetic genes to determine the potential of gene vaccines 
\cite{LinEtAl2006,CidEtAl2003,RodWu2003,ChlSchSan2005}.
Despite the tremendous need for synthetic genes for both
interrogative studies and for therapeutics, construction of genes, or
any long DNA or RNA sequence, is not a trivial matter. Current
technology can only produce short oligonucleotides (oligos)
accurately.  As such, a common approach is to design a set of
oligos that could assemble into the desired sequence
\cite{SteEtAl1995}.  

To understand the connection between string partitioning and gene
synthesis, consider the following.  A DNA \emph{oligo}, or
\emph{strand} is a string over the four letter alphabet
$\{\mathtt{A,C,G,T}\}$.  The \emph{reverse complement} $F'$ of an
oligo $F$ is determined from $F$ by replacing each $A$ with a $T$ and
vice versa, each $C$ with a $G$ and vice versa, and reversing the
resulting string.  Two DNA oligos $F$ and $F'$ are said to
\emph{hybridize} if a sufficiently long factor of $F$ is the reverse
complement of a factor of $F'$ (see Figure
\ref{fig:foiled-oligo-design}).
A DNA \emph{duplex} consists of a positive strand
and its reverse complement, the negative strand.  The
\emph{collision-aware oligo design for gene synthesis} (CA-ODGS)
problem is to determine cut points in the positive and negative
strands, which demarcate the oligos to be synthesized, such that the
resulting design will successfully self-assemble.  For the oligos to
self-assemble correctly, they should 1) alternate between the positive
and negative strands, with some overlap between successive oligos, and
2) only hybridize to 
the oligos they overlap with by design.
Since there is variability in
the length of the selected oligos, there are exponentially many
designs.

\begin{figure}
  \begin{center}
  \includegraphics[width=.85\textwidth]{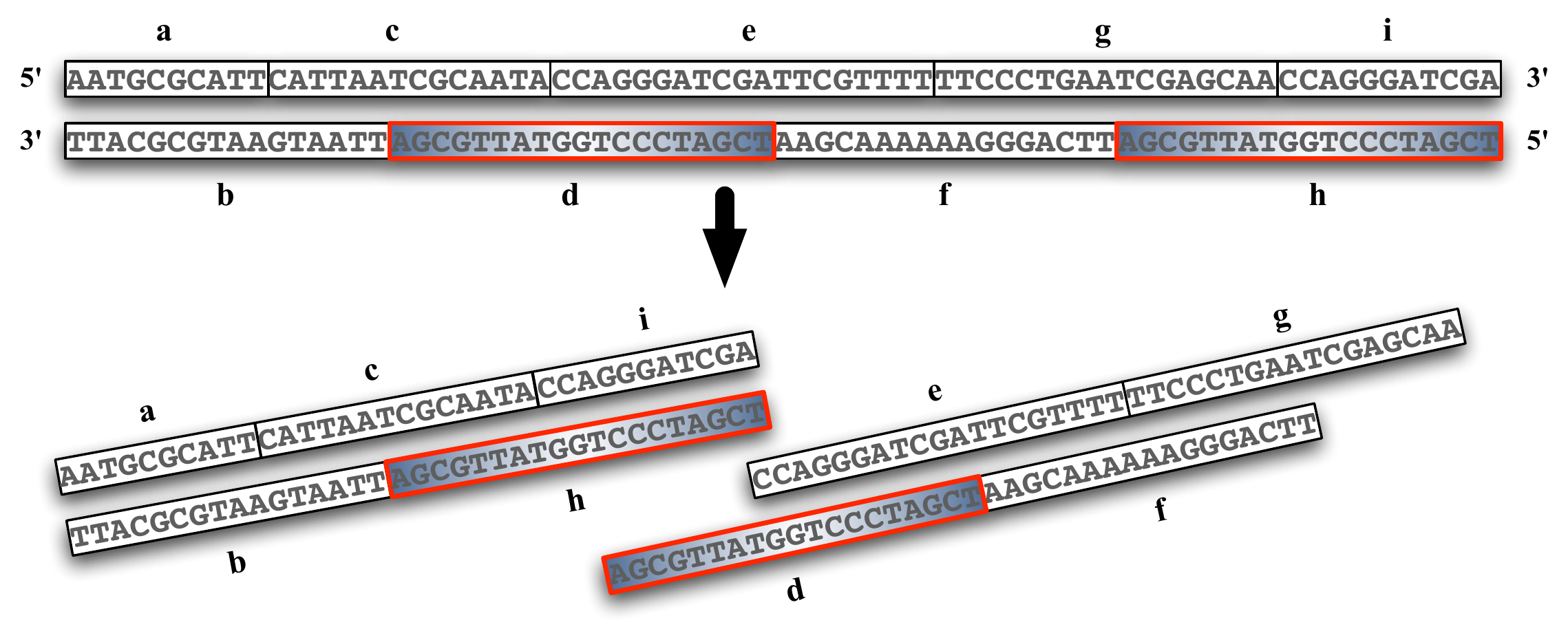}
  \caption{An intended self-assembly (top) of a set of oligos for a
    desired DNA duplex.  A foiled self-assembly (bottom) of the same
    oligos due to $d$ and $h$ being identical.}
  \label{fig:foiled-oligo-design}
  \end{center}
\end{figure}

In previous work \cite{ConManTha2008}, the authors provided some
evidence that the CA-ODGS
problem may be hard by showing that partitioning a string into factors, of
bounded length, such that no two are equal is \NP-complete, even for
strings over a quaternary alphabet.  See Figure
\ref{fig:foiled-oligo-design} for an example design that assembles
incorrectly into two fragments, with the wrong ordering of oligos and
therefore primary sequence, due to identical oligos.
In this work, we study the
underlying string partition problem in much greater detail.  We show that partitioning strings
such that no selected string is a copy/factor/prefix/suffix of another is
\NP-complete.  We begin by showing that the more general problem of
partitioning a set of strings is hard and then we show how
those instances can be reduced to single string instances,
for each respective definition of collision.  See Figure
\ref{fig:string-partition} for an example of a single string instance
(left) and set of strings instance (right).  In all cases, we
demonstrate the problems remain hard even when restricted to binary strings.

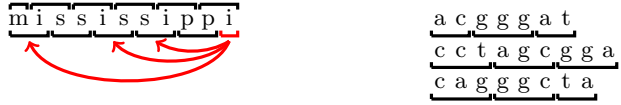
\begin{figure}


  \centering
  \small
  \begin{tikzpicture}[xscale=0.4,scale=0.70]
    \def\AS{1.0}
    
    \overscore[black]{1}{\AS}{0}
    \overscore[black]{2}{\AS}{1}
    \overscore[black]{4}{\AS}{1}
    \overscore[black]{6}{\AS}{1}
    \overscore[black]{8}{\AS}{1}
    \overscore[black]{10}{\AS}{1}
    
    \underscore[black]{1}{\AS}{1}
    \underscore[black]{3}{\AS}{1}
    \underscore[black]{5}{\AS}{1}
    \underscore[black]{7}{\AS}{1}
    \underscore[black]{9}{\AS}{1}
    \underscore[red]{11}{\AS}{0}
    
    \draw[anchor=mid] (1 ,\AS) node{m};
    \draw[anchor=mid] (2 ,\AS) node{i};
    \draw[anchor=mid] (3 ,\AS) node{s};
    \draw[anchor=mid] (4 ,\AS) node{s};
    \draw[anchor=mid] (5 ,\AS) node{i};
    \draw[anchor=mid] (6 ,\AS) node{s};
    \draw[anchor=mid] (7 ,\AS) node{s};
    \draw[anchor=mid] (8 ,\AS) node{i};
    \draw[anchor=mid] (9 ,\AS) node{p};
    \draw[anchor=mid] (10,\AS) node{p};
    \draw[anchor=mid] (11,\AS) node{i};
    
    \draw[very thick,red,->] (11,0.6) .. controls +(-1,-.25) and +(1,-.25)
    .. (7.5,0.6);
    \draw[very thick,red,->] (11,0.6) .. controls +(-1,-.5) and +(1,-.5)
    .. (5.5,0.6);
    \draw[very thick,red,->] (11,0.6) .. controls +(-1,-1) and +(1,-1)
    .. (1.5,0.6);

    \begin{scope}[shift={(20,0)}]
      \foreach \x/\l in {1/1,3/2,6/1}{
        \underscore[black]{\x}{1}{\l}
      }
      
      \foreach \x/\l in {1/a,2/c,3/g,4/g,5/g,6/a,7/t}{
        \draw[anchor=mid] (\x,1) node{\l};
      }      

      \foreach \x/\l in {1/2,4/2,7/2}{
        \underscore[black]{\x}{.4}{\l}
      }
      
      \foreach \x/\l in {1/c,2/c,3/t,4/a,5/g,6/c,7/g,8/g,9/a}{
        \draw[anchor=mid] (\x,.4) node{\l};
      }      

      \foreach \x/\l in {1/2,4/2,7/1}{
        \underscore[black]{\x}{-.2}{\l}
      }
      
      \foreach \x/\l in {1/c,2/a,3/g,4/g,5/g,6/c,7/t,8/a}{
        \draw[anchor=mid] (\x,-.2) node{\l};
      }      
    \end{scope}
    
  \end{tikzpicture}

  \centering
  \caption{(Left) Two partitions are shown for the string \emph{mississippi}.
    The selected strings in both partitions have maximum length 2.  The
    partition shown above the string is factor-free: no selected string is
    a factor of another; however, the partition shown below the string
    is not factor-free. (Right) A valid factor-free multiple string partition of a
    set of three strings into selected strings of maximum length 3.}
  \label{fig:string-partition}
\end{figure}

\section{Preliminaries}\label{sec:prelim}

A \emph{string} $w$ is a sequence of letters over an alphabet
$\Sigma$.  Let $|w|$ denote the length of $w$, $w^{\mirror}$ a
mirror image (reversal) of $w$, and let $(w)^i$ denote
the string $w$ repeated $i$ times. The empty string is denoted as
$\varepsilon$. String $x$ is a \emph{factor} of $w$ if
$w=\alpha x \beta$, for some (possibly empty) strings $\alpha$ and
$\beta$.  Similarly, $x$ is a \emph{prefix} (\emph{suffix}) of $w$
if $w=x \beta$ ($w=\alpha x$) for some (possibly empty) strings
$\alpha$ and $\beta$. The prefix (suffix) of length $k$ of $w$ will be
denoted as $\pref_{k} (w)$ ($\suff_{k} (w)$).

A $K$-\emph{partition} of $w$ is a sequence $P = p_1,p_2, \ldots
,p_l$, for some $l$, where each $p_i$ is a string over $\Sigma$ of
length at most $K$ and $w = p_1 p_2 \ldots p_l$.
We say that strings $p_{1},\dots,p_{l}$ are \emph{selected} in the
$K$-partition and that strings $p_{i}\dots p_{j}$, $1\le i\le j\le l$,
are \emph{super-selected}, with respect to the selected strings.
We say $P$ is \emph{equality-free}, \emph{prefix-free}, \emph{suffix-free}, or
\emph{factor-free} if for all $i,j$, $1 \le i \neq j \le l$, neither
$p_i$ nor $p_j$ is a copy, prefix, suffix, or factor, respectively, of the
other.  We say such partitions are \emph{valid} (for the problem in
question); otherwise, we say the partition contains a
\emph{collision}.  
We generalize the notion of a $K$-partition to a set of strings
$\mathcal{W}$ to mean a $K$-partition for each string in
$\mathcal{W}$.  The length of $\mathcal{W}$ is the combined length of
the strings in the set and will be denoted by $||\mathcal{W}||$. 
A $K$-partition for a set of strings is valid if no
two elements in any, possibly different, partition collide.  Finally,
we will refer to the boundaries of a partition of string $w$ as
\emph{cut points}, where the first cut point 0 and the last cut
point $|w|$ are called trivial. For instance, the first partition of
mississippi in Figure~\ref{fig:string-partition} has the following
non-trivial cut points $1,3,5,7$ and $9$.

In what follows we will prove \NP-completeness of various string
partitioning problems by showing a polynomial reduction from an
arbitrary instance of 3SAT(3), a problem shown to be \NP-complete by
Papadimitriou \cite{Papadimitriou1994}.

\begin{problem}[3SAT(3)]
  \ \\
  \noindent\emph{Instance}: A formula $\phi$ with a set $C$ of clauses
  over a set $X$ of variables in conjunctive normal form such that:
  \begin{enumerate}
  \item every clause contains two or three literals,
  \item each variable occurs in exactly three clauses, once negated
    and twice positive.
  \end{enumerate}

  \noindent\emph{Question}: Is $\phi$ satisfiable?
\end{problem}

\section{The String Partition Problems}
\label{sec:string-part-probl}

\begin{figure}
\centering
\begin{tikzpicture}[font=\footnotesize,
    level distance=33mm,
    level 2/.style={sibling distance=6mm},
    level 1/.style={sibling distance=14mm},
    pnode/.style={font=\footnotesize},
    enode/.style={font=\footnotesize},
    grow=right
  ]

  \node[pnode] {3SAT(3)}
     [thick,dotted,black]
     child{ node[pnode] {EF-MSP(K=2)} 
       child{ node[pnode] {EF-MSP(L=2)} 
         child{ node[pnode] {EF-SP(L=2)} }
       }
       child{ node[pnode] {EF-SP(K=2)} }
     }
     child{ node[pnode] {FF-MSP(K=3)} 
       child{ node[pnode] {FF-MSP(L=2)}
         child{ node[pnode] {FF-SP(L=2)} }
       }
       child{ node[pnode] {FF-SP(K=3)} }
     }
     child{ node[pnode] {PF-MSP(K=2)} 
       child{ node[pnode] {PF-SP(K=2)} }
       child{ node[pnode] {PF-MSP(L=2)}
         child{ node[pnode] {PF-SP(L=2)} }
       }
     }
     ;
\end{tikzpicture}

  \caption{Chain of reductions for different string partition
    variations from original 3SAT(3) problem.  $K$ is maximum selected
  string size and $L$ is maximum alphabet size.  Parameters are
  unbounded if not shown.  $EF$, $FF$ and $PF$ are equality-free,
  factor-free, and prefix(suffix)-free, respectively.}\label{fig:reduction-relation}
\end{figure}
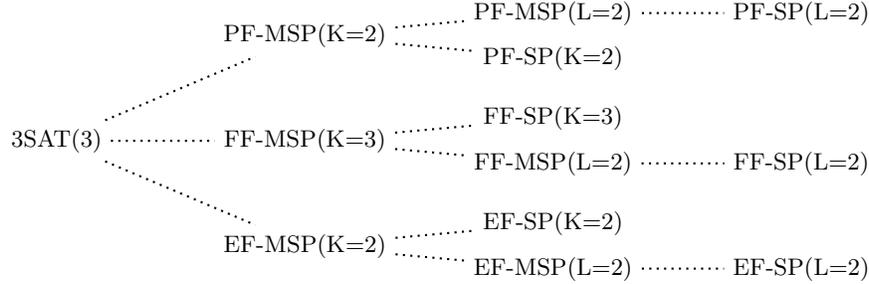

For each $\mathcal{X}$ in
$\{\text{equality},\text{prefix},\text{suffix},\text{factor}\}$, we
will consider two string partition problems. 

\begin{problem}[$\mathcal{X}$-Free Multiple String Partition
  ($\mathcal{X}$-MSP) Problem]\label{p:MSP}
  \ \\
  \noindent
  \emph{Instance}: Finite alphabet $\Sigma$ of size $L$, a positive
  integer $K$, and a set of strings $\mathcal{W}$ over $\Sigma^*$.\\
  \noindent
  \emph{Question}: Is there an $\mathcal{X}$-free, $K$-partition $P$ of
  $\mathcal{W}$?
\end{problem}

\begin{problem}[$\mathcal{X}$-Free String Partition
  ($\mathcal{X}$-SP) Problem]\label{p:SP}
  \ \\
  \noindent
  \emph{Instance}: Finite alphabet $\Sigma$ of size $L$, a positive
  integer $K$, and a string $w$ over $\Sigma^*$.\\
  \noindent
  \emph{Question}: Is there an $\mathcal{X}$-free, $K$-partition $P$ of
  $w$?
\end{problem}

We will show \NP-completeness of all these problems even when
restricted to the constant size of the partition ($K = 2,3$), or to
the binary alphabet ($L = 2$).  See
Figure~\ref{fig:reduction-relation} showing the chain of reductions
used to prove the complexity of the three variations and related
restrictions of the problem.

\section{Equality-Free String Partition Problems}
\label{sec:equal-free-part}

\subsection{Equality-Free Multiple String Partition with Unbounded Alphabet}
\label{sec:equal-free-mult-unbounded-alphabet}

We now describe a polynomial reduction from 3SAT(3) to EF-MSP with $K
= 2$ and unbounded alphabet.  Let
$\phi$ be an instance of 3SAT(3), with set $C=\{c_{1},\dots,c_{m}\}$
of clauses, and set $X=x_{1},\dots,x_{n}$ of variables.  We shall
define an alphabet $\Sigma$ and construct a set of strings
$\mathcal{W}$ over $\Sigma^*$,
such that $\mathcal{W}$ has a collision-free 2-partition if and only if $\phi$
is satisfiable. Let $|c_{i}|$ denote the number of literals contained
in the clause $c_{i}$
and let $c_i^1,\dots,c_i^{|c_i|}$ be the literals of clause $c_i$.

We construct $\mathcal{W}$ to be a union of three types of strings:
clause strings ($\mathcal{C}$), enforcer strings ($\mathcal{E}$) and
forbidden strings ($\mathcal{F}$). First, for each clause of $\phi$,
we create a clause string $C$ such that an equality-free 2-partition
of $\mathcal{C}$ unambiguously selects exactly one literal from $C$.
We refer to the selected strings corresponding to literals as
\emph{selected literals}.
Intuitively, the selected literals of the clause strings are intended
to be a satisfying truth assignment for the variables of $\phi$.
Second, for each variable we create an enforcer string to ensure that
selected literals are \emph{consistent}. Specifically, the enforcer
strings ensure that a positive and a negative literal for the same
variable cannot be simultaneously selected.  Finally, we find it
helpful to create so called forbidden strings that ensure certain strings
cannot be selected in the clause and enforcer strings.

We construct an alphabet $\Sigma$, formally defined below, which
includes a letter for each literal occurrence in the clauses, one
letter for each variable, and the letters $\boxminus $ and $\boxplus $
used as delimiters.
\begin{align*}
  \Sigma &= \{\var_i;\; x_i \in X\}
  \cup \{\lit_{i}^{j};\; c_i \in C \wedge 1 \leq j \leq |c_i|\}
  \cup \{\boxminus ,\boxplus \}
\end{align*}

Note that $|\Sigma|$ is linear in the size of the
3SAT(3) problem $\phi$ (at most $n+3m+2$).

\paragraph{Construction of forbidden strings:}

To ensure that certain strings cannot be selected in $\mathcal{C}$ or
$\mathcal{E}$, we will use the following set of forbidden strings
$\mathcal{F} = \{\boxminus ,\boxplus \}$.

\begin{observation}\label{obs:EF-MSP-forb}
  No string from the forbidden set $\mathcal{F}$
  can be selected in $\mathcal{C}$ or $\mathcal{E}$.
\end{observation}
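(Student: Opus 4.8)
The plan is to argue that the two single-letter strings $\boxminus$ and $\boxplus$, which by construction form the forbidden set $\mathcal{F}\subseteq\mathcal{W}$, are themselves forced to appear as selected strings in any valid ($K=2$) partition of $\mathcal{W}$, and that this precludes them from being selected anywhere else. First I would observe that each of $\boxminus$ and $\boxplus$ is a string of length $1$, so the only $K$-partition of the forbidden string $\boxminus\in\mathcal{W}$ is the trivial one consisting of the single selected string $\boxminus$; likewise for $\boxplus$. Hence in every $K$-partition $P$ of $\mathcal{W}$, both $\boxminus$ and $\boxplus$ occur as selected strings (coming from the forbidden strings themselves).

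The key step is then to invoke equality-freeness: since $P$ is equality-free, no selected string may be equal to another selected string. Because $\boxminus$ is already selected (as the partition of the forbidden string $\boxminus$), it cannot be selected a second time — in particular it cannot be chosen as one of the factors in the partition of any clause string in $\mathcal{C}$ or any enforcer string in $\mathcal{E}$. The same reasoning applies verbatim to $\boxplus$. This establishes exactly the claim: no string from $\mathcal{F}$ can be selected in $\mathcal{C}$ or $\mathcal{E}$.

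I do not anticipate a genuine obstacle here — the statement is essentially an immediate consequence of two facts, namely that length-$1$ strings admit only the trivial partition and that equality-freeness forbids repeating a selected string across the whole instance $\mathcal{W}$. The only subtlety worth stating explicitly is that the forbidden set was deliberately included \emph{as strings of $\mathcal{W}$}, so their presence as selected strings is not optional; once that is noted, the argument is a one-line application of the equality-free condition. If the paper later needs the analogous fact for the prefix-free, suffix-free, or factor-free variants, the same proof works, since a length-$1$ string that is selected is a prefix/suffix/factor of itself and of any longer occurrence, so it still cannot be reselected.
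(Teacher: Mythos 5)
Your argument is correct and is exactly the reasoning the paper intends (it states this as an Observation without an explicit proof): the forbidden strings $\boxminus$ and $\boxplus$ have length one, so they are necessarily selected whole in the partition of $\mathcal{F}\subseteq\mathcal{W}$, and equality-freeness then prevents either from being selected again in $\mathcal{C}$ or $\mathcal{E}$. No gaps.
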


\paragraph{Construction of clause strings:}  

For each clause $c_i \in C$, construct the \emph{$i$-th clause string}
to be $\lit_i^1 \boxminus \lit_i^2$ if $|c_i| = 2$, and
$\lit_i^1 \boxminus \lit_i^2 \boxminus \lit_i^3$ if $|c_i| = 3$.

\begin{figure}
  \centering \small
  \begin{tikzpicture}[xscale=0.67,scale=1.0]
    \begin{scope}

      \underscore[red]{1}{4}{0}
      \underscore{2}{4}{1}
      
      \underscore{1}{3.7}{1} 
      \underscore[red]{3}{3.7}{0}
       
      \draw (1,4) +(0,0) node{\ensuremath{\lit_{i}^1}};
      \draw (1,4) +(1,0) node{\ensuremath{\boxminus }};
      \draw (1,4) +(2,0) node{\ensuremath{\lit_{i}^2}};
    \end{scope}
    
    \begin{scope}[shift={(6,0)}]

      \underscore[red]{1}{4}{0}
      \underscore{2}{4}{1}
      \underscore{4}{4}{1}

      \underscore{1}{3.7}{1} 
      \underscore[red]{3}{3.7}{0}
      \underscore{4}{3.7}{1}

      \underscore{1}{3.4}{1} 
      \underscore{3}{3.4}{1}
      \underscore[red]{5}{3.4}{0}
      
      \draw (1,4) +(0,0) node{\ensuremath{\lit_{i}^1}};
      \draw (1,4) +(1,0) node{\ensuremath{\boxminus }};
      \draw (1,4) +(2,0) node{\ensuremath{\lit_{i}^2}};
      \draw (1,4) +(3,0) node{\ensuremath{\boxminus }};
      \draw (1,4) +(4,0) node{\ensuremath{\lit_{i}^3}};
    \end{scope}
    
  \end{tikzpicture}
  \caption{The 2-literal clause string (left) and 3-literal clause
    string (right) used in the reduction from 3SAT(3) to EF-MSP.
    Shown below each string are all valid 2-partitions.
    \emph{Selected} literals of a partition are shown in \textcolor{red}{red}.}
  \label{fig:EF-MSP-clause-string}
\end{figure}
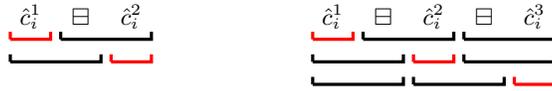

\begin{lemma}\label{lem:EF-MSP-clause-string} 
  Given that no string from the forbidden set $\mathcal{F}$ is
  selected in $\mathcal{C}$, exactly one literal
  letter must be selected for each clause string in any equality-free
  2-partition of $\mathcal{C}$.
\end{lemma}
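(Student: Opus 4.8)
The plan is to analyze the possible 2-partitions of a single clause string and show that each valid one selects exactly one literal letter. Recall the clause string for $c_i$ is $\lit_i^1 \boxminus \lit_i^2$ (if $|c_i|=2$) or $\lit_i^1 \boxminus \lit_i^2 \boxminus \lit_i^3$ (if $|c_i|=3$), and by Observation~\ref{obs:EF-MSP-forb} neither $\boxminus$ nor $\boxplus$ may be selected as a singleton. First I would treat the 2-literal case: a 2-partition cuts the length-3 string $\lit_i^1 \boxminus \lit_i^2$ either as $(\lit_i^1)(\boxminus \lit_i^2)$ or as $(\lit_i^1 \boxminus)(\lit_i^2)$, since any partition containing the singleton $\boxminus$ is forbidden, and a single block of length $3$ exceeds $K=2$. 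In the first option exactly the literal letter $\lit_i^1$ is selected as a singleton; in the second, exactly $\lit_i^2$. So exactly one literal letter is selected, as claimed.

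Next I would handle the 3-literal case $\lit_i^1 \boxminus \lit_i^2 \boxminus \lit_i^3$, which has length $5$. Here I enumerate the 2-partitions subject to the constraint that $\boxminus$ never appears as its own block and no block has length exceeding $2$. Every cut must separate the two $\boxminus$ symbols appropriately: each $\boxminus$ must be paired with an adjacent literal letter, either on its left or its right. A short case analysis on where the first $\boxminus$ attaches, then where the second attaches, shows the only valid partitions are $(\lit_i^1)(\boxminus \lit_i^2)(\boxminus \lit_i^3)$, $(\lit_i^1 \boxminus)(\lit_i^2)(\boxminus \lit_i^3)$, and $(\lit_i^1 \boxminus)(\lit_i^2 \boxminus)(\lit_i^3)$ — exactly the three partitions depicted in Figure~\ref{fig:EF-MSP-clause-string}. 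In each, precisely one of $\lit_i^1, \lit_i^2, \lit_i^3$ is selected as a singleton literal letter, and the remaining literal letters are consumed inside length-2 blocks of the form $\lit_i^j \boxminus$ or $\boxminus \lit_i^j$.

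Finally, I would argue that a selected literal letter $\lit_i^j$ (a block of length $1$) is genuinely distinguishable from the length-2 blocks containing literal letters, so "exactly one literal letter is selected" is unambiguous; this is immediate since singleton blocks and length-2 blocks have different lengths. Combining the two cases gives the lemma. The main obstacle is simply making the case enumeration for the 3-literal string airtight — in particular, being careful that every block has length at least $1$ and at most $2$, that the two delimiter symbols are never isolated, and that no partition is overlooked; but this is a finite check over a string of length $5$, so there is no real difficulty, only bookkeeping.
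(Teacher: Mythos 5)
Your argument is correct and is essentially the paper's own proof: since $\boxminus$ cannot be selected as a singleton, each occurrence of $\boxminus$ must be absorbed into a length-2 block with an adjacent literal letter, leaving exactly one literal letter as a singleton; your explicit enumeration of the two (resp.\ three) valid partitions just spells out the same case analysis the paper delegates to Figure~\ref{fig:EF-MSP-clause-string}.
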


\begin{proof}
  Consider the clause string for clause $c_i$.  
  Whether $c_i$ has two or three literals, 
  the forbidden substring $\boxminus $ cannot be selected alone.
  Therefore, each $\boxminus $ must be selected with an adjacent literal
  letter.  This leaves exactly one other literal letter which must be
  selected (see Figure \ref{fig:EF-MSP-clause-string}).\qed
\end{proof}

\paragraph{Construction of enforcer strings:}
 
We must now ensure that no literal of $\phi$ that is selected in $\mathcal{C}$
is the negation of another selected literal.  By definition of
3SAT(3), each variable appears exactly three times:
twice positive and once negated. Let $c_i^p$ and $c_j^q$ be
the two positive and $c_k^r$ the negated occurrences of a
variable $x_{v}$.  Then construct the \emph{enforcer string} for this
variable as follows
$\lit_i^p \boxplus \lit_k^r \var_v \lit_k^r \var_v \lit_k^r \boxplus \lit_j^q$.

\begin{figure}
	\centering
  \small
  \begin{tikzpicture}[xscale=0.67,scale=1.0]
    \underscore[red]{1}{4.0}{0} \underscore{2}{4.0}{1}
    \underscore{4}{4.0}{0} \underscore[red]{5}{4.0}{0}
    \underscore{6}{4.0}{1} \underscore{8}{4.0}{1}

    \underscore[red]{1}{3.7}{0} \underscore{2}{3.7}{1}
    \underscore{4}{3.7}{0} \underscore{5}{3.7}{1}
    \underscore[red]{7}{3.7}{0} \underscore{8}{3.7}{1}

    \underscore[red]{1}{3.4}{0} \underscore{2}{3.4}{1} \underscore{4}{3.4}{0} 
    \underscore{5}{3.4}{1} \underscore{7}{3.4}{1} \underscore[red]{9}{3.4}{0}

    \underscore[red]{1}{3.1}{0} \underscore{2}{3.1}{1}
    \underscore{4}{3.1}{1} \underscore{6}{3.1}{0}
    \underscore[red]{7}{3.1}{0} \underscore{8}{3.1}{1}

    \underscore[red]{1}{2.8}{0} \underscore{2}{2.8}{1} \underscore{4}{2.8}{1} 
    \underscore{6}{2.8}{0} \underscore{7}{2.8}{1} \underscore[red]{9}{2.8}{0}

    \underscore{1}{2.5}{1} \underscore[red]{3}{2.5}{0} 
    \underscore{4}{2.5}{0} \underscore{5}{2.5}{1} 
    \underscore{7}{2.5}{1} \underscore[red]{9}{2.5}{0}

    \underscore{1}{2.2}{1} \underscore[red]{3}{2.2}{0} 
    \underscore{4}{2.2}{1} \underscore{6}{2.2}{0} 
    \underscore{7}{2.2}{1} \underscore[red]{9}{2.2}{0}

    \underscore{1}{1.9}{1} \underscore{3}{1.9}{1} 
    \underscore[red]{5}{1.9}{0} \underscore{6}{1.9}{0} 
    \underscore{7}{1.9}{1} \underscore[red]{9}{1.9}{0}

    \underscore{1}{1.6}{1} \underscore{3}{1.6}{1}
    \underscore[red]{5}{1.6}{0} \underscore{6}{1.6}{1}
    \underscore{8}{1.6}{1}

    \newcommand{\enfstr}[2] {
      \draw (#1,#2) +(0,0) node{\ensuremath{\lit_i^p}};
      \draw (#1,#2) +(1,0) node{\ensuremath{\boxplus }};
      \draw (#1,#2) +(2,0) node{\ensuremath{\lit_k^r}};
      \draw (#1,#2) +(3,0) node{\ensuremath{\var_v}};
      \draw (#1,#2) +(4,0) node{\ensuremath{\lit_k^r}};
      \draw (#1,#2) +(5,0) node{\ensuremath{\var_v}};
      \draw (#1,#2) +(6,0) node{\ensuremath{\lit_k^r}};
      \draw (#1,#2) +(7,0) node{\ensuremath{\boxplus }};
      \draw (#1,#2) +(8,0) node{\ensuremath{\lit_j^q}};
    }
    \enfstr{1}{4}
  \end{tikzpicture}
\caption{All possible 2-partitions are shown for the enforcer string 
	of a variable $x_v$ having two positive literals
  $c_i^p$ and $c_j^q$, and one negative literal
  $c_k^r$.  In each partition, either $\lit_{k}^{r}$ is selected or both
  $\lit_{i}^{p}$ and $\lit_{j}^{q}$ are which guarantees that
  letters for positive and negated literals of $x_{v}$ cannot be
  simultaneously selected in $\mathcal{C}$.}
  \label{fig:EF-MSP-enforcer-string}
\end{figure}

\begin{lemma}\label{lem:EF-MSP-enforcer-string}
  Given that no string from the forbidden set $\mathcal{F}$ is
  selected in $\mathcal{C}\cup \mathcal{E}$,
  any equality-free 2-partition of $\mathcal{C}\cup \mathcal{E}$ must
  be consistent. In addition, for any consistent choice of selecting
  letters for literals in $\mathcal{C}$, there is an equality-free
  2-partition of $\mathcal{C}\cup \mathcal{E}\cup \mathcal{F}$.
\end{lemma}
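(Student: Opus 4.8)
The plan is to analyze the enforcer string $E_v = \lit_i^p \boxplus \lit_k^r \var_v \lit_k^r \var_v \lit_k^r \boxplus \lit_j^q$ for a variable $x_v$ and show two things: (1) every equality-free $2$-partition of $\mathcal{C}\cup\mathcal{E}$ in which no forbidden string is selected forces a consistent choice, i.e.\ one cannot simultaneously select $\lit_k^r$ (the negated occurrence) in its clause string and both $\lit_i^p$ and $\lit_j^q$ (the positive occurrences) in theirs; and (2) conversely, any consistent selection of literal letters across $\mathcal{C}$ can be extended to a full equality-free $2$-partition of $\mathcal{C}\cup\mathcal{E}\cup\mathcal{F}$.

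For the forward direction I would argue locally on $E_v$. By Observation~\ref{obs:EF-MSP-forb} the two $\boxplus$ letters cannot be selected alone, so each must be grouped with an adjacent letter. The left $\boxplus$ sits between $\lit_i^p$ and $\lit_k^r$, so it is selected either as $\lit_i^p\boxplus$ or as $\boxplus\lit_k^r$; symmetrically the right $\boxplus$ is selected as $\lit_k^r\boxplus$ or $\boxplus\lit_j^q$. The remaining middle block $\lit_k^r\var_v\lit_k^r\var_v\lit_k^r$ (possibly shortened by one letter on each end, depending on how the $\boxplus$'s were absorbed) must be partitioned into pieces of length at most $2$, and crucially it must avoid repeating a selected single letter or a selected pair. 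The key combinatorial observation — which Figure~\ref{fig:EF-MSP-enforcer-string} is meant to illustrate by exhaustive enumeration — is that in every valid $2$-partition of $E_v$, either the singleton $\lit_k^r$ is selected somewhere, or else both singletons $\lit_i^p$ and $\lit_j^q$ are selected (the latter happening exactly when both $\boxplus$'s are absorbed inward, as $\boxplus\lit_k^r$ on the left and $\lit_k^r\boxplus$ on the right, leaving the middle $\var_v\lit_k^r\var_v$ to be cut so that no single $\lit_k^r$ is isolated). I would make this precise with a short case split on the four ways of absorbing the two $\boxplus$'s, noting that in the three cases where at least one $\boxplus$ is absorbed \emph{outward} we are forced to isolate a $\lit_k^r$ in the middle, and in the one remaining case the two positive-literal singletons are both selected. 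Then: if some clause partition selected the singleton $\lit_k^r$ in $c_k$'s clause string, the enforcer cannot also select singleton $\lit_k^r$ (equality collision), so the enforcer is in the "both positives selected" case, forcing $\lit_i^p$ and $\lit_j^q$ to be selected in $E_v$ — but that is fine, since what we must rule out is the clause strings selecting $\lit_k^r$ together with one of $\lit_i^p,\lit_j^q$; wait — rather, the argument is the contrapositive: if the clause strings selected $\lit_k^r$ in $c_k$ \emph{and} $\lit_i^p$ in $c_i$, then in $E_v$ we can select neither the singleton $\lit_k^r$ nor (jointly) the singletons $\lit_i^p$ and $\lit_j^q$, contradicting the combinatorial fact; hence such a selection is impossible, which is exactly consistency. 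I would spell this out carefully for both "positive + negative" clashes ($\{c_i^p,c_k^r\}$ and $\{c_j^q,c_k^r\}$ selected together).

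For the converse direction, given a consistent selection of literal letters in $\mathcal{C}$, I must exhibit an equality-free $2$-partition of all of $\mathcal{C}\cup\mathcal{E}\cup\mathcal{F}$. The forbidden strings $\mathcal{F}=\{\boxminus,\boxplus\}$ are length-$1$ and are simply left as the whole "partition" of each one-letter string, so the only real content is choosing, for each enforcer string $E_v$, a valid $2$-partition whose selected literal singletons are compatible with what the clause strings already forced. Consistency says that for each $v$, \emph{not both} of the following hold: $\lit_k^r$ is selected in $c_k$'s clause string, and ($\lit_i^p$ or $\lit_j^q$) is selected in its clause string. So either $\lit_k^r$ is not selected in $\mathcal{C}$ (pick the enforcer partition that isolates $\lit_k^r$ in the middle and absorbs both $\boxplus$'s outward — this selects exactly $\lit_i^p, \lit_k^r, \lit_j^q$ as relevant singletons, but $\lit_k^r$ being selected here collides with nothing since it's unselected in $\mathcal{C}$, and $\lit_i^p,\lit_j^q$ — hmm, those might be selected in $\mathcal{C}$, so I instead want a partition of $E_v$ that isolates $\lit_k^r$ without isolating $\lit_i^p$ or $\lit_j^q$, e.g.\ cut $E_v$ as $\lit_i^p\boxplus \,|\, \lit_k^r \,|\, \var_v\lit_k^r \,|\, \var_v \,|\, \lit_k^r\boxplus \,|\, \lit_j^q$ — wait $\lit_j^q$ alone is isolated; instead $\ldots|\,\boxplus\lit_j^q$, giving selected singletons only $\lit_k^r$ and $\var_v$), or else neither $\lit_i^p$ nor $\lit_j^q$ is selected in $\mathcal{C}$ (pick the partition isolating $\lit_i^p$ and $\lit_j^q$ and not isolating $\lit_k^r$). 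In each branch I exhibit one explicit partition of $E_v$ and check: (a) it is a valid $2$-partition; (b) internally equality-free; (c) its selected strings collide with nothing in the clause strings (the pairs all contain a $\boxplus$, which no clause-string pair does, and no clause-string pair contains $\var_v$; the isolated singletons are precisely the ones consistency guarantees are unselected in $\mathcal{C}$); and (d) enforcer strings for different variables $v\neq v'$ don't collide with each other, because $E_v$'s pairs all involve either a $\boxplus$ paired with a literal of $x_v$'s clauses or $\var_v$, and the singleton letters isolated are literals of $x_v$ or $\var_v$ — all disjoint across $v$ by construction of $\Sigma$.

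I expect the main obstacle to be the bookkeeping in the converse: one must choose the enforcer partition so that the set of \emph{isolated singleton letters} it contributes is exactly a subset of the literal letters known to be unselected in $\mathcal{C}$, while also never isolating a letter that some \emph{other} enforcer string isolates — and since the three literal letters $\lit_i^p,\lit_k^r,\lit_j^q$ are shared between $E_v$ and the three clause strings of $x_v$, I need the earlier lemmas (Lemma~\ref{lem:EF-MSP-clause-string} giving exactly one selected literal per clause, Observation~\ref{obs:EF-MSP-forb}) to pin down the clause side. The forward direction's obstacle is simply being exhaustive about the four $\boxplus$-absorption cases and confirming the "isolate $\lit_k^r$ XOR isolate both positives" dichotomy holds in each; the figure does this pictorially, and I would reproduce it as a clean case analysis rather than appealing to the picture.
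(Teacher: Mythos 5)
Your proposal is correct and follows essentially the same route as the paper's proof: the same exhaustive case analysis of the enforcer string (your four-way split on how the two $\boxplus$'s are absorbed is exactly the enumeration behind Figure~\ref{fig:EF-MSP-enforcer-string}), the same dichotomy (every valid $2$-partition isolates $\lit_k^r$ or isolates both $\lit_i^p$ and $\lit_j^q$), and the same converse construction with the same cross-collision checks between clause strings, distinct enforcer strings, and $\mathcal{F}$. The only repair needed is in the converse branch where $\lit_k^r$ is not selected in $\mathcal{C}$: the explicit partition you settle on there is not well-formed (and in fact no valid $2$-partition of the enforcer string has singleton set $\{\lit_k^r,\var_v\}$, since absorbing a $\boxplus$ inward always forces the outermost literal on that side to be isolated); the partition you want is the unique one with both $\boxplus$'s absorbed outward, namely $\lit_i^p\boxplus \mid \lit_k^r\var_v \mid \lit_k^r \mid \var_v\lit_k^r \mid \boxplus\lit_j^q$, whose only isolated letter is $\lit_k^r$.
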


\begin{proof}
  Consider the enforcer string for variable $x_{v}$ with positive
  literals $c_{i}^{p}=c_{j}^{q}=x_{v}$, and the negated
  literal $c_k^r=\neg x_{v}$. Figure~\ref{fig:EF-MSP-enforcer-string} shows all 9
  possible 2-partitions of the enforcer string (since $\boxplus $ is a forbidden
  string, each $\boxplus $ must be selected with an adjacent letter). It follows that
  in each of them either $\lit_{k}^{r}$ is selected or both
  $\lit_{i}^{p}$ and $\lit_{j}^{q}$ are. In the first case,
  $\lit_{k}^{r}$ cannot be selected in $\mathcal{C}$ and thus satisfied
  literals are chosen consistently for $x_{v}$. In the second case,
  letters for neither of the positive occurrences of $x_{v}$ can be selected in
  $\mathcal{C}$. 

  To show the second part of the claim, observe that there is
  a 2-partition of the enforcer string compatible with any of four valid
  combinations of selecting letters for the corresponding literals in $\mathcal{C}$ (for
  example, by choosing the fifth or the last 2-partitions in
  Figure~\ref{fig:EF-MSP-enforcer-string}). Since enforcer strings
  share only one letter in common, namely, $\boxplus $, which is never
  selected in the enforcer strings, there are no collisions between
  2-partitions of all enforcer strings. Furthermore, there are no collisions
  between strings selected in $\mathcal{C}$ and in $\mathcal{E}$:
  strings of length two selected in $\mathcal{C}$ contain the letter
  $\boxminus $, which does not appear in the enforcer strings; strings
  of length one are literals and the partitioning of enforcer strings
  was chosen in a way that literals (in
  $\mathcal{C}$) cannot be selected again in $\mathcal{E}$.\qed
\end{proof}

This completes the reduction.  Notice that the reduction is polynomial
as the combined length of the constructed set of strings $\mathcal{W} =
\mathcal{C} \cup \mathcal{E} \cup \mathcal{F}$ is at most $5m + 9n + 2$.

\begin{theorem}\label{thm:EF-MSP-NPC}
  Equality-Free Multiple String Partition (EF-MSP) is
  \NP-complete for $K=2$.
\end{theorem}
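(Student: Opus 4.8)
The plan is to prove Theorem~\ref{thm:EF-MSP-NPC} in two parts: membership in \NP{} and \NP-hardness. Membership is routine --- given a $K$-partition $P$ of $\mathcal{W}$, we can verify in polynomial time that every selected string has length at most $K=2$, that concatenation of the parts reproduces each input string, and that no two selected strings (over all strings in $\mathcal{W}$) are equal; the partition itself is a polynomial-size certificate. So the bulk of the argument is the reduction, and essentially all the work has already been assembled in the preceding lemmas.

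For \NP-hardness I would reduce from 3SAT(3). Given an instance $\phi$, build $\mathcal{W} = \mathcal{C} \cup \mathcal{E} \cup \mathcal{F}$ exactly as constructed above, over the alphabet $\Sigma$; the remark following Lemma~\ref{lem:EF-MSP-enforcer-string} already notes that $||\mathcal{W}|| \le 5m+9n+2$ and $|\Sigma| \le n+3m+2$, so the reduction is polynomial-time computable. It then remains to show $\phi$ is satisfiable if and only if $\mathcal{W}$ admits an equality-free $2$-partition. For the forward direction, take a satisfying assignment, and for each clause string select (as the length-one part) the letter $\lit_i^j$ of one literal made true by the assignment, pairing each $\boxminus$ with an adjacent literal letter as in Lemma~\ref{lem:EF-MSP-clause-string}; this yields a consistent choice in the sense defined before Lemma~\ref{lem:EF-MSP-enforcer-string}, so by the second part of that lemma it extends to an equality-free $2$-partition of all of $\mathcal{W}$. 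Moreover the only length-one parts selected in $\mathcal{C}$ are the $\lit_i^j$'s for the chosen true literals, and these are pairwise distinct letters, so there is no equality collision among clause strings either.

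For the reverse direction, suppose $\mathcal{W}$ has an equality-free $2$-partition $P$. By Observation~\ref{obs:EF-MSP-forb} neither $\boxminus$ nor $\boxplus$ is selected alone in $\mathcal{C}$ or $\mathcal{E}$ (they are in $\mathcal{F}$). Hence Lemma~\ref{lem:EF-MSP-clause-string} applies: for each clause $c_i$ exactly one literal letter $\lit_i^j$ is selected as a length-one part in the $i$-th clause string. Define a truth assignment by setting each variable so as to satisfy every literal whose letter is thereby selected --- this is well-defined precisely because Lemma~\ref{lem:EF-MSP-enforcer-string} guarantees the selection is consistent, i.e.\ we never select letters for both a positive and the negated occurrence of the same variable. (Any variable not pinned down by a selected literal may be set arbitrarily.) Since each clause contributes a selected literal letter that this assignment makes true, every clause is satisfied, so $\phi$ is satisfiable.

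The only subtlety --- and the place I would be most careful --- is making sure the equality-freeness constraint is used symmetrically: a selected letter $\lit_k^r$ appears both in the clause string for $c_k$ and inside the enforcer string for variable $x_v$, and the whole point of the enforcer gadget (Figure~\ref{fig:EF-MSP-enforcer-string}) is that in any valid $2$-partition the letter $\lit_k^r$ must be consumed inside a length-two block of the enforcer string whenever it was selected alone in $\mathcal{C}$, and conversely the positive-occurrence letters $\lit_i^p,\lit_j^q$ are forced when $\lit_k^r$ is the length-one selection in the enforcer. This cross-gadget interaction is exactly what Lemma~\ref{lem:EF-MSP-enforcer-string} packages, so once its hypothesis ("no forbidden string is selected in $\mathcal{C}\cup\mathcal{E}$") is discharged via Observation~\ref{obs:EF-MSP-forb}, the theorem follows by chaining the two lemmas. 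I would close by noting that $K=2$ is used throughout and that nothing in the construction requires a larger partition bound, which is what gives the stated restriction.
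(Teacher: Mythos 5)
Your proposal is correct and follows essentially the same route as the paper's own proof: NP membership by certificate checking, then chaining Observation~\ref{obs:EF-MSP-forb} with Lemmas~\ref{lem:EF-MSP-clause-string} and~\ref{lem:EF-MSP-enforcer-string} to establish both directions of the equivalence with 3SAT(3). The extra care you take in noting that the selected literal letters are pairwise distinct and in explicitly discharging the lemmas' hypotheses is a welcome but inessential elaboration of the same argument.
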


\begin{proof}\label{pf:EF-MSP-NPC}
  It is easy to see that EF-MSP Problem is in \NP: a nondeterministic algorithm
  need only guess a partition $P$ where $|p_i| \leq K$ for all $p_i$
  in $P$ and check in polynomial time that no two strings in $P$ are
  equal. Furthermore, it is clear that an arbitrary
  instance $\phi$ of 3SAT(3) can be reduced to an instance of EF-MSP,
  specified by a set of strings $\mathcal{W}=\mathcal{C} \cup
  \mathcal{E} \cup \mathcal{F}$, in polynomial time and
  space by the reduction detailed above.

  Now suppose there is a satisfying truth assignment for $\phi$.
  Simply select one corresponding true literal per clause in
  $\mathcal{C}$. The construction of clause strings guarantees that a
  2-partition of the rest of each clause string is possible.  Also,
  since a satisfying truth assignment for $\phi$ cannot assign truth
  values to opposite literals, then Lemma \ref{lem:EF-MSP-enforcer-string}
  guarantees that a valid partition of the enforcer strings is
  possible which does not conflict with the clause strings.
  Therefore, there exists an equality-free multiple string
  partition of $\mathcal{W}$.

  Likewise, consider an equality-free multiple string partition of
  $\mathcal{W}$.  Lemma \ref{lem:EF-MSP-clause-string} ensures that at
  least one literal per clause is selected.  Furthermore, Lemma
  \ref{lem:EF-MSP-enforcer-string} guarantees that if there is no collision,
  then no two selected variables in the clauses are negations of each
  other.  Therefore, this must correspond to a satisfying truth
  assignment for $\phi$ (if none of the three literals of a variable
  is selected in the partition of $\mathcal{C}$ then this variable can
  have arbitrary value in the truth assignment without affecting
  satisfiability of $\phi$).\qed
\end{proof}

\subsection{Equality-Free String Partition with Unbounded Alphabet}
\label{sec:equal-free-unbounded-alphabet}

\begin{theorem}\label{thm:EF-SP-NPC}
  Equality-Free String Partition (EF-SP) is \NP-complete for
  $K = 2$.
\end{theorem}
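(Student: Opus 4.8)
The plan is as follows. Membership of EF-SP in \NP\ is immediate (guess a $2$-partition and check in polynomial time that no two selected strings are equal), so the work is the hardness, which I would obtain by a polynomial reduction from the $K=2$ instances of EF-MSP produced in the proof of Theorem~\ref{thm:EF-MSP-NPC}. Recall that such an instance is a set $\mathcal{W}=\mathcal{C}\cup\mathcal{E}\cup\{\boxminus,\boxplus\}$ in which every clause string and every enforcer string has length at least $3$ and contains neither $\boxminus\boxminus$ nor $\boxplus\boxplus$ as a factor, and that by Lemmas~\ref{lem:EF-MSP-clause-string} and \ref{lem:EF-MSP-enforcer-string} this instance admits a valid joint $2$-partition in which the letters $\boxminus$ and $\boxplus$ are not selected if and only if $\phi$ is satisfiable. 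So it suffices to pack ``a list $u_1,\dots,u_s$ of short strings, partitioned jointly, with the single letters $\boxminus,\boxplus$ forbidden from being selected'' inside one string $w$ (here $\sigma^{k}$ abbreviates $\sigma$ repeated $k$ times).

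The device I would use throughout is a \emph{block gadget}: for a letter $\sigma$, the string $\sigma^{5}$. Its point is a one-line counting fact: a run $\sigma^{n}$ has an equality-free $2$-partition if and only if $n\le 3$, since every piece equals $\sigma$ or $\sigma\sigma$ and each may be used at most once. Consequently, if $\sigma^{5}$ occurs in $w$ with a letter $L\neq\sigma$ immediately to its left and $R\neq\sigma$ immediately to its right, and $L,R$ occur nowhere else in $w$, then in \emph{every} valid $2$-partition of $w$ the piece covering the first $\sigma$ is exactly $L\sigma$ and the piece covering the last $\sigma$ is exactly $\sigma R$ (otherwise a run of $\ge 4$ copies of $\sigma$ would be left over), and the middle run $\sigma^{3}$ is split into the two pieces $\sigma$ and $\sigma\sigma$; in particular both the singleton $\sigma$ and the pair $\sigma\sigma$ are selected, so neither can be selected anywhere else in $w$.

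I would then set $w$ to be (writing all Greek letters and all $d_i$ for fresh, pairwise-distinct letters outside $\Sigma$)
\begin{align*}
 w \;=\;\; & d_1\,\boxminus^{5}\,d_2\;\; d_3\,\boxplus^{5}\,d_4\;\; d_5\,\rho_0^{5} \\
 & \iota_1 u_1 \iota_1'\;\rho_1^{5}\; \iota_2 u_2 \iota_2'\;\rho_2^{5}\;\cdots\;\rho_{s-1}^{5}\;\iota_s u_s \iota_s'\;\rho_s^{5}\,d_6 ,
\end{align*}
where $u_1,\dots,u_s$ enumerates $\mathcal{C}\cup\mathcal{E}$. The $\boxminus^{5}$ and $\boxplus^{5}$ block gadgets force the singletons $\boxminus$ and $\boxplus$ to be selected, hence neither can occur as a selected piece inside any $u_j$; and each $\rho_i^{5}$ acts as a separator that is forced to swallow the fresh letter immediately to its left and the one immediately to its right (the dummies $d_5,d_6$ at the two ends, and $\iota_i',\iota_{i+1}$ in between), so that no selected piece can straddle a boundary of a $u_j$ and the pieces covering $u_j$ all lie strictly inside it. The fresh dummies $d_1,\dots,d_6$ and $\iota_j,\iota_j'$ are placed precisely so that every block gadget has a fresh, disposable letter available on each side. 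It follows that a valid $2$-partition of $w$ induces on each $u_j$ an equality-free $2$-partition, that these partitions are jointly equality-free and never select $\boxminus$ or $\boxplus$, and conversely that any family of equality-free $2$-partitions of the $u_j$ that is jointly equality-free and avoids selecting $\boxminus,\boxplus$ extends to a valid $2$-partition of $w$ (cut each block gadget as $L\sigma\mid\sigma\sigma\mid\sigma\mid\sigma R$; the only new pieces are over the fresh letters or lie among $\boxminus,\boxminus\boxminus,\boxplus,\boxplus\boxplus$, none of which collide with anything selected inside a $u_j$). Combining this with Lemmas~\ref{lem:EF-MSP-clause-string}--\ref{lem:EF-MSP-enforcer-string} and Theorem~\ref{thm:EF-MSP-NPC} gives that $w$ has a valid $2$-partition iff $\phi$ is satisfiable; and $|w|=O(m+n)$ with only $O(m+n)$ new letters, so the reduction is polynomial.

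The step I expect to be the real work is proving the ``forcing'' statements rigorously: one has to check that each block gadget (including the two at the $\boxminus^{5}/\boxplus^{5}$ end of $w$, which sit at or near the end of the string) really can, and in any valid partition must, consume one fresh letter on each side; that two consecutive block gadgets never have to compete for a single disposable letter; and, crucially, that there is no ingenious alternative $2$-partition that drags a letter of some $u_j$ across a $\rho_i^{5}$ separator. Each of these is a short argument built from the run-counting fact above, but the boundary cases must be written out explicitly. (If one prefers to reduce from arbitrary EF-MSP$(K{=}2)$ instances rather than the structured ones of Theorem~\ref{thm:EF-MSP-NPC}, the only extra care needed is with length-one strings, for which forcing the pair $\sigma\sigma$ as well as the singleton $\sigma$ could be too strong; the instances of Theorem~\ref{thm:EF-MSP-NPC} avoid this since there $\boxminus$ and $\boxplus$ never occur doubled.)
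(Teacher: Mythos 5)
Your plan is correct and follows the same top-level strategy as the paper: reduce EF-MSP with $K=2$ to EF-SP with $K=2$ by concatenating the strings of $\mathcal{W}$ with separator gadgets whose partition is forced by a run-counting argument (a run $\sigma^{n}$ admits an equality-free $2$-partition only if $n\le 3$). The gadgets differ, though. The paper uses a single global anchor, the length-five prefix $\boxdot\boxdot\boxdot\boxdot\boxminus$, to force both $\boxdot$ and $\boxdot\boxdot$ to be selected once; every separator is then the four-letter string $d_{i}\boxdot\boxdot d_{i}$, which is forced to be cut as $d_{i}\boxdot \mid \boxdot d_{i}$, and the forbidden strings $\boxminus,\boxplus$ simply remain as singleton members of $\mathcal{W}$, so the reduction works for \emph{arbitrary} EF-MSP$(K{=}2)$ instances with overhead $4\ell+1$. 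You instead attach a separate $\sigma^{5}$ block, with its own two fresh flanking letters, at every junction, and you additionally replace the forbidden singletons by $\boxminus^{5}$ and $\boxplus^{5}$ blocks; this forces $\boxminus\boxminus$ and $\boxplus\boxplus$ to be selected as well, which is harmless only because those letters never occur doubled in the clause and enforcer strings --- so, as you correctly note, your reduction is from the structured instances of Theorem~\ref{thm:EF-MSP-NPC} rather than from arbitrary EF-MSP$(K{=}2)$ instances. That restriction is perfectly adequate for \NP-hardness, and your forcing claims do go through (the case analysis you defer is exactly the one the paper performs implicitly on $\boxdot\boxdot\boxdot\boxdot\boxminus$ and $d_i\boxdot\boxdot d_i$); the cost is a somewhat larger alphabet and string, still linear, whereas the paper's single-anchor trick is more economical and more general.
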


\begin{proof}
  To show that EF-SP Problem for $K = 2$ is \NP-complete, we will reduce EF-MSP
  Problem for $K = 2$ to it. Consider an arbitrary instance $I$ of EF-MSP having
  a set of strings $\mathcal{W}=\{w_1,w_2,\ldots,w_\ell \}$ over
  alphabet $\Sigma$, and maximum partition size $K = 2$.  We construct
  an instance $\bar I$ of EF-SP as follows. Let $\widehat{\Sigma} =
  \{\boxdot \} \cup \{d_i\text{, for $1 \leq i < \ell $}\}$, where
  $\widehat{\Sigma} \cap \Sigma = \emptyset$.  Set the alphabet of
  $\bar I$ to $\bar \Sigma = \Sigma \cup \widehat{\Sigma }$ and the
  maximum partition size to $\bar K = 2$.  Note that $|\bar \Sigma| =
  |\Sigma| + \ell $.  Finally, construct the string
  \begin{equation*}
    \bar w = \boxdot \boxdot \boxdot \boxdot \boxminus 
    w_1 
    d_{1}\boxdot \boxdot d_{1}
    w_{2}
    d_{2}\boxdot \boxdot d_{2}
    \dots
    d_{\ell - 1 }\boxdot \boxdot d_{\ell - 1 }
    w_{\ell }\,.
  \end{equation*}
  The prefix of $\bar w$ of length five can be partitioned in two different
  ways each selecting $\boxdot $.
  Consequently, in any
  2-partition of $\bar w$, remaining occurrences of $\boxdot $ must be
  selected together with an adjacent letter different from $\boxdot $,
  i.e., all strings $d_{i}\boxdot $ and $\boxdot d_{i}$ must be
  selected. Therefore, any 2-partition of $\bar w$ contains a
  2-partition of $\mathcal{W}$ and the strings
  $\mathcal{D} = \{\boxdot,\boxdot\boxdot, \boxdot\boxminus,d_{1}\boxdot,\boxdot
  d_{1},\dots,d_{\ell - 1}\boxdot,\boxdot d_{\ell - 1}\}$. On the
  other hand, since all strings in $\mathcal{D}$ contain $\boxdot
  \notin \Sigma $, any 2-partition of $\bar w$ together with $\mathcal{D}$
  forms a 2-partition of $\mathcal{W}$. It follows that there is a
  2-partition of $\mathcal{W}$ if and only if there is a 2-partition
  of $\bar w$. 
  The reduction is in polynomial time and space as $|\bar w|=||\mathcal{W}|| + 4\ell + 1$. 
  \qed
\end{proof}

\subsection{Equality-Free Multiple String Partition with Binary Alphabet}
\label{sec:equal-free-mult-binary-alphabet}

\begin{theorem}\label{thm:EF-MSP-binary}
  The EF-MSP with maximum partition size $K = 2$ can be
  polynomially reduced to the EF-MSP Problem with the alphabet size $L =
  2$. Consequently, the EF-MSP is \NP-complete for binary
  alphabet. In addition, this reduction satisfies the following
  property: for any set $C$ containing $n$ distinct strings of length
  $\delta $, where $n$ is the size of the alphabet of the EF-MSP
  with maximum partition size $K = 2$ and
  $\delta \ge \log_{2} n$, every selected word in a valid
  partition (if it exists) of the EF-MSP with the binary
  alphabet is a prefix of a string in $C^{2}$, and its
  maximum partition size is $\bar K = 2\delta $.
\end{theorem}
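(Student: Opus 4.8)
The plan is to encode the unbounded alphabet $\Sigma = \{a_1, \dots, a_n\}$ of the $K=2$ instance into binary by choosing a set $C = \{b_1, \dots, b_n\}$ of $n$ distinct binary strings, each of length exactly $\delta$, with $\delta \ge \lceil \log_2 n \rceil$ (for instance, the binary representations of $0, \dots, n-1$ padded to length $\delta$; the slack in the bound $\delta \ge \log_2 n$ is useful later, but $\delta = \lceil \log_2 n \rceil$ already works). Each letter $a_i$ is replaced by its codeword $b_i$, so a word $w = a_{i_1} \cdots a_{i_k}$ of the original instance becomes $\mathrm{enc}(w) = b_{i_1} \cdots b_{i_k}$, and the new max partition size is $\bar K = 2\delta$. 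Given a set of strings $\mathcal{W}$ over $\Sigma$, the binary instance $\overline{\mathcal{W}}$ will be $\{\mathrm{enc}(w) : w \in \mathcal{W}\}$ together with a small gadget set of binary strings whose role is to force every cut point of every $\mathrm{enc}(w)$ to fall on a multiple of $\delta$ — i.e., to force the binary partition to \emph{respect codeword boundaries}. Once boundaries are respected, a $\bar K = 2\delta$ partition of $\mathrm{enc}(w)$ restricted to single codewords or pairs of codewords corresponds exactly to a $K=2$ partition of $w$, and equality of blocks is preserved because $C$ is a prefix code of constant length (so $b_i = b_j \iff a_i = a_j$ and $b_i b_j = b_k b_\ell \iff (i,j)=(k,\ell)$).

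The first substantive step is to design the boundary-enforcing gadget. The natural idea, mirroring the $\boxminus$/$\boxplus$ forbidden-letter trick used in the earlier reductions, is to introduce a fresh "marker" behavior using a special codeword pattern: reserve one binary string, say $0^\delta$, as a delimiter codeword not used for any $a_i$ (this is where $\delta \ge \log_2 n$ strictly, or $n < 2^\delta$, gives room), and add short forbidden strings — e.g., all binary strings of length $< \delta$ that are prefixes of codewords, or a carefully chosen constant-size family — to a gadget string analogous to $\bar w$'s prefix $\boxdot\boxdot\boxdot\boxdot$, so that any partition is forced to align to codeword blocks. Concretely I would insert separators $d_i$-style between the $\mathrm{enc}(w_j)$'s, realized in binary, exactly as in Theorem~\ref{thm:EF-SP-NPC}, but the real work is the intra-word alignment: I would argue inductively along $\mathrm{enc}(w)$ that if the partition ever cuts in the interior of a codeword block, then some selected binary block is either too short to be "legal" (forbidden) or duplicates another, using the prefix-code structure and the length constraint $\bar K = 2\delta$ to rule out blocks spanning more than two codewords.

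Given correct alignment, the two directions are routine. If $\mathcal{W}$ has an equality-free $K=2$ partition, replace each selected block (a letter or a letter-pair) by the concatenation of the corresponding codewords, getting blocks of length $\delta$ or $2\delta \le \bar K$; distinctness is preserved since $C \subseteq \{0,1\}^\delta$ is a block code, and the gadget strings can be partitioned without collisions as their blocks all involve the reserved delimiter $0^\delta$, which is absent from the $\mathrm{enc}(w)$ region — the same bookkeeping as in Lemma~\ref{lem:EF-MSP-enforcer-string}. Conversely, a valid binary partition, being forced to respect codeword boundaries, decodes block-by-block to a valid $K=2$ partition of $\mathcal{W}$. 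The "In addition" clause then follows by construction: every selected binary word is either a single codeword $b_i$ or a concatenation $b_i b_j$ of two, hence a prefix of $b_i b_j \in C^2$ (a single codeword $b_i$ is a prefix of $b_i b_i \in C^2$), and $\bar K = 2\delta$ as stated.

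The main obstacle I expect is the boundary-enforcing gadget: unlike the single-letter forbidden strings $\boxminus, \boxplus$, here a "misaligned" cut produces a binary string that might coincidentally equal a legitimate codeword-concatenation, so the forbidden-string family must be chosen so that \emph{every} proper interior cut is detectably illegal, and one must simultaneously check that this family introduces no new collisions among itself or with the decoded blocks. Getting a constant-size (or at least polynomially bounded) such family, and proving the induction that propagates alignment across the whole string, is where the care is needed; everything downstream is a mechanical translation of the $K=2$ correctness argument through the block code.
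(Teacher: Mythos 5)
Your high-level plan (block-encode $\Sigma$ by a length-$\delta$ binary code, add auxiliary strings to force every cut to land on a codeword boundary, then decode) is the paper's plan, but you leave the one genuinely hard step --- the boundary-forcing gadget --- unresolved, and the candidate gadgets you float do not all work. A constant-size family of forbidden strings cannot suffice, and the half of the gadget you do name (proper prefixes of single codewords, of length $<\delta$) is not enough on its own: a misaligned selected piece inside $h(w)$ can have length strictly between $\delta$ and $2\delta$, in which case it is a codeword followed by a fragment of the next one and collides with nothing in your family. Likewise, the reserved delimiter codeword $0^{\delta}$ and the ``$d_i$-style separators between the $\mathrm{enc}(w_j)$'s'' are imported from the single-string reduction (Theorem~\ref{thm:EF-SP-NPC}) and are neither needed nor helpful here --- this is an MSP-to-MSP reduction, so the images $h(w_j)$ simply remain separate strings.

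The paper's resolution is to take $\widehat{\mathcal{W}}$ to be \emph{all} prefixes of codewords of length $1,\dots,\delta-1$ \emph{together with} all prefixes of codeword pairs $cd$ of length $\delta+1,\dots,2\delta-1$ --- a set of size $O(n^{2}\delta)$, polynomial since $\delta=\Theta(\log n)$ --- added as separate strings of the instance. An induction on length shows each member of $\widehat{\mathcal{W}}$ must be selected without internal cuts (any first piece of a split would be a shorter member of $\widehat{\mathcal{W}}$, already selected whole, or a $\delta$-prefix followed by a too-short remainder). Then, in $h(w)$, the \emph{first} selected piece whose length is not $\delta$ or $2\delta$ necessarily starts at a codeword boundary (all earlier pieces had good lengths), so it is a prefix of some $c_{i_{r+1}}c_{i_{r+2}}$ of bad length, i.e., a member of $\widehat{\mathcal{W}}$ --- collision. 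This also dissolves the worry you raise at the end: a misaligned piece cannot ``coincidentally equal a legitimate codeword concatenation,'' because the first such piece is convicted purely by its length. Finally, note that the ``in addition'' clause of the theorem must also cover the selected gadget strings themselves (they are prefixes of $C^{2}$ by construction), not only the codewords and codeword pairs selected inside $h(\mathcal{W})$. Without the second half of the gadget family and the length-based induction, your argument has a genuine gap exactly where you predicted it would.
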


\begin{proof}
  We will show a reduction from the EF-MSP with maximum
  partition size $K = 2$. Consider an arbitrary instance $I$ of EF-MSP
  having a set of strings $\mathcal{W}=\{w_1,w_2,\ldots,w_\ell \}$
  over alphabet $\Sigma = \{a_{1},\dots,a_{n}\} $, and maximum
  partition size $K = 2$. We will construct an instance $\bar I$ of
  EF-MSP over binary alphabet $\bar \Sigma = \{0,1\} $. Let $\delta $
  be any number greater or equal to $\log_{2} n$. Let $C =
  \{c_{1},\dots,c_{n}\} $ be a set of any distinct binary codewords of
  length $\delta $. We set $\bar K$ to $2\delta $. Let $h$ be a
  homomorphism from $\Sigma $ to $C$ such that $h(a_{i}) = c_{i}$, for
  every $i = 1,\dots,n$. The set of strings of $\bar I$ will contain
  $h(\mathcal{W})$, i.e., the original strings in $\mathcal{W}$ mapped
  by $h$ to the binary alphabet $\bar \Sigma $. However, we need to
  guarantee that the partition of strings in $h(\mathcal{W})$ does not
  contain fragments of codewords. For this reason, we also add to $\bar
  {\mathcal{W}}$ the following strings:
  \begin{align*}
    \widehat {\mathcal{W}} = \; &\{\pref_{i} (c);\; c\in C, i = 1,\dots,\delta - 1 \}\; \cup\\
    &\{\pref_{i} (cd);\; c,d\in C, i = \delta + 1 ,\dots,2\delta - 1 \}
  \end{align*}
  We set $\bar{\mathcal{W}} = h(\mathcal{W})\cup
  \widehat{\mathcal{W}}$. 

  First, consider a valid 2-partition $P$ of $\mathcal{W}$. We construct a
  $\bar K$-partition $\bar P$ of $\bar{\mathcal{W}}$ as follows. For
  each string $s$ selected in $P$, we select the corresponding $h(s)$
  in $\bar P$. For each string $t\in \widehat{\mathcal{W}}$, we select
  $t$ entirely. Note that strings selected from $h(\mathcal{W})$ have
  length either $\delta $ or $2\delta $, while strings selected from
  $\widehat{\mathcal{W}}$ have lengths different from $\delta $ and
  $2\delta $. Therefore, there cannot be any collisions between these
  two groups of selected strings. Furthermore, there are no collisions
  in the first group, since there were no collisions in
  $P$. Obviously, there are no collisions in the second group of selected
  strings. It follows that $\bar P$ is a valid $\bar K$-partition of
  $\bar{\mathcal{W}}$.

  Conversely, consider a valid $\bar K$-partition $\bar P$ of
  $\bar{\mathcal{W}}$. First, we will show that all strings in
  $\widehat{\mathcal{W}}$ are selected without non-trivial cut
  points. We will prove that by induction on the length $i$ of
  strings. The base case, $i = 1$, is trivially true, as one-letter
  strings cannot be partitioned into shorter strings. Now, assume the
  claim is true for all strings in $\widehat{\mathcal{W}}$ of lengths
  smaller than $i < 2\delta $ and different from $\delta $. Consider a
  word $u\in \widehat{\mathcal{W}}$ of length $i$. Assume that
  $u$ is partitioned into strings $u_{1},\dots,u_{t}$,
  where $t\ge 2$. Note that the length of $u_{1}$ is smaller than
  $i$. If the length of $u_{1}$ is different from $\delta $, we have a
  collision, as $u_{1}\in \widehat{\mathcal{W}}$ and by the induction
  hypothesis, it was selected without non-trivial cut points. Assume
  that the length of $u_{1}$ is $\delta $. Then $u_{2}$ is a prefix of
  a codeword of length smaller than $\min \{\delta ,i\}$, and we have a
  collision again as in the previous case. It follows that $t = 1$,
  i.e.,  $u$ is selected without non-trivial cut points in $\bar P$. Second,
  we show that all strings selected in the partition of strings in
  $h(\mathcal{W})$ have lengths either $\delta $ or $2\delta $. Assume
  that this is not the case for some string $s\in
  h(\mathcal{W})$. Note that $s = c_{i_{1}}c_{i_{2}}\dots c_{i_{p}}$,
  for some indices $i_{1},\dots,i_{p}$. Let $s = s_{1}\dots s_{q}$ be
  the partition of $s$ and let $j$ be the smallest $j$ such that the
  length of $s_{j}$ is not $\delta $ or $2\delta $. Then $s_{1}\dots
  s_{j - 1} = c_{i_{1}}\dots c_{i_{r}}$, for some $r < p$. Consequently, $s_{j}$ is a
  prefix of $c_{i_{r + 1}}c_{i_{r + 2}}$, i.e., $s_{j}\in
  \widehat{\mathcal{W}}$, and we have a collision, since $s_{j}$ was
  already selected in partition of $\widehat{\mathcal{W}}$. Hence,
  each string in $h(\mathcal{W})$ is partitioned into strings of
  lengths either $\delta $ or $2\delta $, which can be easily mapped
  to a valid 2-partition of $\mathcal{W}$.

  It follows that there is a 2-partition of $\mathcal{W}$ if and only
  if there is $\bar K$-partition of $\bar{\mathcal{W}}$ and that the
  reduction satisfies the property described in the claim. 

  Finally, let
  us check that the reduction is polynomial. The size of
  $h(\mathcal{W})$ is $|\mathcal{W}|$ and the length of
  $h(\mathcal{W})$ is $\delta ||\mathcal{W}||$. The size of
  $\widehat{\mathcal{W}}$---the set of all unique prefixes for codewords of
  length less than $\delta$, and all unique prefixes of pairs of
  adjacent codewords with length greater than $\delta$ and less than
  $2\delta$---is at most $(n^{2} + n)(\delta - 1)$  as there are $n$
  codewords in total.  Therefore, the length of
  $\widehat{\mathcal{W}}$ is at most $n\cdot (1 + \dots + \delta - 1)
  + n^{2}\cdot (\delta + 1 + \delta + 2 + \dots + 2 \delta - 1 ) = (3n^{2} +
  n)(\delta - 1 )\delta/2 $. Since $\delta $ can be chosen to be
  $\Theta (\log n)$, the size of $\bar{\mathcal{W}}$ is polynomial
  in the size of $\mathcal{W}$ and the size of the original alphabet
  $\Sigma $.
  \qed
\end{proof}

\subsection{Equality-Free String Partition with Binary Alphabet}
\label{sec:equal-free-binary-alphabet}

\begin{theorem}\label{thm:EF-SP-binary}
  Equality-Free String Partition (EF-SP) Problem is
  \NP-complete for binary alphabet ($L = 2$).
\end{theorem}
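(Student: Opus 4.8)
The plan is to compose the two reductions already established: Theorem~\ref{thm:EF-SP-NPC} shows EF-SP with $K=2$ is \NP-complete over an unbounded alphabet, and Theorem~\ref{thm:EF-MSP-binary} reduces EF-MSP with $K=2$ to EF-MSP over a binary alphabet while preserving a useful structural property. So the natural route is: start from an instance of EF-MSP with $K=2$ (which is \NP-hard by Theorem~\ref{thm:EF-MSP-NPC}), apply the binary reduction of Theorem~\ref{thm:EF-MSP-binary} to obtain a binary EF-MSP instance $\bar{\mathcal{W}}$ with bound $\bar K = 2\delta$, and then apply (a binary-alphabet analogue of) the construction in Theorem~\ref{thm:EF-SP-NPC} to collapse the set $\bar{\mathcal{W}}$ into a single binary string. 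Membership in \NP\ is immediate as before, so the whole content is the hardness direction.

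First I would recall that Theorem~\ref{thm:EF-MSP-binary} gives more than just a binary instance: choosing the codeword set $C$ to consist of $n$ distinct binary strings of common length $\delta = \Theta(\log n)$, every selected word in any valid partition of $\bar{\mathcal{W}}$ is a prefix of a string in $C^2$, hence has length strictly less than $2\delta = \bar K$ unless it is exactly some $c_ic_j$. I would exploit this: the "glue" gadget that stitches the strings $w_1,\dots,w_\ell$ together must use separator blocks that (a) are binary, (b) force themselves to be cut at the block boundaries in any valid $\bar K$-partition, and (c) only ever produce selected strings that cannot collide with anything selected inside the $h(w_i)$ parts. The clean way to guarantee (c) is a length argument: pad each separator so that the selected pieces it forces all have lengths outside the set $\{\delta, 2\delta\}$ of lengths that selected pieces of $h(\mathcal{W})$ can have — mirroring exactly the length-separation trick used in the proof of Theorem~\ref{thm:EF-MSP-binary} itself.

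Concretely, I would introduce a fresh-looking binary pattern to play the role that $\boxdot$ played in Theorem~\ref{thm:EF-SP-NPC} — say a run $0^{2\delta}$ (or a length chosen to exceed $\bar K$ so it cannot be selected whole, forcing it to be chopped), together with short distinct "index" blocks $d_i$ realized as suitable binary strings of a length not in $\{\delta,2\delta\}$ — and form
\[
 \bar w \;=\; (\text{prefix gadget})\; h(w_1)\; g_1\; h(w_2)\; g_2\;\cdots\; g_{\ell-1}\; h(w_\ell),
\]
where each $g_i$ is a separator gadget whose only valid $\bar K$-partitions cut it cleanly into forced pieces. The prefix gadget is designed (as in Theorem~\ref{thm:EF-SP-NPC}, where the length-five prefix $\boxdot\boxdot\boxdot\boxdot\boxminus$ could be cut in two ways each selecting $\boxdot$) so that the ambiguous short pieces it contributes are "used up" there and cannot recur. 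Then the argument is symmetric to Theorem~\ref{thm:EF-SP-NPC}: any $\bar K$-partition of $\bar w$ restricts to a $\bar K$-partition of $\bar{\mathcal{W}}$ together with a fixed set $\mathcal{D}$ of gadget pieces that pairwise and mutually do not collide with the rest (by the length separation plus the structural property from Theorem~\ref{thm:EF-MSP-binary}), and conversely. Hence $\bar w$ has a valid partition iff $\bar{\mathcal{W}}$ does iff the original EF-MSP($K=2$) instance does.

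The main obstacle is ensuring absence of collisions between gadget-induced selected strings and the codeword-derived selected strings now that everything lives over $\{0,1\}$ — we no longer have the luxury of a dedicated separator symbol $\boxdot\notin\Sigma$ as in Theorem~\ref{thm:EF-SP-NPC}. Overcoming it is precisely why I lean on the extra conclusion of Theorem~\ref{thm:EF-MSP-binary}: since every selected piece coming from $h(\mathcal{W})$ is a prefix of some $c_ic_j$ and of length $\delta$ or $2\delta$, it suffices to make all gadget pieces have lengths avoiding $\{\delta,2\delta\}$ (and to make the few ambiguous pieces pairwise distinct), which is easy to arrange by padding. The remaining checks — that each gadget is forced to cut at its intended boundaries (a short case analysis on where a cut of $\bar w$ could fall, using that $0^{2\delta}$ exceeds $\bar K$ only if we size it so, or more simply using a block longer than $\bar K$), that the gadget pieces are pairwise distinct, and that the total length $|\bar w|$ is polynomial — are routine. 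I would close by noting $|\bar w| = ||\bar{\mathcal{W}}|| + O(\ell\cdot\mathrm{poly}(\delta))$ and $\delta = \Theta(\log n)$, so the composed reduction is polynomial, completing the proof that EF-SP is \NP-complete for $L=2$. \qed
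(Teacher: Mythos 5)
Your high-level architecture matches the paper's exactly: reduce the binary EF-MSP instance produced by Theorem~\ref{thm:EF-MSP-binary} to a single binary string by interleaving the $w_j$ with separator gadgets, and use the structural property of that theorem to rule out collisions between gadget pieces and the pieces selected inside the $w_j$. However, the collision-avoidance mechanism you propose does not work, and the part you dismiss as routine is in fact the technical heart of the proof. Your criterion is to give all gadget pieces lengths outside $\{\delta,2\delta\}$, on the grounds that pieces coming from $h(\mathcal{W})$ have length $\delta$ or $2\delta$. But the instance you must linearize is $\bar{\mathcal{W}} = h(\mathcal{W})\cup\widehat{\mathcal{W}}$, and the strings of $\widehat{\mathcal{W}}$ are selected whole with every length in $\{1,\dots,\delta-1\}\cup\{\delta+1,\dots,2\delta-1\}$; together with $h(\mathcal{W})$ this occupies every length from $1$ to $2\delta$, so there is no free length for your gadget pieces to hide in. What actually saves the paper is a different invariant: every selected word of $\bar{\mathcal{W}}$ is a prefix of a string in $C^2$ where each codeword starts with $0$, hence begins with $0$ and contains no run of $\delta$ consecutive ones; the delimiter pieces are engineered to either consist entirely of ones or to begin with a run of at least $\delta$ ones, so equality is impossible regardless of length. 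You gesture at "fresh-looking patterns" and runs of $0^{2\delta}$, but a run of zeros is itself a candidate prefix of $C^2$ unless $C$ is further constrained, and more importantly a single such block can be selected only once, so it cannot serve as a reusable separator.

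That points to the second, larger gap: you need $\ell-1$ separators whose forced pieces are pairwise distinct across all separators (equality-freeness is global), and each separator must provably admit no valid partition other than the intended one. Over a two-letter alphabet this cannot be arranged by padding alone; the paper constructs $d_j=\chain(\bin(j))$ from padded binary encodings of the index $j$, bookended by mirror images, and proves forced super-selection by an induction on $j$ in which every misplaced cut point inside $d_j$ is refuted by exhibiting a piece already selected in some earlier $d_{j'}$ (this is also where the quantitative hypotheses $\delta\ge\max(9,3\log_2(n+1))$ and $\ell\le(n^2+n)(\delta-1)$ are consumed, to guarantee the padded pieces retain a run of at least $\delta$ ones). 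None of this is supplied or replaceable by the sketch you give, so as written the proposal does not yet constitute a proof.
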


\begin{proof}
  We will show a reduction from the EF-MSP Problem with the binary
  alphabet ($L = 2$) satisfying properties listed in Theorem~\ref{thm:EF-MSP-binary}. 
  Consider an instance $I$ of EF-MSP
  having a set of strings $\mathcal{W}=\{w_1,w_2,\ldots,w_\ell \}$
  over alphabet $\Sigma = \{0,1\} $, and maximum partition size $K =
  2\delta $ such that all selected words in any valid $K$-partition
  are prefixes of the elements of a set $C^{2}$, where $C$ contains $n$ distinct
  strings of length $\delta $ each starting with $0$, $\ell \le (n^{2}
  + n)(\delta - 1)$, and $\delta \ge \max (9,3 \log_{2} (n + 1))$. By
  Theorem~\ref{thm:EF-MSP-binary}, this instance can be polynomially reduced to an
  instance of the EF-MSP with maximum partition size $K = 2$. We will
  construct an instance $\bar I$ of EF-SP over binary alphabet $\bar
  \Sigma = \{0,1\} $ with the same partition size $K = 2\delta $. We
  will show that the size of $\bar I$ is polynomial in the size of
  $I$, and hence, it will follow by Theorems~\ref{thm:EF-MSP-NPC}
  and~\ref{thm:EF-MSP-binary}, that the EF-SP Problem is \NP-complete.

  To construct the string $\bar w$ we will interleave strings
  $w_{1},\dots,w_{\ell }$ with delimiters $d_{1},\dots,d_{\ell - 1}$
  defined in a moment as follows:
  \begin{equation*}
    \bar w = w_{1}d_{1}w_{2}d_{2}w_{3}\dots d_{\ell - 1 }w_{\ell }\,.
  \end{equation*}
  To define the delimiter strings, we will need the following
  functions. Let $\bin :\: \mathbb{N}\to \{0,1\}^{*} $ be a function
  mapping a positive integer to its standard binary representation without the
  leading one. For example $\bin (1) = \varepsilon $, $\bin (2) = 0$
  and $\bin (10) = 010$. Next, the functions $\pad_{i} :\:
  \{0,1\}^{*}\to \{0,1\}^{*} $ will pad a given string with $i - 1$
  ones and one zero on the left, i.e., $\pad_{i} (s) = (1)^{i -
    1}0s$. We will refer to strings returned by this functions as
  \emph{padded strings}. The function $\chain :\: \{0,1\}^{*}\to
  \{0,1\}^{*} $ maps a string $s$ with $i$ trailing zeros, i.e., $s =
  s'(0)^{i}$, where $s'$ is either the empty string or a string ending
  with $1$, to the following concatenation of padded strings and
  mirror images (reversals) of padded strings:
  \begin{gather*}
    \chain (s) = \pad_{K - |s|}^{\mirror } (s)
    \pad_{K - |s|} (s')\pad_{K - |s|}^{\mirror } (s')
    \pad_{K - |s|} (s'0)\pad_{K - |s|}^{\mirror } (s'0)\\
    \dots
    \pad_{K - |s|} (s'(0)^{i - 1})\pad_{K - |s|}^{\mirror } (s'(0)^{i - 1})
    \pad_{K - |s|} (s) \,.
  \end{gather*}
  Finally, we set the delimiter $d_{j}$ to $\chain (\bin (j))$, for
  every $j > 1$. For $j = 1$, we set $d_{1}$ to $0(1)^{K - 1}(1)^{K(K
    - 1)/2}(1)^{K - 1}0$. To illustrate this definition, let us list the first five delimiter
  strings:
  \begin{align*}
    d_{1}& = 0(1)^{K - 1}(1)^{K(K - 1)/2}(1)^{K - 1}0 \\
    d_{2}& = \chain (0) = 00(1)^{K - 2}(1)^{K - 2}00(1)^{K - 2}(1)^{K - 2}00 \\
    d_{3}& = \chain (1) = 10(1)^{K - 2}(1)^{K - 2}01 \\
    d_{4}& = \chain (00) = 000(1)^{K - 3}(1)^{K - 3}00(1)^{K - 3}(1)^{K - 3}0000(1)^{K - 3}(1)^{K - 3}000 \\
    d_{5}& = \chain (01) = 100(1)^{K - 3}(1)^{K - 3}001
  \end{align*}

  Now, consider a valid $K$-partition $P$ of $\mathcal{W}$. We
  construct a $K$-partition $\bar P$ of $\bar{w}$ as follows. Each
  substring $w_{j}$ is partitioned in the same way as in $P$. Each
  delimiter $d_{j}$, where $j > 1$, is partitioned to its padded
  strings and mirror images of padded strings. In addition, the
  delimiter $d_{1}$ is partitioned into one mirror image of a padded
  string, strings $(1),(1)^{2},\dots,(1)^{K}$ in any order, and one
  padded string. Note that all strings selected in $w_{j}$'s are
  prefixes of $C^{2}$, and since each $c\in C$ has length $\delta =
  K/2$ and starts with $0$, all these selected strings start with $0$
  and the longest run of $1$ they contain has length at most $\delta -
  1$. Hence, they cannot collide with strings
  $(1),(1)^{2},\dots,(1)^{K}$ and with padded strings which all start with
  $1$. To show they do not collide with mirror images of padded
  strings, we will show that each padded string (or its mirror image)
  contains a run of at least $\delta $ ones. By the definition of
  functions $\pad_{i} $, each padded string or its mirror image
  selected in a delimiter $d_{j}$ contains a substring $(1)^{K - |\bin
    (d_{j})| - 1}$, i.e., a run of $K-(\lceil \log_{2} j \rceil - 1)
  -1 = K - \lceil \log_{2} j\rceil $
  ones. Since $j < \ell \le (n^{2} + n)(\delta - 1 )$, it is enough to
  show that $\log_{2} [(n^{2} + n)(\delta -1)]\le \delta $. This
  follows from the fact that $\delta \ge 2\log_{2} (n + 1) + \delta/3
  $ and $\delta/3 \ge \log_{2} (\delta - 1 )$ for $\delta \ge
  9$. Finally, we need to show that all selected padded strings and
  their mirror images are distinct. Note that each selected padded
  string starts with at least $\delta $ ones and contains at least one
  zero, hence, it cannot be equal to a selected mirror image of padded
  string. Hence, it is enough to show that two delimiter $d_{j}$ and
  $d_{j'}$, where $j,j' < \ell $ do not contain the same padded string
  or its mirror image. Without loss of generality, let us only
  consider the padded strings. If $\bin (j)$ and $\bin (j' )$ have
  different lengths then the padded strings of $d_{j}$ and $d_{j'}$
  start with $(1)^{K - |\bin (j)| - 1}0$ and $(1)^{K - |\bin (j')| -
    1}0$, hence they cannot be equal. Therefore, assume they have the
  same length. Let $s$ (respectively, $s'$) be the prefix of $\bin
  (j)$ (respectively, $\bin (j')$) without the trailing
  zeros. Clearly, $s\ne s'$. Now, the padded strings from $d_{j}$ and
  $d_{j'}$ are same only if $s0^{i} = s'0^{i'}$ for some $i$ and
  $i'$. However, since both $s$ and $s'$ end with one or one of them
  is the empty string, we must have $i
  = i'$, and hence also $s = s'$, a contradiction. Since the
  $K$-partition $P$ of $\mathcal{W}$ was valid, it follows that the
  $K$-partition of $\bar w$ is also valid.

  Conversely, consider a valid $K$-partition $\bar P$ of
  $\bar{w}$. It is enough to show that $\bar P$ super-selects each
  delimiter in $\bar w$. We will show by induction on $j$ that
  delimiters $d_{1},\dots,d_{j}$ are super-selected and furthermore,
  that each of these delimiters is partitioned into its padded strings
  and mirror images of padded strings. For the base case $j = 1$, it
  is easy to see that $\bar P$ must select string $0(1)^{K - 1}$, then
  strings $(1)^{1},\dots,(1)^{K}$ in any order and string $(1)^{K -
    1}0$, and thus $d_{1}$ is super-selected in $\bar P$ and its
  padded string and its mirror image of a padded string are selected. Next, assume
  that the induction hypothesis is satisfied for delimiters
  $d_{1},\dots,d_{j - 1}$. Consider delimiter string
  $d_{j}$. First, we will show that $d_{j}$ contains cut points in
  $\bar P$ shown by $\cdot $'s below:
  \begin{gather*}
    \pad_{K - |s|}^{\mirror } (s)\cdot 
    \pad_{K - |s|} (s')\pad_{K - |s|}^{\mirror } (s')\cdot 
    \pad_{K - |s|} (s'0)\pad_{K - |s|}^{\mirror } (s'0)\cdot \\
    \ldots
    \cdot \pad_{K - |s|} (s'(0)^{i - 1})\pad_{K - |s|}^{\mirror } (s'(0)^{i - 1})\cdot 
    \pad_{K - |s|} (s) \,,
  \end{gather*}
  where $s = \bin (j)$ and $s'$ is the prefix of $s$ without the
  trailing zeros and $i$ is the number of trailing zeros. Note that
  each letter ``$\cdot $'' is preceded and followed by $K - |\bin (j)| - 1$
  ones. Since $|\bin (j)|\le \delta - 1$, we have a run of at least $K
  = 2\delta $ ones, thus this run must contain a cut point. By
  contradiction assume that there is a cut point before the letter
  ``$\cdot $'' in this run of ones. Then the selected string starting at
  this cut point is in the form $(1)^{K - i - 1}0u$, where $i < |\bin
  (j)|$ and $|u|\le i$. Note that $u$ might be the empty string and
  the selected string must contain the zero preceding $u$ since all
  strings consisting only of ones are already selected in $d_{1}$. Let
  $v = u(0)^{i - |u|}$. Since $|v| = i < |\bin (j)|$, we have $v =
  \bin (j')$, where $j' < j$. The delimiter string $d_{j'}$ contains
  $\pad_{K - i} (u) = (1)^{K - i - 1}0u$, which by the induction
  hypothesis has been already selected. Analogously, we arrive into a
  contradiction, if there is a cut point after ``$\cdot $'' in the run of
  ones surrounding the letter ``$\cdot $''. It follows that there is a cut
  point at each letter ``$\cdot $'' above in $\bar P$. 

  Next, we show that each of super-selected strings of $d_{j}$:
  \begin{equation*}
    \pad_{K - |s|} (s')\pad_{K - |s|}^{\mirror } (s'), \dots, \pad_{K -
      |s|} (s'(0)^{i - 1})\pad_{K - |s|}^{\mirror } (s'(0)^{i - 1})\,,
  \end{equation*}
  has a cut point exactly in the middle. The length of each padded
  string or of its mirror image is at least $K - |s|$ and since $|\bin
  (j)|\le \delta - 1$, this length is at least $\delta + 1$. Hence,
  there has to be at least one cut point in each of the above
  super-selected strings in $\bar P$. We will first prove the claim
  for the first super-selected string $\pad_{K - |s|} (s')\pad_{K -
    |s|}^{\mirror } (s')$. By contradiction, and without loss of
  generality, assume that there is a cut point inside $\pad_{K - |s|}
  (s') = (1)^{K - |s| - 1}0s'$. Thus a string in the form $(1)^{K -
    |s| - 1}0u$, where $u$ is a proper prefix of $s'$, is selected in
  $\bar P$. Consider string $v = u(0)^{|s| - |u|}$. Obviously, $|v| =
  |s|$ and $v$ is
  lexicographically smaller than $s$, and thus $\bin (j') = v$ for
  some $j' < j$. By the induction hypothesis, string $\pad_{K - |v|}
  (u) = (1)^{K - |s| - 1}0u$ has been already selected in $d_{j'}$, a
  contradiction. It follows by straightforward induction on $i$ that
  the remaining super-selected strings are partitioned exactly in the
  middle. Finally, observe that if there is a cut point inside
  $\pad_{K - |s|} (s)$ then either one the padded strings of $d_{j'}$
  or one of the padded strings of $d_{j}$ described
  above is selected again .  Similarly, there cannot be any cut point inside $\pad_{K -
    |s|}^{\mirror } (s)$. Since the length of these two strings is
  exactly $K$, there has to be a cut point just after $\pad_{K - |s|}
  (s)$ and just before $\pad_{K - |s|}^{\mirror } (s)$, i.e., $d_{j}$
  is super-selected. This completes the induction proof, and we have
  that all delimiter strings in $ \bar w$ are super-selected by $\bar
  P$, and thus $\bar P$ gives us also a partition of the set $\mathcal{W}$. 

  It follows that there is a $K$-partition of $\mathcal{W}$ if and
  only if there is $K$-partition of $\bar{w}$. Finally, let us check
  that the reduction is polynomial. The length of each padded string
  or its mirror image is at most $K$. The length of $d_{1}$ is $K(K +
  3)/2 < K^{2}$. String $\bin (j)$ for $1 < j < \ell $ has length at most
  $\delta - 1$, and hence each $d_{j}$ contains at most $2\delta = K$
  padded strings and mirror images of padded strings. Hence,
  $|d_{j}|\le K^{2}$. Thus, the total length of $\bar w$ is at most
  $||\mathcal{W}|| + \ell K^{2}$.
  \qed
\end{proof}

\section{Factor-, Prefix- and Suffix-Free String Partition Problems}

Here, we summarize the results for these partition problems.  Their
proof can be found in the appendix of this paper.

\begin{theorem}\label{thm:FF}
  Both Factor-Free Multiple String Partition (FF-MSP) and Factor-Free
  String Partition (FF-SP) are \NP-complete in the following two
  cases: (a) when the maximum partition size is $3$; and (b) when the
  alphabet is binary.
\end{theorem}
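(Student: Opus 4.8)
The plan is to follow the right-hand branch of Figure~\ref{fig:reduction-relation}, paralleling the equality-free development of Section~\ref{sec:equal-free-part}. First I would reduce 3SAT(3) to FF-MSP with maximum partition size $K=3$; from that instance, part~(a) for FF-SP follows by interleaving the strings of the set into a single string separated by delimiter gadgets, as in Theorem~\ref{thm:EF-SP-NPC}; and part~(b) follows by reducing FF-MSP$(K=3)$ to FF-MSP over a binary alphabet (a codeword-homomorphism argument in the spirit of Theorem~\ref{thm:EF-MSP-binary}) and then to FF-SP over a binary alphabet (in the spirit of Theorem~\ref{thm:EF-SP-binary}). Membership in \NP{} is immediate everywhere: guess a $K$-partition and verify in polynomial time that no selected string is a factor of another.

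The heart of the matter is $\text{3SAT(3)}\to\text{FF-MSP}(K=3)$, where the factor-free condition breaks two ingredients of the equality-free construction: a selected string now collides with every longer selected string in which it occurs as a factor, so (1) selecting a bare literal letter is fatal as soon as that letter recurs inside a longer selected string --- notably inside an enforcer fragment --- and (2) the singleton forbidden set $\mathcal{F}=\{\boxminus,\boxplus\}$ is useless, since forcing $\boxminus$ to be selected would forbid selecting \emph{any} clause-string fragment covering a $\boxminus$. I would repair both by insisting that every \emph{choice-encoding} selected string have length exactly $3$ --- distinct length-$3$ strings are automatically mutually factor-free --- which is also why the construction needs $K=3$ and not $K=2$: at $K=2$ there is not enough room to both encode choices and keep single letters from being factors of selected pairs. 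Concretely, give each literal occurrence $c_i^j$ a private length-$3$ barcode; rebuild the clause gadget so its only valid partitions select exactly one barcode and cut the remainder into globally unique length-$2$ and length-$3$ delimiter blocks (the analogue of Figure~\ref{fig:EF-MSP-clause-string}); and rebuild the enforcer gadget so every valid partition leaves either the negated barcode or both positive barcodes ``used up'' (the analogue of Figure~\ref{fig:EF-MSP-enforcer-string}), but referencing \emph{fresh} length-$3$ copies of the barcodes rather than re-embedding barcode letters verbatim, so a barcode selected in a clause string never occurs as a factor of an enforcer fragment. Short gadget strings in $\mathcal{W}$ take over the forbidden role: each has a unique valid partition that uses up the length-$1$ and length-$2$ building blocks, and one checks this is consistent precisely when the length-$3$ discipline is obeyed. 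The analogues of Lemmas~\ref{lem:EF-MSP-clause-string} and~\ref{lem:EF-MSP-enforcer-string} then give satisfiability $\iff$ the existence of a valid $3$-partition; the FF-SP$(K=3)$ reduction then mimics Theorem~\ref{thm:EF-SP-NPC}, using fresh symbols and a short prefix gadget to force each separator to be super-selected into globally unique length-$2$/length-$3$ pieces, the length-$1$ trick of that proof being replaced by the same length-$3$ discipline.

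The binary case~(b) needs genuinely new gadgets, not a transcription of the equality-free proofs. Those proofs rely on the principle ``two selected strings of different lengths cannot collide'', which fails outright for factors: both the codeword-prefix forcing strings of Theorem~\ref{thm:EF-MSP-binary} and the $\chain$-based delimiters of Theorem~\ref{thm:EF-SP-binary} produce fragments that are prefixes --- hence factors --- of other selected fragments (a codeword prefix is a factor of any codeword-aligned image beginning with that codeword; and $\pad_i(s)$ is a prefix of $\pad_i(s0)$, both of which $\chain$ selects inside one delimiter). To repair this I would use a binary \emph{comma-free} code $C=\{c_1,\dots,c_n\}$ --- for instance $c_i = 0\,1^{i}\,0^{\delta-i-1}$ with $\delta\ge n+2$, whose distinct runs of $1$'s both synchronize parsing and guarantee that no concatenation of codewords contains a run of $\delta$ ones --- using comma-freeness to lift equality-freeness of codeword-aligned images to factor-freeness, and using a run of $\delta$ ones as a synchronization marker carried by the forcing strings and by a redesigned delimiter, so that a misaligned cut still produces a collision while no forcing string or delimiter fragment is a factor of an image or of another forcing string. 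One then checks polynomiality and that this yields a valid $\bar K$-partition iff the original instance admits one.

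The step I expect to be genuinely hard is the one flagged above: reconciling factor-freeness with forcing. In the $K=3$ reduction this is the design of the enforcer gadget --- it must at once enforce consistency, admit exactly the intended valid partitions, and keep every length-$3$ barcode and delimiter block from occurring as a factor of a fragment of any other gadget --- and it is precisely this constraint that pushes the partition size from $2$ to $3$. In the binary case it is the redesign of the codeword-prefix forcing strings and the delimiters so that nothing is a factor of anything else it should not be; the remainder, including all the counting for polynomiality, is routine adaptation of the equality-free proofs.
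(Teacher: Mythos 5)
Your architecture matches the paper's (the right-hand branch of Figure~\ref{fig:reduction-relation}), and you correctly identify the two obstacles that kill a naive transcription of the equality-free proofs: repeated single letters cannot be selected, and length-discrimination no longer prevents collisions. But the central constructions are missing, and the one concrete design principle you commit to for the enforcer gadget is self-defeating. You propose that the enforcer reference ``fresh length-$3$ copies of the barcodes \ldots so a barcode selected in a clause string never occurs as a factor of an enforcer fragment.'' An enforcer can only do its job if selecting a literal's representation in a clause string \emph{can} collide with something forced in the enforcer string; if the clause barcode never occurs in, nor as a factor of, any enforcer fragment, the enforcer constrains nothing. The paper's mechanism is the opposite of yours: a literal is encoded as the length-$2$ doubled letter $\lit_i^j\lit_i^j$, the same doubled pair reappears \emph{verbatim} inside the enforcer strings, and a forbidden factor $\hat{x}_v^3 0 \hat{x}_v^3$ (which must be selected whole because its letters repeat) pins down the enforcer's cut points so that every valid partition covers either the negated pair or both positive pairs inside a longer, length-$3$ selected string --- a genuine factor collision between strings of lengths $2$ and $3$. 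Your uniform ``length-$3$ discipline'' forecloses exactly this: at $K=3$ a length-$3$ barcode cannot be a proper factor of any selected string, so the only remaining collision is verbatim equality, which contradicts your ``fresh copies'' stipulation; and your length-$2$ delimiter blocks are not automatically safe either, since a length-$2$ string collides with any length-$3$ selected string containing it as a factor.

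For the binary case the gap is of a different kind: you propose a \emph{generic} reduction from arbitrary FF-MSP$(K=3)$ instances to binary FF-MSP, which is substantially harder than what the paper does and which you do not carry out. The paper does not reduce generically; it re-instantiates the specific 3SAT(3) clause and enforcer gadgets with codewords $0(1)^{i}0(1)^{t-3-i}0$ and the forbidden set $\{000,010\}$, which forces exactly one cut point around each inter-codeword separator letter, and its factor-freeness verification enumerates the specific shapes of selected strings ($0aa$, $aa0$, $aa$, $1aa$, $ab1$, \ldots) that only those gadgets can produce --- for instance it uses that a string of the form $0a$ can arise only from the second enforcer string. A generic reduction cannot lean on any of this. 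Your comma-free-code idea is plausible in spirit (the paper's codewords likewise use distinguished runs of ones for synchronization), and the single-string steps (the $\alpha(\gamma)^{3K-2}\alpha$ connector for unbounded alphabets and the $(1)^{i}0(1)^{K-1-i}$ delimiter chain for binary) are routine once the MSP gadgets exist; but until the clause/enforcer gadgets, forcing strings, and the exhaustive no-new-factor check are actually exhibited, neither part (a) nor part (b) is proved.
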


\begin{theorem}\label{thm:PF}
  Both Prefix(Suffix)-Free Multiple String Partition (PF-MSP) and Prefix(Suffix)-Free
  String Partition (PF-SP) are \NP-complete in the following two
  cases: (a) when the maximum partition size is $2$; and (b) when the
  alphabet is binary.
\end{theorem}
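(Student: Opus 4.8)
The plan is to follow the same chain of reductions used for the equality-free case, replacing each gadget by a prefix-free analogue: reduce 3SAT(3) to PF-MSP with $K=2$, then reduce that to PF-SP with $K=2$ by interleaving the strings of the instance, and independently reduce PF-MSP with $K=2$ to PF-MSP over a binary alphabet and then to PF-SP over a binary alphabet. Since a partition $p_1,\dots,p_l$ of $w$ is prefix-free iff the reversed partition $p_l^{\mirror},\dots,p_1^{\mirror}$ of $w^{\mirror}$ is suffix-free, it suffices to treat the prefix-free problems; reversing every string of an instance turns a PF instance into an SF one and back, so the suffix-free statements follow at no extra cost. Membership in \NP\ is immediate: guess the cut points and verify in polynomial time that no selected string is a prefix of another.

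For the core reduction 3SAT(3) $\to$ PF-MSP($K=2$) I would keep the architecture of Section~\ref{sec:equal-free-mult-unbounded-alphabet} (clause strings forcing one ``selected'' literal per clause, enforcer strings ruling out inconsistent choices, forbidden strings), but with two changes forced by prefix-freeness. First, forbidden \emph{single letters} are no longer usable: if a one-letter string $a$ is forced to be selected, then in a prefix-free partition no length-two string beginning with $a$ may be selected, which breaks the gadgets; instead I would use clause/variable-specific two-letter forbidden strings $aa$ (the only partition of $aa$ is $aa$ itself, since splitting it repeats $a$), which still forbid $a$ as a singleton but leave every $a\sigma$ with $\sigma\neq a$ available. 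Second, and this is the real work, the clause and enforcer gadgets must be re-engineered so that whenever a literal letter is selected as a singleton --- which, as in the equality-free proof, encodes ``this literal is chosen'' and is also forced inside enforcers in order to block the opposite literal --- that letter never also occurs as the \emph{first} letter of a selected length-two string anywhere in the instance. I would arrange each literal occurrence so that when it is not the selected one it is absorbed into a pair headed by a delimiter (not by the literal), and I would pin down the correspondence between truth assignments and gadget partitions so that a literal is selected as a singleton in its enforcer exactly when it is false; then the only singletons are chosen literals and false literals, and one checks these never prefix the delimiter-headed pairs actually selected. Making these gadgets tile correctly for both two- and three-literal clauses while keeping this invariant --- in particular ensuring that a clause whose ``first'' literal happens to be false still admits a valid partition --- is the main obstacle, and will likely require per-position delimiters and/or mirrored copies of the clause string rather than the single shared $\boxminus$ of the equality-free construction.

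The reduction PF-MSP($K=2$) $\to$ PF-SP($K=2$) mirrors Theorem~\ref{thm:EF-SP-NPC}: interleave $w_1,\dots,w_\ell$ with per-gap blocks $d_i\boxdot\boxdot d_i$ and a short forcing prefix, so that in any valid $2$-partition the blocks are forced to be selected as $d_i\boxdot,\boxdot d_i,\dots$ and cause no collisions (they all contain $\boxdot\notin\Sigma$). The only change is to make the forcing prefix and the blocks prefix-safe, using two-letter forbidden pieces $\boxdot\boxdot$ instead of a forbidden letter $\boxdot$, exactly as in the core reduction.

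For the binary case I would reduce PF-MSP($K=2$) to PF-MSP($L=2$) by encoding $\Sigma$ with a binary \emph{block} code $C$ of length $\delta=\Theta(\log n)$ and setting $\bar K=2\delta$, as in Theorem~\ref{thm:EF-MSP-binary}. Because all codewords have equal length, the induced homomorphism $h$ preserves the prefix relation exactly ($h(s)$ is a prefix of $h(t)$ iff $s$ is a prefix of $t$), so prefix-freeness transfers directly. The subtlety is that the auxiliary strings used to forbid mis-aligned cuts can no longer be proper prefixes of codewords, since such a string would itself be a prefix of a selected codeword; I would therefore take $C$ to be additionally \emph{non-overlapping} (cross-bifix-free), i.e.\ no proper nonempty suffix of a codeword is a prefix of a codeword. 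Such codes have size $2^{\delta(1-o(1))}$, so $n$ of them fit with $\delta=\Theta(\log n)$, and then every mis-aligned piece of $h(w)$ starts with a codeword-suffix that is not a codeword-prefix, hence is not a prefix of any legally selected codeword or codeword-pair; adding exactly these mis-aligned pieces (of bounded length, pruned so none prefixes another) as forbidden strings rules out mis-aligned partitions with no new prefix collisions, by the same kind of induction as in Theorem~\ref{thm:EF-MSP-binary}. Finally PF-SP($L=2$) follows from PF-MSP($L=2$) by the interleaving reduction of Theorem~\ref{thm:EF-SP-binary}, reusing the delimiter strings built from $\bin$, $\pad$, $\chain$; the runs of at least $\delta$ ones again separate padded blocks from the encoded strings, and the only real care needed --- and the last place I expect to have to check things closely --- is the bookkeeping of prefix-incomparability among the many padded and mirrored blocks.
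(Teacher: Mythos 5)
Your proposal is a roadmap rather than a proof, and the two places you yourself flag as ``the main obstacle'' and ``the last place I expect to have to check things closely'' are exactly where it is incomplete or wrong. First, the core reduction 3SAT(3) $\to$ PF-MSP($K=2$) is never actually constructed: you correctly identify the invariant that a literal selected as a singleton must not be the first letter of any selected pair, but you stop at speculating that ``per-position delimiters and/or mirrored copies'' will be needed. The paper's gadgets show this speculation is off the mark: it keeps the clause strings $\lit_i^1\dollar\lit_i^2(\dollar\lit_i^3)$ with a \emph{single shared} delimiter $\dollar$, uses the one forbidden string $\dollar\dollar$ (so $\dollar$ cannot be a singleton), and replaces each variable's enforcer by two four-letter strings $\var_v^1\lit_i^p\lit_k^r\var_v^2$ and $\var_v^3\lit_j^q\lit_k^r\var_v^4$, each admitting exactly two $2$-partitions; one partition selects $\lit_i^p\lit_k^r$ (blocking the positive literal as a prefix), the other selects $\lit_k^r\var_v^2$ (blocking the negated literal). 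Without some such concrete gadget your reduction does not exist, so part (a) is not established.

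Second, for PF-SP over the binary alphabet you propose to reuse the delimiters of Theorem~\ref{thm:EF-SP-binary} built from $\bin$, $\pad$ and $\chain$. That construction cannot survive prefix-freeness even with careful ``bookkeeping'': its first delimiter $d_1=0(1)^{K-1}(1)^{K(K-1)/2}(1)^{K-1}0$ is designed precisely to force the selection of all of $(1),(1)^2,\dots,(1)^K$, which are pairwise prefix-comparable and hence collide immediately. The paper instead uses length-$K$ delimiters shaped like its codewords $00(1)^{i}0(1)^{K-4-i}0$ together with a separate forcing string $F=F_4F_3F_2F_1$ that pins down a small forbidden set $\{11,01,101,0001,10001\}$. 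Relatedly, your binary encoding step replaces the paper's local forbidden patterns by a cross-bifix-free code plus ``mis-aligned pieces'' added as forbidden strings; but the string selected at the first mis-aligned cut of $h(w)$ ends (rather than begins) at a mis-aligned position, i.e.\ it is a string of aligned codewords followed by a \emph{proper prefix of the next codeword}, and you have not shown that such a string collides with anything you have forced to be selected. So both halves of part (b) also remain open as written.
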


\long \def\FF {
\section{Factor-Free String Partition Problems}
\label{sec:factor-free-part}

\subsection{Factor-Free Multiple String Partition with Unbounded Alphabet}
\label{sec:factor-free-mult-unbounded-alphabet}

Let $\phi$ be an instance of 3SAT(3), with set
$C=\{c_{1},\dots,c_{m}\}$ of clauses, and set $X=x_{1},\dots,x_{n}$ of
variables.  We shall define an alphabet $\Sigma$ and construct a set
of strings $\mathcal{W}$ over $\Sigma^*$, such that $\mathcal{W}$ has
a factor-free 3-partition if and only if $\phi$ is satisfiable.
Let $|c_{i}|$ denote the number of literals contained in the clause
$c_{i}$ and let $c_{i}^{1},\dots,c_{i}^{|c_i|}$ be the literals of clause
$c_i$.

We construct $\mathcal{W}$ to be the union of three sets of strings:
clause strings ($\mathcal{C}$), enforcer strings ($\mathcal{E}$) and
forbidden strings ($\mathcal{F}$) with the same function as in the
equality-free case.
We construct an alphabet $\Sigma$, formally defined below, which
includes a letter for each literal
occurrence in the clauses, three letters for each variable,
 and the letters $0$ and $1$.
\begin{align*}
  \Sigma &= \{\hat{x}_i^j;\; x_i \in X \wedge 1 \leq j \leq 3\}
  \cup \{\lit_{i}^{j};\; c_i \in C \wedge 1 \leq j \leq |c_i|\}
  \cup \{0,1\}
\end{align*}

Note that $|\Sigma|$ is linear in the size of
the 3SAT(3) problem $\phi$ (at most $3m + 3n + 2$).

\begin{observation}
  Since every letter appears at least twice in $\mathcal{W}$, no
  selected string can be a single letter.
\end{observation}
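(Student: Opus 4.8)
The plan is to argue by contradiction, taking the stated hypothesis — that every letter of $\Sigma$ occurs at least twice among the strings of $\mathcal{W}$ — as given, and deriving a collision from the existence of a single-letter selected string. Suppose, then, that in some factor-free $3$-partition $P = p_1,\dots,p_l$ of $\mathcal{W}$ there is an index $i$ with $p_i = a$ for a single letter $a \in \Sigma$.

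The first step is to locate a second witness of $a$. Since $P$ is a partition, every character occurrence in every string of $\mathcal{W}$ is covered by exactly one selected string, and because $|p_i| = 1$ the part $p_i$ covers exactly one such occurrence of $a$. By hypothesis $a$ occurs at least twice in $\mathcal{W}$, so there is another occurrence of $a$, necessarily covered by some selected string $p_j$; and $j \neq i$, because $p_i$ accounts for only the single position it occupies, so the second occurrence must lie in a different part.

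The second step is to read off the collision: this second occurrence of $a$ lies inside $p_j$, so the string $a$ is a factor of $p_j$ (with $1 = |a| \le |p_j|$); together with $p_i = a$ and $i \neq j$ this says $p_i$ is a factor of $p_j$, contradicting factor-freeness. Hence no selected string can be a single letter.

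There is no real obstacle here — the observation is essentially immediate — and the only point that deserves an explicit word is the verification that the two occurrences of $a$ genuinely land in distinct parts (a length-one part cannot simultaneously absorb two occurrences), which is exactly what lets us conclude $i \neq j$. I would also note in passing that the argument is used repeatedly in what follows: once $\mathcal{W} = \mathcal{C}\cup\mathcal{E}\cup\mathcal{F}$ is constructed, one checks that each literal letter $\lit_i^j$, each variable letter $\hat{x}_i^j$, and each of $0,1$ occurs at least twice, so the observation applies throughout the reduction.
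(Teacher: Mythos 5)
Your argument is correct and is exactly the reasoning the paper leaves implicit (the paper states this as an observation with no written proof): a selected single letter $a$ would, by the second occurrence of $a$, be a factor of some other selected string, giving a collision. The one point you flag --- that the two occurrences must land in distinct parts because a length-one part covers only one position --- is the right detail to make explicit.
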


\begin{figure}
  \centering \small
  \vspace*{-5mm}
  \begin{tikzpicture}[xscale=0.67,scale=1.0]

    \draw (1,4) +(0,-.5) node{\ensuremath{\hat x_{v}^3}};
    \draw (1,4) +(1,-.5) node{\ensuremath{0}};
    \draw (1,4) +(2,-.5) node{\ensuremath{\hat x_{v}^3}};
    \draw (1,4) +(7.5,-.5) node{$1\le v\le n$};
    \underscore{1}{3.5}{2}
  \end{tikzpicture}
  \caption{The set of forbidden strings, $\mathcal{F}$, used in the
    reduction from 3SAT(3) to FF-MSP.}
  \label{fig:FF-MSP-construct-F}
\end{figure}

\paragraph{Construction of forbidden strings:}

To ensure that certain strings cannot be selected in
$\mathcal{C}$ or $\mathcal{E}$, we construct a set of forbidden strings
$\mathcal{F}$ as shown in Figure~\ref{fig:FF-MSP-construct-F}.
Specifically, we forbid, for every variable
$v$, any factor of the string $\hat{x}_{v}^3 0 \hat{x}_{v}^3$.
The number of strings in $\mathcal{F}$ is $n$.

\begin{lemma}\label{lem:FF-MSP-forb}
  No string or factor of a string from the forbidden set $\mathcal{F}$
  can be selected in $\mathcal{C}$ or $\mathcal{E}$.
\end{lemma}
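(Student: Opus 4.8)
The plan is to show that each forbidden string is forced to be selected in its entirety, after which factor-freeness immediately rules out any string selected in $\mathcal{C}$ or $\mathcal{E}$ that is a factor of it. First I would observe that $\mathcal{F}\subseteq\mathcal{W}$, so in any valid factor-free $3$-partition $P$ of $\mathcal{W}$ each string $\hat{x}_{v}^{3}0\hat{x}_{v}^{3}$ is itself partitioned by $P$. Since this string has length $3$, any non-trivial cut point would create a piece of length $1$; but by the preceding observation no selected string is a single letter, since every letter occurs at least twice in $\mathcal{W}$. Hence the only admissible partition of $\hat{x}_{v}^{3}0\hat{x}_{v}^{3}$ is the one selecting it whole, so $\hat{x}_{v}^{3}0\hat{x}_{v}^{3}$ is a selected string of $P$.

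Next, since $P$ is factor-free and $\hat{x}_{v}^{3}0\hat{x}_{v}^{3}$ is selected, no other selected string of $P$ may be a factor of it, and $\hat{x}_{v}^{3}0\hat{x}_{v}^{3}$ itself may not be selected a second time anywhere. Every factor of $\hat{x}_{v}^{3}0\hat{x}_{v}^{3}$ has length at most $3$, so this accounts for all strings that could be selected (with $K=3$) and coincide with such a factor. In particular, nothing selected in $\mathcal{C}$ or $\mathcal{E}$ can be equal to, or a factor of, any string of $\mathcal{F}$. Applying the argument to every variable $v$ completes the proof.

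The only routine part is the case analysis on partitioning a length-$3$ string, and the one point to be careful about is invoking the observation correctly: it is exactly what forbids the two ``half'' cuts $[\hat{x}_{v}^{3}0][\hat{x}_{v}^{3}]$ and $[\hat{x}_{v}^{3}][0\hat{x}_{v}^{3}]$. It is also worth noting in passing that distinct variables contribute strings over disjoint new letters, so all $n$ forbidden strings are simultaneously and distinctly selected and cannot collide with one another. I do not anticipate a genuine obstacle here; this lemma is essentially the base case on which the correctness of the remainder of the reduction will rest.
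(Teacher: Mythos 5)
Your proof is correct and follows essentially the same route as the paper's: the length-$3$ forbidden string cannot be split without selecting a single letter (ruled out by the observation that every letter occurs at least twice), so it must be selected whole, and factor-freeness then excludes any of its factors from being selected elsewhere. The paper's version is just a terser statement of the same two steps.
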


\begin{proof}
  Consider any string $f \in \mathcal{F}$.
  If $f$ is split into two or three selected strings, a single letter
  is selected, which is not possible.
  Regardless of the construction of
  $\mathcal{C}$ and $\mathcal{E}$, it follows that in any valid
  partition, since $f$ is selected in $\mathcal{F}$, a factor of $f$
  cannot be selected in $\mathcal{C}$ nor in $\mathcal{E}$.
  \qed
\end{proof}

\paragraph{Construction of clause strings:}  
For each clause $c_i \in C$, construct the \emph{$i$-th clause string}
to be $\lit_{i}^1 \lit_{i}^1 0 \lit_{i}^2 \lit_{i}^2$, if $|c_i|
= 2$ and $\lit_{i}^1 \lit_{i}^1 0 \lit_{i}^2 \lit_{i}^2 0
\lit_{i}^3 \lit_{i}^3$, if $|c_i| = 3$.

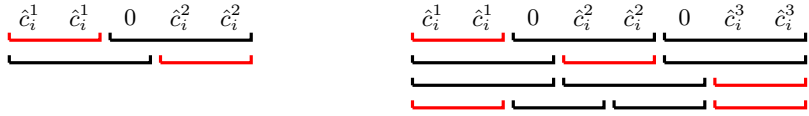
\begin{figure}
  \centering \small

  \begin{tikzpicture}[xscale=0.67,scale=1.0]
    \begin{scope}

      \underscore[red]{1}{4}{1}
      \underscore{3}{4}{2}
      
      \underscore{1}{3.7}{2} 
      \underscore[red]{4}{3.7}{1}
       
      \draw (1,4) +(0,0) node{\ensuremath{\lit_{i}^1}};
      \draw (1,4) +(1,0) node{\ensuremath{\lit_{i}^1}};
      \draw (1,4) +(2,0) node{\ensuremath{0}};
      \draw (1,4) +(3,0) node{\ensuremath{\lit_{i}^2}};
      \draw (1,4) +(4,0) node{\ensuremath{\lit_{i}^2}};
    \end{scope}
    
    \begin{scope}[shift={(8,0)}]

      \underscore[red]{1}{4}{1}
      \underscore{3}{4}{2}
      \underscore{6}{4}{2}

      \underscore{1}{3.7}{2} 
      \underscore[red]{4}{3.7}{1}
      \underscore{6}{3.7}{2}

      \underscore{1}{3.4}{2} 
      \underscore{4}{3.4}{2}
      \underscore[red]{7}{3.4}{1}

      \underscore[red]{1}{3.1}{1} 
      \underscore{3}{3.1}{1}
      \underscore{5}{3.1}{1}
      \underscore[red]{7}{3.1}{1}
      
      \draw (1,4) +(0,0) node{\ensuremath{\lit_{i}^1}};
      \draw (1,4) +(1,0) node{\ensuremath{\lit_{i}^1}};
      \draw (1,4) +(2,0) node{\ensuremath{0}};
      \draw (1,4) +(3,0) node{\ensuremath{\lit_{i}^2}};
      \draw (1,4) +(4,0) node{\ensuremath{\lit_{i}^2}};
      \draw (1,4) +(5,0) node{\ensuremath{0}};
      \draw (1,4) +(6,0) node{\ensuremath{\lit_{i}^3}};
      \draw (1,4) +(7,0) node{\ensuremath{\lit_{i}^3}};
    \end{scope}
    
  \end{tikzpicture}
  \caption{The 2-literal clause string (left) and 3-literal clause
    string (right) used in the reduction from 3SAT(3) to FF-MSP.
    Shown below each string are all valid partitions.  \emph{Selected}
    literals for a partition are shown
    in \textcolor{red}{red}.}
  \label{fig:FF-MSP-clause-string}
\end{figure}

\begin{lemma}\label{lem:FF-MSP-clause-string} 
  Given that no factor of a string from the forbidden set
  $\mathcal{F}$ is selected in $\mathcal{C} \cup \mathcal{E}$, at
  least one literal must be selected for each clause string in any
  factor-free 3-partition of $\mathcal{W}$.
\end{lemma}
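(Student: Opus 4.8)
The plan is to lean entirely on the observation recorded just above (no selected string can consist of a single letter, since every letter of $\Sigma$ occurs at least twice in $\mathcal{W}$), so that in any factor-free $3$-partition of $\mathcal{W}$ every piece of every clause string has length exactly $2$ or $3$. From there the partition of a clause string is nearly forced, and one simply checks that a literal string $\lit_{i}^{j}\lit_{i}^{j}$ always ends up selected.

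Concretely, I would fix a clause $c_{i}$ and let $w$ be its clause string. Since $|w|\in\{5,8\}$ and $K=3$, the given partition splits $w$ into at least two consecutive pieces; let $p$ be the first piece, a prefix of $w$ of length $2$ or $3$. Positions $1,2$ of $w$ are both $\lit_{i}^{1}$ and position $3$ is $0$, so $p$ must be $\lit_{i}^{1}\lit_{i}^{1}$ or $\lit_{i}^{1}\lit_{i}^{1}0$. If $p=\lit_{i}^{1}\lit_{i}^{1}$, the literal $\lit_{i}^{1}$ is selected and we are done. Otherwise $p=\lit_{i}^{1}\lit_{i}^{1}0$, and after deleting $p$ the remaining suffix of $w$ is $\lit_{i}^{2}\lit_{i}^{2}$ if $|c_{i}|=2$ and $\lit_{i}^{2}\lit_{i}^{2}0\lit_{i}^{3}\lit_{i}^{3}$ if $|c_{i}|=3$.

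In the case $|c_{i}|=2$, the suffix $\lit_{i}^{2}\lit_{i}^{2}$ has length $2$ and cannot be cut further (two one-letter pieces are forbidden), so $\lit_{i}^{2}\lit_{i}^{2}$ is selected. In the case $|c_{i}|=3$, the suffix $\lit_{i}^{2}\lit_{i}^{2}0\lit_{i}^{3}\lit_{i}^{3}$ has exactly the shape of a two-literal clause string over the literals $\lit_{i}^{2},\lit_{i}^{3}$, so the argument of the previous paragraph applied to this suffix shows that $\lit_{i}^{2}\lit_{i}^{2}$ or $\lit_{i}^{3}\lit_{i}^{3}$ is selected. Hence at least one literal is selected for $c_{i}$ in every factor-free $3$-partition of $\mathcal{W}$.

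I do not expect a genuine obstacle here. Note that the hypothesis about forbidden strings is not even used: clause strings share only the symbol $0$ with the strings of $\mathcal{F}$, and the single letter $0$ is already excluded by the observation, so the whole argument is just the length bound $K=3$ together with the observation, followed by a finite case check on prefixes of length $2$ and $3$. The one point worth flagging is that the statement claims \emph{at least} one selected literal, not exactly one --- the partition cutting a three-literal clause string into four blocks of length $2$ selects two literals, which does no harm --- and it is convenient to record explicitly, for the later enforcer-string lemmas, the two valid partitions of a two-literal clause string and the four of a three-literal one, as in Figure~\ref{fig:FF-MSP-clause-string}.
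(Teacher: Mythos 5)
Your proof is correct and takes essentially the same route as the paper: the paper's proof is precisely the ``simple case analysis'' you carry out, resting on the same observation that no single letter can be selected (so every piece of a clause string has length $2$ or $3$), with the paper deferring the enumeration of the resulting partitions to its figure while you organize it as a left-to-right prefix argument. Your side remarks are also accurate --- the forbidden-string hypothesis adds nothing here beyond what the observation already gives, and the all-length-$2$ partition of a three-literal clause string does select two literals, which is harmless since the lemma only claims \emph{at least} one.
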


\begin{proof}\label{pf:FF-MSP-clause-string}
  A literal letter cannot be selected alone without creating a
  collision.  Therefore, we say a literal $c_{i}^j$ is selected in
  clause $c_i$ if only if the string $\lit_{i}^j \lit_{i}^j$ is
  selected in the clause string for $c_{i}$.  Whether $c_i$ has two or
  three literals, a single letter $0$ cannot be selected.  A simple
  case analysis shows that in any valid partition at least one literal
  is selected (see Figure \ref{fig:FF-MSP-clause-string}).
  \qed
\end{proof}

\paragraph{Construction of enforcer strings:} 
We must now ensure that no literal of $\phi$ that is selected in
$\mathcal{C}$ is the negation of another selected literal.  By
definition of 3SAT(3), each variable appears exactly three times:
twice positive and once negated. Let $c_i^p$ and $c_j^q$ be the two
positive and $c_k^r$ the negated occurrences of variable $x_v$.
Then construct three \emph{enforcer strings} for this variable as
shown in Figure \ref{fig:FF-MSP-enforcer-str}.

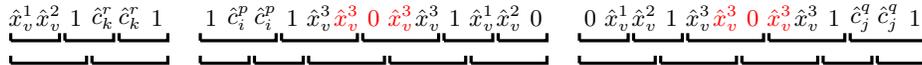
\begin{figure}
  \centering \small

  \begin{tikzpicture}[xscale=.36]

    \begin{scope}[shift={(0,0)}]

      \underscore{1}{4}{1}
      \underscore{3}{4}{1}
      \underscore{5}{4}{1}

      \underscore{1}{3.7}{2}
      \underscore{4}{3.7}{2}

      \draw (1,4) +(0,0) node{\ensuremath{\hat{x}_v^1}};
      \draw (1,4) +(1,0) node{\ensuremath{\hat{x}_v^2}};
      \draw (1,4) +(2,0) node{\ensuremath{1}};
      \draw (1,4) +(3,0) node{\ensuremath{\lit_{k}^r}};
      \draw (1,4) +(4,0) node{\ensuremath{\lit_{k}^r}};
      \draw (1,4) +(5,0) node{\ensuremath{1}};
    \end{scope}

    \begin{scope}[shift={(7,0)}]

      \underscore{1}{4}{2}
      \underscore{4}{4}{2}
      \underscore{7}{4}{2}
      \underscore{10}{4}{1}
      \underscore{12}{4}{1}

      \underscore{1}{3.7}{1}
      \underscore{3}{3.7}{1}
      \underscore{5}{3.7}{2}
      \underscore{8}{3.7}{2}
      \underscore{11}{3.7}{2}

      \draw (1,4) +(0,0) node{\ensuremath{1}};
      \draw (1,4) +(1,0) node{\ensuremath{\lit_{i}^p}};
      \draw (1,4) +(2,0) node{\ensuremath{\lit_{i}^p}};
      \draw (1,4) +(3,0) node{\ensuremath{1}};
      \draw (1,4) +(4,0) node{\ensuremath{\hat{x}_{v}^3}};
      \draw (1,4) +(5,0) node{\textcolor{red}{\ensuremath{\hat{x}_{v}^3}}};
      \draw (1,4) +(6,0) node{\textcolor{red}{\ensuremath{0}}};
      \draw (1,4) +(7,0) node{\textcolor{red}{\ensuremath{\hat{x}_{v}^3}}};
      \draw (1,4) +(8,0) node{\ensuremath{\hat{x}_{v}^3}};
      \draw (1,4) +(9,0) node{\ensuremath{1}};
      \draw (1,4) +(10,0) node{\ensuremath{\hat{x}_{v}^1}};
      \draw (1,4) +(11,0) node{\ensuremath{\hat{x}_{v}^2}};
      \draw (1,4) +(12,0) node{\ensuremath{0}};
    \end{scope}

    \begin{scope}[shift={(21,0)}]

      \underscore{1}{4}{1}
      \underscore{3}{4}{1}
      \underscore{5}{4}{2}
      \underscore{8}{4}{2}
      \underscore{11}{4}{2}

      \underscore{1}{3.7}{2}
      \underscore{4}{3.7}{2}
      \underscore{7}{3.7}{2}
      \underscore{10}{3.7}{1}
      \underscore{12}{3.7}{1}

      \draw (1,4) +(0,0) node{\ensuremath{0}};
      \draw (1,4) +(1,0) node{\ensuremath{\hat{x}_{v}^1}};
      \draw (1,4) +(2,0) node{\ensuremath{\hat{x}_{v}^2}};
      \draw (1,4) +(3,0) node{\ensuremath{1}};
      \draw (1,4) +(4,0) node{\ensuremath{\hat{x}_{v}^3}};
      \draw (1,4) +(5,0) node{\textcolor{red}{\ensuremath{\hat{x}_{v}^3}}};
      \draw (1,4) +(6,0) node{\textcolor{red}{\ensuremath{0}}};
      \draw (1,4) +(7,0) node{\textcolor{red}{\ensuremath{\hat{x}_{v}^3}}};
      \draw (1,4) +(8,0) node{\ensuremath{\hat{x}_{v}^3}};
      \draw (1,4) +(9,0) node{\ensuremath{1}};
      \draw (1,4) +(10,0) node{\ensuremath{\lit_j^q}};
      \draw (1,4) +(11,0) node{\ensuremath{\lit_j^q}};
      \draw (1,4) +(12,0) node{\ensuremath{1}};
    \end{scope}

  \end{tikzpicture}
  \caption{The enforcer strings for the three literals of variable $x_v$
    used in the reduction from 3SAT(3) to FF-MSP.  The two
    positive literals are denoted as $\lit_{i}^p$ and $\lit_{j}^q$ and
    the negative literal as $\lit_{k}^r$.  If the negative literal is
    selected in $\mathcal{C}$, then the enforcer string ensures
    neither positive literal can also be selected in $\mathcal{C}$
    without creating a collision (top row).  Likewise, if either or
    both of the positive literals are selected in $\mathcal{C}$, then
    the negative literal cannot be selected without creating a
    collision (bottom row).} 
  \label{fig:FF-MSP-enforcer-str}
\end{figure}

\begin{lemma}\label{lem:FF-MSP-enforce}
  Given that no factor of a string from the forbidden set
  $\mathcal{F}$ is selected in $\mathcal{C} \cup \mathcal{E}$, any
  factor-free 3-partition of $\mathcal{W}$ must be consistent.
\end{lemma}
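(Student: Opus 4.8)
First, I would set up the proof as a case analysis over which literal-pairs for variable $x_v$ are selected in $\mathcal{C}$, mirroring the structure of Lemmas~\ref{lem:FF-MSP-clause-string} and~\ref{lem:FF-MSP-enforcer-string}. By Lemma~\ref{lem:FF-MSP-clause-string} (combined with its proof) a literal $c^j_i$ is ``selected in $\mathcal{C}$'' precisely when the block $\lit^j_i\lit^j_i$ is selected in the clause string; I want to show that one cannot simultaneously have $\lit^p_i\lit^p_i$ (or $\lit^q_j\lit^q_j$) selected in $\mathcal{C}$ and $\lit^r_k\lit^r_k$ selected in $\mathcal{C}$. The enforcer strings in Figure~\ref{fig:FF-MSP-enforcer-str} are designed so that the central segment $\hat x^3_v\,\hat x^3_v\,0\,\hat x^3_v\,\hat x^3_v$, whose proper factor $\hat x^3_v\,0\,\hat x^3_v$ is forbidden by Lemma~\ref{lem:FF-MSP-forb}, forces its surrounding $1$'s and hence the literal blocks on either end into particular partitions. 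So the key step is: enumerate the factor-free $3$-partitions of each enforcer string, given that (i) single letters cannot be selected, (ii) no factor of any $f\in\mathcal{F}$ can be selected, and (iii) the literal blocks already selected in $\mathcal{C}$ cannot be re-selected here.

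Concretely, for the first enforcer string $\hat x^1_v\,\hat x^2_v\,1\,\lit^r_k\,\lit^r_k\,1$ (top-left of the figure), I would argue that the two displayed partitions are the only valid ones, and in one of them the block $\lit^r_k\lit^r_k$ is selected — so if $c^r_k$ is \emph{not} selected in $\mathcal{C}$, this string can be partitioned, but the string is harmless otherwise; its real job is to make $\hat x^1_v\hat x^2_v$ and $\hat x^3_v\hat x^3_v$ ``used up.'' Then for the two longer enforcer strings (the ``positive'' ones built around $\lit^p_i$ and $\lit^q_j$), I would show: the forbidden factor $\hat x^3_v 0 \hat x^3_v$ forces the middle to be cut as $\ldots\hat x^3_v\mid \hat x^3_v\,0\,\hat x^3_v\mid\hat x^3_v\ldots$ (the red segment in the figure), which in turn forces the adjacent $1\,\hat x^3_v$ / $\hat x^3_v\,1$ groupings, which propagates outward to force the $1\,\lit^p_i$ or $\lit^p_i\,1$ groupings at the ends — and crucially leaves $\lit^p_i\lit^p_i$ as the \emph{only} way to cover those two literal letters, forcing it to be selected here, which then forbids selecting $\lit^p_i\lit^p_i$ in $\mathcal{C}$. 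Pairing this with the first enforcer string's behavior on the negative literal yields: not both a positive and the negative literal of $x_v$ can be selected in $\mathcal{C}$, i.e., consistency.

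The main obstacle — and where I'd spend the most care — is the propagation argument: showing that the single forbidden factor $\hat x^3_v 0 \hat x^3_v$ rigidly determines the partition of the whole enforcer string out to both endpoints, ruling out ``shifted'' partitions where a cut lands one position off. This requires checking that every alternative grouping either selects a single letter, selects a forbidden factor, or selects a length-$2$ or length-$3$ string that collides with something already selected (either within the same string, or a $\lit^p_i\lit^p_i$ / $\hat x^1_v\hat x^2_v$ block reserved elsewhere). I would organize this as: first pin down cut points around the central forbidden factor, then induct outward two letters at a time, at each step noting the unique legal grouping. Once the rigidity is established, the consistency conclusion is immediate from reading off which literal blocks are forced to be selected in $\mathcal{E}$. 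Finally I would remark that the symmetric enforcer string for $\lit^q_j$ is handled identically, so all three occurrences of $x_v$ are controlled, completing the proof. \qed
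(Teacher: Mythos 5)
There is a genuine error at the heart of your argument: you have the role of the forbidden string backwards. You propose that the forbidden factor $\hat x_v^3\,0\,\hat x_v^3$ ``forces the middle to be cut as $\ldots\hat x_v^3\mid \hat x_v^3\,0\,\hat x_v^3\mid\hat x_v^3\ldots$,'' i.e.\ that the red segment is selected as a block. But $\hat x_v^3\,0\,\hat x_v^3$ is itself a string of $\mathcal{F}$ and is already selected there, so selecting it (or any factor of it) inside an enforcer string is exactly the collision that Lemma~\ref{lem:FF-MSP-forb} rules out. The correct deduction is the opposite one: \emph{no} cut point may fall at the beginning or the end of the red segment (any selected string starting or ending there would be a factor of $\hat x_v^3 0 \hat x_v^3$), and since $\hat x_v^3\hat x_v^3$ also cannot be selected (it would force $0\hat x_v^3\hat x_v^3$ or $\hat x_v^3\hat x_v^30$ next to it, creating a factor collision), the middle of each long enforcer string admits exactly the two coverings $[1\hat x_v^3\hat x_v^3][0\hat x_v^3\hat x_v^3]$ and $[\hat x_v^3\hat x_v^3 0][\hat x_v^3\hat x_v^3 1]$ shown in Figure~\ref{fig:FF-MSP-enforcer-str}. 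Your propagation, built on the wrong middle, also lands on a wrong endpoint: you conclude that $\lit_i^p\lit_i^p$ is \emph{always} forced to be selected in the enforcer string, which would forbid every positive literal from ever being selected in $\mathcal{C}$ and would break the reduction's completeness direction.

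The idea your proposal is missing is that each long enforcer string has \emph{two} legal partitions, and which one is available is governed by collisions on the shared variable letters $\hat x_v^1\hat x_v^2$ across the three enforcer strings. Concretely: if $c_k^r$ is selected in $\mathcal{C}$, the first enforcer string cannot select $\lit_k^r\lit_k^r 1$ and must select $\hat x_v^1\hat x_v^2$; this makes $\hat x_v^1\hat x_v^2 0$ and $0\hat x_v^1\hat x_v^2$ unavailable in the second and third strings (factor collision), forcing the partitions that select $1\lit_i^p\lit_i^p$ and $\lit_j^q\lit_j^q 1$ and hence blocking both positive literals. Symmetrically, if a positive literal is selected in $\mathcal{C}$, the corresponding long string must select $\hat x_v^1\hat x_v^2 0$ (resp.\ $0\hat x_v^1\hat x_v^2$), which forbids $\hat x_v^1\hat x_v^2$ in the first string and forces it to select $\lit_k^r\lit_k^r1$, blocking $c_k^r$. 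Without this conditional, cross-string coupling your case analysis cannot yield consistency.
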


\begin{proof}\label{pf:FF-MSP-enforce}
  Consider the three enforcer strings for some variable $x_v$ with positive
  literals $c_{i}^p=c_{j}^q=x_{v}$, and the negated literal
  $c_{k}^r=\neg x_{v}$ shown in Figure~\ref{fig:FF-MSP-enforcer-str}.
  Note that the red strings in the middle of the last two enforcer strings are
  forbidden, and hence, no partition can have a cut point at the
  beginning or the end of the red string. Note also that the factor
  $x_{v}^{3}x_{v}^{3}$ cannot be selected, as then
  $x_{v}^{3}x_{v}^{3}0$ or $0x_{v}^{3}x_{v}^{3}$ has
  to be selected, which is obviously not possible.
  Suppose the negative literal is selected in $\mathcal{C}$.  Then the
  only partition which can be selected without creating a collision
  selects strings containing both $\lit_{i}^p \lit_{i}^p$ and
  $\lit_{j}^q \lit_{j}^q$ as factors, thus forbidding them from being
  selected in $\mathcal{C}$ (see Figure~\ref{fig:FF-MSP-enforcer-str}
  (top partition)).
  Likewise, suppose one or both of the positive literals is selected
  in $\mathcal{C}$.
  Then only one partition of the first enforcer string
  is possible and it selects one string containing $\lit_{k}^r
  \lit_{k}^r$ as a factor, thus forbidding $\lit_{k}^r \lit_{k}^r$
  from being selected in $\mathcal{C}$ (see
  Figure~\ref{fig:FF-MSP-enforcer-str} (bottom partition)).  
  Note that
  while these enforcer strings ensure literals selected in the
  clauses are consistent, it also ensures unwanted collisions do not
  occur since selected strings containing literals are prefixed or
  suffixed by a 1, a letter not used in the clause strings. Also note
  that the selected strings containing the variable letters do not collide.
  \qed
\end{proof}

This completes the reduction.  Notice that the reduction is polynomial
as the combined length of the constructed set of strings $\mathcal{W} =
\mathcal{C} \cup \mathcal{E} \cup \mathcal{F}$ is at most $8m +
32n + 3n = 8m + 35n$.

\begin{theorem}\label{thm:FF-MSP-NPC}
  Factor-Free Multiple String Partition (FF-MSP) is \NP-complete.
\end{theorem}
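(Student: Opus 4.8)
The plan is to follow the same template as the proof of Theorem~\ref{thm:EF-MSP-NPC}, now using the factor-free gadgets constructed above. First I would observe that FF-MSP is in \NP: a nondeterministic machine guesses a $3$-partition $P$ with $|p_i|\le 3$ for every $p_i$ and then checks in polynomial time (e.g.\ by pairwise comparison of the $O(|\mathcal{W}|)$ selected strings) that no selected string is a factor of another. Combined with the already-noted fact that the map $\phi\mapsto\mathcal{W}=\mathcal{C}\cup\mathcal{E}\cup\mathcal{F}$ is computable in polynomial time and produces a set of total length at most $8m+35n$, this reduces the theorem to establishing the equivalence: $\phi$ is satisfiable if and only if $\mathcal{W}$ admits a factor-free $3$-partition.

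For the forward direction I would fix a satisfying assignment of $\phi$ and, for each clause $c_i$, pick one literal $c_i^{a_i}$ made true; in the corresponding clause string I select the block $\lit_i^{a_i}\lit_i^{a_i}$ for that literal and extend this to a valid partition of the remainder of the clause string using one of the cases enumerated in Figure~\ref{fig:FF-MSP-clause-string}. For each variable $x_v$ I then select the partition of its three enforcer strings dictated by whether its negative occurrence or (one or both of) its positive occurrences were chosen in $\mathcal{C}$ --- precisely the partitions highlighted in Figure~\ref{fig:FF-MSP-enforcer-str} --- and I select each forbidden string in $\mathcal{F}$ entirely. The substance of Lemma~\ref{lem:FF-MSP-enforce} (and the remarks in its proof about the guarding $1$'s and the variable letters) is exactly that these choices are mutually compatible: strings carrying literals are flanked by a $1$ in the enforcer strings but never in the clause strings, the strings carrying the variable letters $\hat x_v^j$ are disjoint across distinct variables, and by Lemma~\ref{lem:FF-MSP-forb} the strings of $\mathcal{F}$ are inert. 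Assembling these observations yields a factor-free $3$-partition of $\mathcal{W}$.

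For the converse, suppose $\mathcal{W}$ has a factor-free $3$-partition $P$. Lemma~\ref{lem:FF-MSP-forb} shows that no factor of a forbidden string is selected in $\mathcal{C}\cup\mathcal{E}$, so the hypotheses of Lemmas~\ref{lem:FF-MSP-clause-string} and~\ref{lem:FF-MSP-enforce} are met. Lemma~\ref{lem:FF-MSP-clause-string} then guarantees that at least one literal $c_i^{a_i}$ is selected from each clause string, and Lemma~\ref{lem:FF-MSP-enforce} guarantees that the collection of selected literals is consistent, i.e.\ no variable has both its negative literal and one of its positive literals selected. Declaring each such selected literal to be true therefore gives a well-defined truth assignment satisfying every clause; any variable none of whose literals is selected may be assigned arbitrarily. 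Hence $\phi$ is satisfiable, completing the equivalence and the proof.

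The only genuinely delicate point is the collision bookkeeping in the forward direction: one must verify that every pair of selected strings originating in different gadgets is separated by some distinguishing feature (a flanking $1$, a variable letter unique to one variable, or a length mismatch), and that within the three enforcer strings of a single variable the chosen sub-partitions do not reproduce one another. This is, however, exactly what Lemma~\ref{lem:FF-MSP-enforce} together with Figure~\ref{fig:FF-MSP-enforcer-str} was set up to certify, so in the write-up this step amounts to quoting those results correctly rather than performing new work.
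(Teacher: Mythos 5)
Your proposal is correct and follows essentially the same route as the paper's proof: membership in \NP{} by guess-and-check, polynomiality of the construction, the forward direction by selecting true literals and extending via the partitions in Figures~\ref{fig:FF-MSP-clause-string} and~\ref{fig:FF-MSP-enforcer-str}, and the converse by combining Lemmas~\ref{lem:FF-MSP-forb}, \ref{lem:FF-MSP-clause-string} and~\ref{lem:FF-MSP-enforce}. Your explicit flagging of the cross-gadget collision bookkeeping in the forward direction is a point the paper delegates to the remarks at the end of the proof of Lemma~\ref{lem:FF-MSP-enforce}, but it is the same argument.
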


\begin{proof}\label{pf:FF-MSP-NPC}
  It is easy to see that FF-MSP is in \NP: a nondeterministic algorithm
  need only guess a partition $P$ where $|p_i| \leq K$ for all $p_i$
  in $P$ and check in polynomial time that no string in $P$ is a
  factor of another.  Furthermore, it is clear that an arbitrary
  instance $\phi$ of 3SAT(3) can be reduced to an instance of FF-MSP,
  specified by a set of strings $\mathcal{W}=\mathcal{C} \cup
  \mathcal{E} \cup \mathcal{F}$, in polynomial time and
  space by the reduction detailed above.

  Now suppose there is a satisfying truth assignment for $\phi$.
  Simply select one corresponding true literal per clause in
  $\mathcal{C}$. The construction of clause strings guarantees that a
  3-partition of the rest of each clause string is possible.  Also,
  since a satisfying truth assignment for $\phi$ cannot assign truth
  values to opposite literals, then Lemma \ref{lem:FF-MSP-enforce}
  guarantees that a valid partition of the enforcer strings are
  possible which does not conflict with the clause strings.
  Therefore, there exists a factor-free multiple string
  partition of $\mathcal{W}$.

  Likewise, consider a factor-free multiple string partition of
  $\mathcal{W}$.  Lemma \ref{lem:FF-MSP-clause-string} ensures that at
  least one literal per clause is selected.  Furthermore, Lemma
  \ref{lem:FF-MSP-enforce} guarantees that if there is no collision,
  then no two selected variables in the clauses are negations of each
  other.  Therefore, this must correspond to a satisfying truth
  assignment for $\phi$ (if none of the three literals of a variable
  is selected in the partition of $\mathcal{C}$ then this variable can
  have arbitrary value in the truth assignment without affecting
  satisfiability of $\phi$).
  \qed
\end{proof}

\subsection{Factor-Free String Partition with Unbounded Alphabet}
\label{sec:factor-free-unbounded-alphabet}

\begin{lemma}\label{lem:FF-SP-connector}
  A valid $K$-partition $P$ of a string $w$ having $\alpha (x)^{3K-2}
  \beta$ as a factor, where $\alpha$ and $\beta$ are single letters
  other than $x$,
  must select the strings $\alpha (x)^{K-1}$, $(x)^{K}$, and
  $(x)^{K-1} \beta$.
\end{lemma}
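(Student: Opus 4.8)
The plan is to run a length count against the factor-freeness of $P$. First I would fix the occurrence of $\alpha(x)^{3K-2}\beta$ in $w$ and let $p_s$ denote the selected string of $P$ containing the displayed letter $\alpha$, and $p_e$ the selected string containing $\beta$. Since every selected string has length at most $K$ and the $3K-2$ letters following $\alpha$ are all $x$, the part of $p_s$ lying strictly to the right of $\alpha$ is a run $(x)^t$ with $0\le t\le K-1$, and it cannot reach as far as $\beta$, which would require more than $3K-1>K$ letters; symmetrically the part of $p_e$ strictly to the left of $\beta$ is $(x)^u$ with $0\le u\le K-1$. In particular $p_s\neq p_e$, the string $p_s$ ends and $p_e$ begins strictly inside the run of $x$'s, and the $x$'s covered by neither $p_s$ nor $p_e$ form one contiguous block $(x)^b$ with $b=3K-2-t-u$ that is exactly tiled by the selected strings lying strictly between $p_s$ and $p_e$.

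Next I would invoke factor-freeness. Because $P$ is factor-free it is in particular equality-free, so all its selected strings are pairwise distinct; and of any two selected strings that are both powers of $x$, the shorter is a factor of the longer. Hence at most one selected string of $P$ is a power of $x$. Every selected string tiling the block $(x)^b$ is a power of $x$, so that block must consist of a single selected string, whence $b\le K$. Combined with $b=3K-2-t-u$ this yields $t+u\ge 2K-2$, while $t\le K-1$ and $u\le K-1$ force $t=u=K-1$ and $b=K$.

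Finally I would read off the three strings. With $t=K-1$ the string $p_s$ already contains the $K$ letters $\alpha(x)^{K-1}$, so by $|p_s|\le K$ it has no letter to the left of $\alpha$ and $p_s=\alpha(x)^{K-1}$; symmetrically $p_e=(x)^{K-1}\beta$. The block between $p_s$ and $p_e$ has length $K$ and is a single selected string, namely $(x)^{K}$. Thus $P$ selects $\alpha(x)^{K-1}$, $(x)^{K}$, and $(x)^{K-1}\beta$, as claimed.

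I expect the only real obstacle to be pinning down the geometry in the first paragraph: one must verify that $p_s$ and $p_e$ are genuinely two distinct selected strings, that their overhangs into the run are honest $x$-runs of the stated bounded lengths (i.e.\ neither $p_s$ nor $p_e$ swallows any of the interior $x$'s beyond what the length bound permits), and that the remaining $x$'s really do form a single contiguous block exactly tiled by the intermediate selected strings. Once this is set up correctly, the length identity $3K-2=t+b+u$ together with the bounds $t,u\le K-1$ and $b\le K$ collapses to equality everywhere, and the statement follows with no further computation.
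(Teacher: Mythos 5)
Your proof is correct and follows essentially the same approach as the paper's: the key observation in both is that two selected strings consisting only of $x$'s must collide, so at most one such string exists, which together with the length bound $K$ forces exactly three selected strings of length exactly $K$ over the factor. Your write-up is merely a more detailed version of the paper's terser counting argument.
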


\begin{proof}
  Three strings are required to cover the factor $\delta=\alpha
  (x)^{3K-2} \beta$.  However, more than three strings cannot be
  selected to cover $\delta$, as otherwise at least two factors are
  selected consisting only of the letter $x$ and must therefore
  collide.  There is only one partition of $\delta$ covered by three
  factors which must select $\alpha (x)^{K-1}$, $(x)^{K}$, and
  $(x)^{K-1} \beta$.
\end{proof}

\begin{theorem}\label{thm:FF-SP-NPC}
  Factor-Free String Partition (FF-SP) is \NP-complete.
\end{theorem}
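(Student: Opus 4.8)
The plan is to reduce the \NP-complete problem FF-MSP (Theorem~\ref{thm:FF-MSP-NPC}) to FF-SP by concatenating the strings of a multiple-string instance into a single string, separated by copies of the connector gadget of Lemma~\ref{lem:FF-SP-connector}. Membership in \NP\ is immediate: a nondeterministic algorithm guesses a $K$-partition with parts of length at most $K$ and checks in polynomial time that no selected string is a factor of another.

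So let an arbitrary FF-MSP instance be given: strings $\mathcal{W}=\{w_1,\dots,w_\ell\}$ over $\Sigma$ with bound $K$ (we may assume $K\ge 2$). First I would enlarge the alphabet by adding, for each $i$ with $1\le i<\ell$, three fresh letters $d_i,x_i,e_i$, with all $3(\ell-1)$ of these letters distinct and none in $\Sigma$; put $\bar\Sigma=\Sigma\cup\{d_i,x_i,e_i:1\le i<\ell\}$, keep the bound $K$, and set
\begin{equation*}
  \bar w = w_1\,d_1(x_1)^{3K-2}e_1\,w_2\,d_2(x_2)^{3K-2}e_2\,w_3\,\cdots\,d_{\ell-1}(x_{\ell-1})^{3K-2}e_{\ell-1}\,w_\ell\,.
\end{equation*}
Each separating block is a factor of the form $\alpha(x)^{3K-2}\beta$ with $\alpha,\beta\ne x$, so Lemma~\ref{lem:FF-SP-connector} applies to it inside any valid $K$-partition of $\bar w$.

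Next I would prove the equivalence of the two instances. In the forward direction, given any valid $K$-partition of $\mathcal{W}$, I partition each $w_i$ the same way and each separating block as $d_i(x_i)^{K-1}\cdot(x_i)^{K}\cdot(x_i)^{K-1}e_i$, then check there are no collisions; the key observations are that the three strings of one block are mutually non-factors (distinct lengths, or distinguished by the single occurrences of $d_i$ and $e_i$), that every block string contains the letter $x_i$, which occurs nowhere else, so it can neither be a factor of nor contain as a factor any string selected in a $w_j$ or in another block, and that strings selected in distinct $w_i$'s use only letters of $\Sigma$, so a collision among them would be a collision of the original instance, of which there are none. Conversely, given a valid $K$-partition $\bar P$ of $\bar w$, Lemma~\ref{lem:FF-SP-connector} forces $\bar P$ to select $d_i(x_i)^{K-1}$, $(x_i)^{K}$ and $(x_i)^{K-1}e_i$ in each block, which places cut points immediately before every $d_i$ and immediately after every $e_i$, hence at both endpoints of every $w_i$; thus $\bar P$ induces a $K$-partition $P_i$ of each $w_i$, and the same collision analysis shows that the only possible collisions in $\bar P$ are among strings selected in the $w_i$'s, so the $P_i$ together form a valid $K$-partition of $\mathcal{W}$.

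Finally I would note the reduction is polynomial, since $|\bar w|=||\mathcal{W}||+3K(\ell-1)$ and $|\bar\Sigma|=|\Sigma|+3(\ell-1)$; together with the equivalence this completes the proof. The hard part is not conceptual---Lemma~\ref{lem:FF-SP-connector} already hands us the forced cut points at the $w_i$ boundaries---but rather the careful (if routine) bookkeeping that verifies the connector gadgets create no factor collisions with one another or with the content strings; giving each connector its own private three-letter alphabet is exactly what makes that bookkeeping go through.
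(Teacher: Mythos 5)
Your proposal is correct and follows essentially the same route as the paper: concatenate the FF-MSP strings with connector blocks of the form (letter)$(x_i)^{3K-2}$(letter) over fresh letters and invoke Lemma~\ref{lem:FF-SP-connector} to force the cut points at the boundaries of each $w_i$. The only (immaterial) difference is that you use two distinct private end letters $d_i,e_i$ per block where the paper reuses a single shared letter $\alpha$ throughout.
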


\begin{proof}\label{pf:FF-SP-NPC}
  Consider an arbitrary instance $I$ of FF-MSP having a set of strings
  $\mathcal{W}=\{w_1,w_2,\ldots,w_n\}$ over alphabet $\Sigma$, and
  maximum partition size $K$.  We construct an instance $I'$ of FF-SP
  as follows.  Let $\widehat{\Sigma} = \{\alpha\} \cup
  \{\gamma_i\text{, for $1 \leq i < n$}\}$, where $\widehat{\Sigma} \cap
  \Sigma = \emptyset$.  Set the alphabet of $I'$ to $\Sigma' = \Sigma
  \cup \widehat{\Sigma}$ and the maximum partition size to $K' = K$.
  Note that $|\Sigma'| = |\Sigma| + n$.
  Finally, construct the string $w' = w_1 \alpha (\gamma_1)^{3K-2}
  \alpha w_2 \alpha (\gamma_2)^{3K-2} \alpha \ldots \alpha (\gamma_{n
    - 1})^{3K-2} \alpha w_n$.  The reduction
  is in polynomial time and space as $|w'|=\sum_{w_i \in
    \mathcal{W}}|w_i|+3K(n-1)$. 
  By Lemma
  \ref{lem:FF-SP-connector}, the factors $\alpha (\gamma_i)^{3K-2}
  \alpha$ of string $w'$ must be partitioned as $\alpha
  (\gamma_i)^{K-1}$, $(\gamma_i)^{K}$, $(\gamma_i)^{K-1} \alpha$, for
  $1 \leq i < n$.  Since any string containing a letter $\gamma_i$, $1
  \leq i < n$, cannot be a factor of any string in $\mathcal{W}$ it
  follows immediately that $w'$ has a $K'$-partition if and only if
  $\mathcal{W}$ has a $K$-partition.
  \qed
\end{proof}

\subsection{Factor-Free Multiple String Partition with Binary Alphabet}
\label{sec:factor-free-mult-binary-alphabet}

In this section we are going to reduce the size of alphabet to 2. In
order to do that we will map all letters of the original unbounded
alphabet $\Sigma$ except $0$ and $1$ to distinct binary
strings of length $t$ ($t$ has to be large enough so that we have
enough of strings), called codewords. Letters $0$ and $1$
will remain mapped to $0$ and $1$, respectively.
Consequently, $K$ will be set to $2t+1$. We will use the same clause
strings and a simplified version of the enforcer strings found in the
unbounded case (just mapped to the
binary alphabet). We will use only two forbidden strings, $000$ and $010$, to force
valid $K$-partitions to cut the clause and enforcer strings just
before or after the $0$ or $1$ letter. At
the end, we will show that this does not introduce any new collisions
in the $K$-partition corresponding to a truth assignment of the
3SAT(3) instance.

\paragraph{Construction of codewords:} We will use the codewords of
the following type $0(1)^{i}0(1)^{t-3-i}0$, where
$i\in\{2,\dots,t-5\}$. To make sure we have enough codewords for all
literal and variable letters (at most $3m + 2n$), we have to choose $t\ge3m + 2n + 6$.

\paragraph{Construction of forbidden string:} We will use only two
forbidden strings $\mathcal{F}=\{000,010\}$. Obviously, the only factor-free
partition of $\mathcal{F}$ is without any non-trivial cut points. These
two forbidden strings force any string containing $uav$ as a factor,
where $u$ and $v$ are codewords and $a$ is a letter, to contain exactly one cut point
around the letter $a$ between $u$ and $v$ as formalized in the
following lemma.

\begin{lemma}\label{l:cut-around-0}
  Any valid $K$-partitioning of $\alpha uav \beta$ and
  $\mathcal{F}$, where $u,v$ are codewords, $a\in \{0,1\}$ and
  $\alpha,\beta$ are arbitrary binary strings, contains either cut
  point $|\alpha|+|u|$ or $|\alpha|+|u|+1$, but not both.
\end{lemma}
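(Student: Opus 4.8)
The plan is to reduce the whole statement to two elementary facts about the forbidden strings and then split on the single letter $a$. First I would note that, since the partition under consideration is a partition of the set $\{\alpha u a v\beta\}\cup\mathcal{F}$, each of $000$ and $010$ is itself partitioned; but splitting either of these into two or more parts selects a one-letter string ($0$ or $1$), and that one-letter string is then a factor of another selected part --- a collision (this is exactly the remark made above about $\mathcal{F}$). Hence in every valid $K$-partition both $000$ and $010$ are selected whole, so neither $0$ nor $1$ is ever selected as a length-one string.

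For the ``not both'' half: if both positions $|\alpha|+|u|$ and $|\alpha|+|u|+1$ were cut points, the letter $a$ would be selected as a string of length one, contradicting the previous paragraph since $a\in\{0,1\}$. For the ``at least one'' half I would argue by contradiction: if neither of the two positions is a cut point, then the last letter of $u$, the letter $a$, and the first letter of $v$ all lie inside one selected string $p$. Every codeword has the form $0(1)^{i}0(1)^{t-3-i}0$, so $u$ and $v$ begin and end with $0$; therefore $p$ contains the factor $0a0$, which is $000$ when $a=0$ and $010$ when $a=1$ --- in either case an element of $\mathcal{F}$. Since that element is itself a selected string and $|\alpha u a v\beta|\ge 2t+1>3$ makes $p$ a part distinct from it, this is a collision. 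Combining the two halves yields exactly one of the two cut points, as claimed.

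The proof is only a short case analysis, so I do not expect a real obstacle; the two points deserving a line of care are (i) recording that the codewords start and end with $0$, which is what produces the factor $0a0$ around $a$, and (ii) the degenerate possibility that the enclosing string $p$ literally equals $000$ or $010$ instead of strictly containing it --- but since $p$ is a factor of the longer string $\alpha u a v\beta$ while the forbidden string is a separate part of the partition, two equal selected parts already constitute a collision, so this case needs no extra treatment.
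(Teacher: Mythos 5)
Your proof is correct and follows essentially the same route as the paper: the key step in both is that with no cut point on either side of $a$, the selected string containing $a$ would have $0a0\in\{000,010\}$ as a factor, colliding with a forbidden string. You are in fact slightly more complete than the paper's own write-up, which proves only the ``at least one'' direction and leaves the ``not both'' half (no single letter can be selected, since it would be a factor of a whole-selected forbidden string) and the degenerate case $p=0a0$ implicit.
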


\begin{proof}
  Assume we have a valid partitioning of $\alpha uav \beta$
  without cut points at positions $|\alpha|+|u|$ and $|\alpha|+|u|+1$.
  Since $u$ ends with $0$ and $v$ starts with $0$, there is a selected
  string containing $0a0$ as a factor, which is a forbidden
  string, a contradiction.
  \qed
\end{proof}

\paragraph{Construction of clause strings:} We will use the same
clause strings as for the unbounded alphabet:
\begin{equation*}
\lit_{i}^{1}\lit_{i}^{1}0\lit_{i}^{2}\lit_{i}^{2} \quad \text{or}\quad 
\lit_{i}^{1}\lit_{i}^{1}0\lit_{i}^{2}\lit_{i}^{2}0\lit_{i}^{3}\lit_{i}^{3}\,,
\end{equation*}
where $\lit_{i}^{j}$ are distinct codewords described above. We say
that a literal $c_{i}^{j}$ is \emph{selected} if
$\lit_{i}^{j}\lit_{i}^{j}$ is super-selected in the $K$-partition.

\begin{lemma}\label{l:clause-gadget}
  In any factor-free $K$-partition of $\mathcal{C}\cup\mathcal{F}$, at
  least one of the literals in each clause is selected.
\end{lemma}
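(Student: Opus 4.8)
The plan is to reduce everything to repeated applications of Lemma~\ref{l:cut-around-0}, which already guarantees that around every $0$-letter sandwiched between two codewords the partition places a cut immediately before or immediately after that $0$. Since the two endpoints $0$ and $|w|$ of any clause string are trivial cut points, it then suffices to record, for each internal $0$, on which of its two sides the forced cut falls, and to read off which doubled block $\lit_i^j\lit_i^j$ ends up with cut points at both of the positions delimiting it inside the clause string — equivalently, is super-selected, which is exactly the definition of the literal $c_i^j$ being selected.

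First I would treat the two-literal clause string $\lit_i^1\lit_i^1 0\lit_i^2\lit_i^2$, of length $4t+1$, whose internal letters split as positions $1,\dots,2t$ for $\lit_i^1\lit_i^1$, position $2t+1$ for the $0$, and positions $2t+2,\dots,4t+1$ for $\lit_i^2\lit_i^2$. Applying Lemma~\ref{l:cut-around-0} with $\alpha=\lit_i^1$, $u=\lit_i^1$, $a=0$, $v=\lit_i^2$, $\beta=\lit_i^2$ shows the partition has a cut either at position $2t$ or at position $2t+1$. In the first case $\lit_i^1\lit_i^1$ has cuts at both of its boundaries $0$ and $2t$, hence is super-selected and $c_i^1$ is selected; in the second case $\lit_i^2\lit_i^2$ has cuts at its boundaries $2t+1$ and $4t+1$, hence is super-selected and $c_i^2$ is selected.

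Next I would treat the three-literal clause string $\lit_i^1\lit_i^1 0\lit_i^2\lit_i^2 0\lit_i^3\lit_i^3$, of length $6t+2$, applying Lemma~\ref{l:cut-around-0} once around the first $0$ (with $\alpha=\lit_i^1$, $u=\lit_i^1$, $v=\lit_i^2$, $\beta=\lit_i^2 0\lit_i^3\lit_i^3$), which forces a cut at position $2t$ or $2t+1$, and once around the second $0$ (with $\alpha=\lit_i^1\lit_i^1 0\lit_i^2$, $u=\lit_i^2$, $v=\lit_i^3$, $\beta=\lit_i^3$), which forces a cut at position $4t+1$ or $4t+2$. A three-way case analysis then finishes: a cut at $2t$ makes $\lit_i^1\lit_i^1$ (boundaries $0$ and $2t$) super-selected; a cut at $4t+2$ makes $\lit_i^3\lit_i^3$ (boundaries $4t+2$ and $6t+2$) super-selected; and if neither of those cuts occurs, the two lemma applications force cuts at both $2t+1$ and $4t+1$, which are precisely the two boundaries of $\lit_i^2\lit_i^2$, so $c_i^2$ is selected. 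In every case at least one literal of $c_i$ is selected, as required.

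The argument has essentially no obstacle beyond getting the position arithmetic right; the only points worth double-checking are that each invocation of Lemma~\ref{l:cut-around-0} is legitimate — namely that the two strings flanking the relevant $0$ are genuine codewords while the remaining pieces are arbitrary binary strings, which the clause-string construction ensures — and that ``cuts at both positions delimiting $\lit_i^j\lit_i^j$'' is by definition the same as $c_i^j$ being selected. In particular one only needs the ``at least one of the two positions carries a cut'' half of Lemma~\ref{l:cut-around-0}, so neither the ``but not both'' clause nor the single-letter Observation is invoked here.
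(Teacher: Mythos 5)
Your proof is correct and follows essentially the same route as the paper: both apply Lemma~\ref{l:cut-around-0} to each $0$ separating consecutive codewords in the clause string and then observe that every resulting combination of forced cut points leaves at least one block $\lit_i^j\lit_i^j$ with cuts at both of its boundaries, i.e.\ super-selected. You merely spell out explicitly the case analysis that the paper delegates to Figure~\ref{fig:FF-MSP-clause-string}.
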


\begin{proof}
  By Lemma~\ref{l:cut-around-0}, there is exactly one cut point around
  each of the $0$'s between codewords in the clause string. This means
  that the $K$-partition of the clause string follows one of the
  patterns depicted in Figure~\ref{fig:FF-MSP-clause-string} with the
  exception that any of the selected strings depicted in the figure
  can be further partitioned, \textit{i.e.}, they are super-selected. The claim
  follows.
  \qed
\end{proof}

\paragraph{Construction of  enforcer strings:} We will use a slightly
simplified version of the
enforcer strings as those used for the unbounded alphabet:
\begin{equation*}
  \hat x_{v}^1\hat x_{v}^21\lit_{k}^{r}\lit_{k}^{r}1,\quad 
  1\lit_{i}^{p}\lit_{i}^{p}1\hat x_{v}^1\hat x_{v}^20 \quad \text{and}\quad 
  0\hat x_{v}^1\hat x_{v}^21\lit_{j}^{q}\lit_{j}^{q}1\,, 
\end{equation*}
where
$\lit_{i}^{p}$ and $\lit_{j}^{q}$ are codewords for the positive
literals of a variable $x_{v}$, $\lit_{k}^{r}$ is the codeword for
the negated literal of $x_{v}$ and $\hat x_{v}^1,\hat x_{v}^2$ are
codewords for the variable letters of $x_{v}$.  The difference from
the unbounded case is that the factor $\hat{x}_v^3 \hat{x}_v^3 0
\hat{x}_v^3 \hat{x}_v^3 1$ can be safely removed from the second and
third strings without changing the logic of the gadgets due to the
property described in Lemma~\ref{l:cut-around-0}.

\begin{observation}\label{o:no-prefix-suffix-super-selection}
  In any factor-free $K$-partition no super-selected string can be a
  prefix (suffix) of any other super-selected string.
\end{observation}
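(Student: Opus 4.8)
The plan is to derive the statement from one elementary consequence of factor-freeness, namely that a factor-free $K$-partition is in particular \emph{prefix-free} and \emph{suffix-free} on its selected strings: if a selected string $p_i$ were a prefix (or suffix) of a distinct selected string $p_j$, then $p_i$ would be a factor of $p_j$, a collision. Granting that, the whole proof is just a comparison of the \emph{boundary pieces} of the two super-selected strings under consideration, and the prefix and suffix cases are mirror images of each other.

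Concretely, I would argue by contradiction. Suppose $u=p_a p_{a+1}\cdots p_b$ and $v=p_c p_{c+1}\cdots p_d$ are super-selected strings of the same factor-free $K$-partition, occurring at distinct positions, with $u$ a prefix of $v$. Then $|u|\le|v|$, and since the occurrences are distinct the leftmost pieces differ in index, $a\neq c$. Now $p_a$ is a prefix of $u$ and $u$ is a prefix of $v$, so $p_a$ is a prefix of $v$; likewise $p_c$ is a prefix of $v$. Any two strings that are both prefixes of a common string are prefix-comparable, so one of $p_a,p_c$ is a prefix of the other; because $a\neq c$ these are two distinct selected strings, so by the fact recalled above we have a collision — contradiction. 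The suffix case is the same argument verbatim, applied to the rightmost pieces $p_b$ and $p_d$ as suffixes of $v$ (now with $b\neq d$).

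The one subtlety to nail down is the meaning of ``other'' super-selected string: two super-selected strings may perfectly well begin (or end) at the same cut point, one extending the other by extra pieces — for instance $p_a$ versus $p_a p_{a+1}$ — and then one is a prefix (or suffix) of the other with no collision forced. So the statement has to be read as quantifying over super-selected strings occupying \emph{distinct position ranges}, which is exactly the regime in which it is later invoked: when $\bar w$ is built by interleaving the MSP strings with connector gadgets, one compares a super-selected string lying inside one gadget region (a clause string, an enforcer string, a forbidden string, or a connector block) with one lying inside a different region, so their position ranges are automatically disjoint. Once that reading is fixed, the proof is the two boundary-piece comparisons above and presents no computational difficulty; pinning down the right quantification is the only place that requires any care.
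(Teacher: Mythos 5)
Your proof is correct; the paper gives no proof at all for this Observation, and your boundary-piece argument (the first, resp.\ last, selected pieces of the two super-selected strings are both prefixes, resp.\ suffixes, of the longer one, hence comparable, hence one is a factor of the other --- a collision under factor-freeness) is exactly the intended justification. Your point that the statement is literally false for two super-selected strings sharing a left (resp.\ right) endpoint, e.g.\ $p_a$ versus $p_a p_{a+1}$, is a genuine imprecision in the paper's wording, and your reading --- which is the one under which the Observation is actually applied in the enforcer-gadget lemma, where the two super-selected strings lie in different strings of $\mathcal{W}$ and so occupy disjoint ranges --- is the right fix.
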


\begin{lemma}\label{l:enforcer-gadget}
  In any factor-free $K$-partition of
  $\mathcal{C}\cup\mathcal{E}\cup\mathcal{F}$, the selected literals
  are consistent.
\end{lemma}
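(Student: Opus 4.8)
The plan is to argue one variable at a time. Fix $x_v$ with positive literal occurrences $c_i^p=c_j^q=x_v$ and negated occurrence $c_k^r=\neg x_v$, and write $z=\hat x_v^1\hat x_v^2$ for the length-$2t$ block of variable codewords occurring in each of the three enforcer strings $E_1=\hat x_v^1\hat x_v^2 1\,\lit_k^r\lit_k^r 1$, $E_2=1\,\lit_i^p\lit_i^p 1\,\hat x_v^1\hat x_v^2 0$, $E_3=0\,\hat x_v^1\hat x_v^2 1\,\lit_j^q\lit_j^q 1$. I will use that $K=2t+1$ while a codeword pair $\lit_\ell^m\lit_\ell^m$ has length $2t$; that no selected string is a single letter (each of $0,1$ occurs many times in $\mathcal{W}$); and that $c_\ell^m$ being selected means $\lit_\ell^m\lit_\ell^m$ is super-selected in the partition of the clause string of $c_\ell$, so that partition covers $\lit_\ell^m\lit_\ell^m$ by consecutive selected strings, whose leftmost and rightmost members are selected strings forming a prefix and a suffix of $\lit_\ell^m\lit_\ell^m$. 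Since $c_i^p$ (through $E_2$) and $c_j^q$ (through $E_3$) play mirror-image roles, it will suffice to rule out the simultaneous selection of $c_k^r$ with $c_i^p$, and of $c_k^r$ with $c_j^q$.

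The crux will be the implication \emph{``$c_k^r$ selected $\Rightarrow$ $z$ super-selected in $E_1$''}. I would apply Lemma~\ref{l:cut-around-0} to the $1$ between $\hat x_v^2$ and the first $\lit_k^r$ of $E_1$: exactly one of the two positions flanking this $1$ is a cut point. If it were the position just \emph{after} the $1$, then the length-$K$ suffix $\lit_k^r\lit_k^r 1$ of $E_1$ would be a union of consecutive selected strings; since $K>2t$, either that suffix is a single selected string, in which case it properly contains, as a factor, the leftmost selected piece of the clause-string partition of $\lit_k^r\lit_k^r$; or it is split, in which case its own leftmost piece is a proper prefix of $\lit_k^r\lit_k^r$, and then this piece and the leftmost clause-string piece of $\lit_k^r\lit_k^r$ are two prefixes of $\lit_k^r\lit_k^r$, hence two selected strings that are equal or with one a proper factor of the other. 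Either possibility is a collision, so the cut is just \emph{before} that $1$, which together with the trivial cut $0$ of $E_1$ super-selects the length-$2t$ prefix $z$. The same argument, applied to the $1$ just after the second $\lit_i^p$ of $E_2$ and now comparing rightmost pieces (suffixes of $\lit_i^p\lit_i^p$), will give \emph{``$c_i^p$ selected $\Rightarrow$ $z0$ super-selected in $E_2$''}, and its left--right mirror will give \emph{``$c_j^q$ selected $\Rightarrow$ $0z$ super-selected in $E_3$''}.

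Given these, I finish as follows. Suppose some variable had both its negated literal and a positive literal selected. In the first sub-case, $z$ is super-selected (in $E_1$) and $z0$ is super-selected (in $E_2$); but $z$ is a proper prefix of $z0$, contradicting Observation~\ref{o:no-prefix-suffix-super-selection}. In the second sub-case, $z$ is super-selected in $E_1$ and $0z$ in $E_3$, and $z$ is a proper suffix of $0z$, again contradicting Observation~\ref{o:no-prefix-suffix-super-selection}. Hence no variable has a positive and its negated literal simultaneously selected, i.e.\ the selected literals are consistent.

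The main obstacle will be proving the implications of the second paragraph cleanly, because of a fourfold case split: the exposed length-$K$ half of an enforcer string may be a single selected string or be partitioned further, and independently the clause-string partition of the relevant codeword pair may be a single piece or several. My intent is to funnel all four sub-cases into a collision through the single observation that the extreme (leftmost or rightmost) pieces arising are all prefixes (respectively suffixes) of the same codeword pair, and two prefixes (or two suffixes) of a word are always comparable under the factor relation. The rest --- positional bookkeeping inside $E_1,E_2,E_3$, the impossibility of selecting a lone $0$ or $1$, and Lemma~\ref{l:cut-around-0} itself --- should be routine.
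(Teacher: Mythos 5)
Your proposal is correct and follows essentially the same route as the paper's proof: Lemma~\ref{l:cut-around-0} forces a cut on exactly one side of the ``$1$'' separating the variable block from the literal block in each enforcer string, and Observation~\ref{o:no-prefix-suffix-super-selection} (which you partly re-derive inline by comparing extreme pieces of the two super-selections) turns the resulting super-selected strings $z$, $z0$, $0z$, $1\lit_i^p\lit_i^p$, $\lit_k^r\lit_k^r1$ into collisions. The only difference is presentational: you state the dichotomies as implications from ``literal selected'' and collide $z$ with $z0$ (resp.\ $0z$) at the end, whereas the paper runs the contrapositive chain through the enforcer strings directly.
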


\begin{proof}
  By Lemma~\ref{l:cut-around-0}, either $\hat x_{v}^1\hat x_{v}^2$ or
  $\lit_{k}^{r}\lit_{k}^{r}1$ is super-selected in the first enforcer
  string. Note that if $\hat x_{v}^1\hat x_{v}^2$ is super-selected then
  by Observation~\ref{o:no-prefix-suffix-super-selection},
  $\hat x_{v}^1\hat x_{v}^20$ and $0\hat x_{v}^1\hat x_{v}^2$ cannot be
  super-selected.  Hence, if either $\hat x_{v}^1\hat x_{v}^20$ or
  $0\hat x_{v}^1\hat x_{v}^2$ is super-selected then in the first enforcer
  string $\lit_{k}^{r}\lit_{k}^{r}1$ is super-selected, and hence
  literal $c_{k}^{r}$ cannot be selected.

  By Lemma~\ref{l:cut-around-0}, either $1\lit_{i}^{p}\lit_{i}^{p}$ or
  $\hat x_{v}^1\hat x_{v}^20$ is super-selected in the second enforcer
  string. In the first case, by
  Observation~\ref{o:no-prefix-suffix-super-selection}, literal
  $c_{i}^{p}$ cannot be selected; in the second case, by the above
  argument, literal $c_{k}^{r}$ cannot be selected. Similarly, the
  last enforcer string ensures that literals $c_{j}^{q}$ and
  $c_{k}^{r}$ cannot be selected at the same time.
  \qed
\end{proof}

\begin{theorem}\label{t:FF-MSP-2}
  Factor-Free Multiple String Partition Problem for the binary
  alphabet (FF-MSP(2)) is \NP-complete.
\end{theorem}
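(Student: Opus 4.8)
The plan is to mirror the proof of Theorem~\ref{thm:FF-MSP-NPC} for the unbounded alphabet, substituting the binary clause, enforcer and forbidden gadgets of this section for the original ones, and then to pay off the debt incurred when the alphabet was shrunk: to check that the $K$-partition associated with a satisfying assignment introduces no \emph{new} collisions among the selected binary strings.

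First I would record the routine parts. FF-MSP(2) is in \NP{} by the usual guess-and-check argument: guess a $K$-partition of $\mathcal C\cup\mathcal E\cup\mathcal F$ and verify in polynomial time that no selected string is a factor of another. The reduction is polynomial, since with $t=3m+2n+6$ the codewords $0(1)^{i}0(1)^{t-3-i}0$ suffice for all literal and variable letters, each clause string has length at most $3t+2$, each enforcer string at most $4t+3$, $|\mathcal F|=2$, and $K=2t+1$, so $||\bar{\mathcal W}||=O((m+n)^{2})$.

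Next, the easy direction: from a valid $K$-partition of $\mathcal C\cup\mathcal E\cup\mathcal F$ to a satisfying assignment. Lemma~\ref{l:clause-gadget} gives at least one selected literal per clause, and Lemma~\ref{l:enforcer-gadget} (through Lemma~\ref{l:cut-around-0} and Observation~\ref{o:no-prefix-suffix-super-selection}) guarantees these selections are consistent, so a positive and the negated literal of a variable are never simultaneously selected. Assigning each variable the value of its selected literal, and an arbitrary value to variables none of whose literals is selected, satisfies $\phi$.

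Finally, the hard direction: from a satisfying assignment to a valid $K$-partition. For each clause pick one literal made true and super-select its codeword pair; by Lemma~\ref{l:cut-around-0} the remainder of the clause string is forced to be cut around its $0$'s, and choosing the only cuts that keep every piece of length $\le K$ while never isolating a single delimiter letter yields pieces of the shapes $uu$, $0uu$, $uu0$ of Figure~\ref{fig:FF-MSP-clause-string} (for codewords $u$). Partition each enforcer string along the branch of Lemma~\ref{l:enforcer-gadget} compatible with which of its literals are selected in $\mathcal C$, again taking the unique continuation keeping all pieces of length $\le K$ without isolating a $0$ or $1$, so its pieces have the shapes $0uu$, $uu0$, $1uu$, $uu1$, $\hat x_v^1\hat x_v^2$, $1\hat x_v^1$, $\hat x_v^2 0$, $1u$, $u1$ and the like, as in Figure~\ref{fig:FF-MSP-enforcer-str}. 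It then remains to verify that this partition of $\bar{\mathcal W}$ is factor-free, and this is the main obstacle. I would argue from the rigidity of codewords: every codeword begins and ends with $0$, contains exactly three $0$'s, and has both maximal $1$-runs of length at least $2$; hence distinct codewords are incomparable under the factor relation, a short forbidden string ($000$ or $010$) is not a factor of any longer selected piece, and the bordering letters together with Lemma~\ref{l:cut-around-0} force any putative containment to align on codeword boundaries, reducing every cross-pair comparison to one of codewords, which fails by distinctness. Since no single delimiter letter is ever selected, it cannot be a factor of a longer piece either. Combining the three parts gives \NP-completeness of FF-MSP(2). \qed
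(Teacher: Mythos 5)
Your overall architecture coincides with the paper's: \NP{} membership and polynomiality are routine, the direction from a valid $K$-partition to a satisfying assignment is exactly Lemmas~\ref{l:clause-gadget} and~\ref{l:enforcer-gadget}, and the converse direction builds the partition from a satisfying assignment and then must verify factor-freeness. The gap is in that verification, which is the only non-routine content of this theorem. You dispose of all cross-comparisons by asserting that ``the bordering letters together with Lemma~\ref{l:cut-around-0} force any putative containment to align on codeword boundaries, reducing every cross-pair comparison to one of codewords, which fails by distinctness.'' Two problems. First, Lemma~\ref{l:cut-around-0} constrains cut points of a partition; it says nothing about one already-selected string occurring inside another, so it cannot do the work you assign to it (e.g., ruling out $1a$ as a factor of $cd$ starting at an interior $1$ of $c$ requires a direct combinatorial argument; the paper counts occurrences of $101$ and of the pattern $10(1)^{+}01$). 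Second, and more seriously, the principle is simply false for shapes that actually occur. Among the selected strings are codewords flanked by a single delimiter letter, such as $0a$ (from the enforcer string $0\hat x_v^1\hat x_v^2 1\cdots$) and $a0$ (from $\cdots\hat x_v^1\hat x_v^2 0$), alongside two-codeword blocks $ab$, $\sigma ab$, $ab\sigma$. Since every codeword begins and ends with $0$, the string $0a$ is \emph{literally a suffix} of any block of the form $ca$, and $a0$ is a prefix of any block $ad$: the containment aligns perfectly on codeword boundaries and the resulting codeword comparison is $a$ against $a$, so distinctness buys you nothing.

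The paper closes exactly these cases not by string combinatorics but by a gadget-specific inventory: the only codewords ever selected with a single flanking letter are the variable codewords $\hat x_v^1,\hat x_v^2$, and by inspection of the clause and enforcer gadgets these never occupy the offending position (second, respectively first) of any selected two-codeword block. Without that observation, or an equivalent one, your plan stalls precisely at the step you flag as ``the main obstacle.'' The rest of your outline (membership in \NP, size bounds, the backward direction, and the use of Lemma~\ref{l:cut-around-0} to force the intended cuts when constructing the partition) matches the paper and is fine.
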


\begin{proof}
  It follows by Lemmas~\ref{l:clause-gadget}
  and~\ref{l:enforcer-gadget} that if there is a factor-free
  $K$-partition of
  $\mathcal{W}=\mathcal{C}\cup\mathcal{E}\cup\mathcal{F}$, then the
  selected literals produce a satisfying assignment for the 3SAT(3)
  instance $\phi$.

  Now, assume that there is a satisfying assignment for $\phi$.
  Select a literal in each clause which satisfies it and partition
  all clause and enforcer strings accordingly, ensuring literals
  selected in the clause strings are not selected in the enforcer
  strings.
  We will show that this $K$-partition $P$ is
  factor-free. Obviously, the forbidden strings $000$ and $010$ are not factors of
  any selected string. In the clause strings, the $K$-partition $P$ selects
  the following types of strings: $0aa$, $aa0$ and $aa$, where $a$ is a
  codeword. In the enforcer string it selects the following types of
  strings: $1aa$, $aa1$, $0ab$, $ab0$, $ab1$, $ab$, $0a$, $a0$, $1a$,
  $a1$, where $a,b$ are distinct codewords. It follows by the proof in
  the unbounded case that two strings where one is obviously a factor
  of the other, like $aa$ and $0aa$, are not selected at the same
  time. Hence, it is enough to show that no new collisions are
  introduced by mapping the original letters to the binary alphabet.

  Obviously, the strings of the same length are different as all
  codewords are different and the strings of different types (e.g.,
  $0ab$ and $ab0$) would differ in the first two or last two letters.
  String $ab$ (where we can have $a=b$) is not a factor of some
  $\sigma cd$ ($cd\sigma$), where $a\ne c$, $b\ne d$ are codewords and
  $\sigma\in\{0,1\}$, as the $00$ in the middle of $ab$ would have to
  exactly match $00$ in the middle of $\sigma cd$ ($cd\sigma$), which
  would imply $ab=cd$. 

  Finally, we will show that $\sigma a$
  (and similarly, for $a\sigma$) is not a factor of $cd$ and $\sigma' cd$
  and $cd\sigma'$, where $a,c,d$ are codewords and $\sigma,\sigma'$ are
  letters, unless $\sigma=\sigma'$ and $a=c$. First, assume that
  $\sigma=1$. Then $\sigma a$ contains two $101$ factors with only
  $1$'s between them. String $cd$ contains exactly two $101$ factors but there
  is a $00$ factor between their occurrences in $cd$, and the same is true
  for $\sigma'cd$ and $cd\sigma'$ if $\sigma'=0$. Hence, assume that
  $\sigma'=1=\sigma$. Now, $cd\sigma'$ contains a factor
  $10(1)^{+}01$, but only at the very end, while in $\sigma a$ this
  factor is followed by at least two letters. Hence, the only
  possibility is that $\sigma a$ is a factor of $\sigma cd$. However,
  pattern $10(1)^{+}01$ appears only at the beginning of $\sigma cd$,
  and hence, $\sigma a$ would have to be a prefix of $\sigma
  cd$ and then $a=c$.

  Second, assume that $\sigma=0$. Then $\sigma a$ starts with
  $00$. Similar case analysis would show that $\sigma a$ can only be a
  factor of $\sigma ad$, or of $ca$, $\sigma'ca$ or
  $ca\sigma'$. However, string $0a$ can be only selected from the
  second enforcer string $0xy1bb1$ and $x$ never appears in the second
  position of any selected string of the type $cd$, $\sigma' cd$,
  $cd\sigma'$.
  \qed
\end{proof}

\subsection{Factor-Free String Partition with Binary Alphabet}
\label{sec:factor-free-binary-alphabet}

We will first design a sequence of strings which have to be selected no
matter where they appear in the string we are partitioning.

\begin{lemma}\label{l:delimiters}
  Let $K\ge1$ and for any $i\le K$, let $d_{i}=(1)^{i}0(1)^{K-1-i}$.
  Then any factor-free $K$-partition of
  \begin{equation*}
    w=u_{1}d_{0}(1)^{K}d_{0}^{\mirror}u_{2}d_{1}d_{1}^{\mirror}\dots
    d_{N-2}d_{N-2}^{\mirror}u_{N},
  \end{equation*}  
  where $N\le K/2$ and
  $u_{1},\dots,u_{N}$ are arbitrary strings, selects the following
  strings $(1)^{K},d_{i},d_{i}^{\mirror}$, for every $i=0,\dots,N-2$.
\end{lemma}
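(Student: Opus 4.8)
The plan is to prove the statement by induction on $j$, showing that every factor-free $K$-partition of $w$ selects $d_j$ and $d_j^{\mirror}$ for $j=0,1,\dots,N-2$ (and, along the way, $(1)^K$). For the base case $j=0$, note that $d_0=0(1)^{K-1}$ and $d_0^{\mirror}=(1)^{K-1}0$, so $w$ has $0(1)^{3K-2}0$ as a factor; Lemma~\ref{lem:FF-SP-connector} applied with $x=1$ and $\alpha=\beta=0$ then forces the partition to select $d_0$, $(1)^K$ and $d_0^{\mirror}$. The single most useful consequence, which I will use repeatedly, is that once $(1)^K$ is selected no other run of $1$'s can be selected: any $(1)^m$ with $m<K$ is a factor of $(1)^K$, and a second copy of $(1)^K$ collides with the first. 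In particular no selected block consists only of $1$'s except $(1)^K$ itself.

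For the inductive step, fix $j$ with $1\le j\le N-2$ and assume $(1)^K$ together with $d_0,d_0^{\mirror},\dots,d_{j-1},d_{j-1}^{\mirror}$ are all selected. The relevant factor of $w$ is $d_jd_j^{\mirror}=(1)^j\,0_L\,(1)^{2(K-1-j)}\,0_R\,(1)^j$; write $R=(1)^{2(K-1-j)}$ for its middle run, and observe that the $j$ letters immediately before $0_L$ and the $j$ letters immediately after $0_R$ are exactly the runs $(1)^j$ of $d_j$ and of $d_j^{\mirror}$. Since $j\le N-2\le K/2-2$, we have $|R|\ge K+2>K$, so $R$ contains a cut point; and by the previous paragraph no selected block lies entirely inside $R$. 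Hence $R$ is covered by exactly two blocks: one of the form $p_L=\sigma\,0_L\,(1)^a$, where $\sigma$ is the part of $p_L$ to the left of $0_L$, and one of the form $p_R=(1)^b\,0_R\,\tau$, where $a+b=2(K-1-j)$ and $|p_L|,|p_R|\le K$. Note that whenever $|\sigma|\le j$ the string $\sigma$ must equal $(1)^{|\sigma|}$, since it is a suffix of the run $(1)^j$ preceding $0_L$; likewise for $\tau$.

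It remains to pin $p_L$ and $p_R$ down, and this case analysis is the heart of the proof. First, $a<b$ is impossible: it would give $b>K-1-j$, hence $|\tau|\le K-1-b<j$ and $\tau=(1)^{|\tau|}$, so $p_R=(1)^b0(1)^{|\tau|}$ with $|\tau|\le j-1$ — but then $p_R$ is a suffix of the already-selected $d_{|\tau|}^{\mirror}=(1)^{K-1-|\tau|}0(1)^{|\tau|}$ (using $b+|\tau|\le K-1$), a collision. So $a\ge b$, whence $a\ge K-1-j$ and $|\sigma|\le K-1-a\le j$, so $\sigma=(1)^s$ with $s\le j$. If $s\le j-1$ then $p_L=(1)^s0(1)^a$ is a prefix of the already-selected $d_s=(1)^s0(1)^{K-1-s}$ (using $a\le K-1-s$), a collision; hence $s=j$, which together with $|p_L|\le K$ forces $a=K-1-j$, so $p_L=d_j$ and $b=K-1-j$. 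Now $|p_R|\le K$ gives $|\tau|\le j$, and the same argument that eliminated $s\le j-1$ — with $d_t^{\mirror}$ in place of $d_s$ — eliminates $|\tau|\le j-1$, leaving $|\tau|=j$ and $p_R=(1)^{K-1-j}0(1)^j=d_j^{\mirror}$. This completes the induction. The main obstacle is precisely this inductive step: the delicate part is recognizing that any ``misaligned'' choice of $p_L$ or $p_R$ is short enough on one side to be a factor of one of the previously forced delimiters $d_0,\dots,d_{j-1}$ or their mirrors, and that the hypothesis $N\le K/2$ is exactly what keeps these delimiters long (so the inequalities above hold) and pairwise non-colliding.
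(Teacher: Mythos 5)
Your proof is correct and takes essentially the same route as the paper's: induction on the delimiter index, base case via Lemma~\ref{lem:FF-SP-connector}, and an inductive step that rules out misplaced cut points by exhibiting a collision with the already-selected $(1)^K$ or with a previously forced $d_t$ or $d_t^{\mirror}$. If anything your case analysis is the more complete one --- the paper's ``Case 1'' silently assumes the offending block stays inside the run of ones, whereas your $a<b$ and $|\tau|\le j-1$ subcases also handle a block that crosses the second $0$.
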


\begin{proof}
  Let $P$ be a factor-free multiple $K$-partition of $w$.  We will
  show by induction on $i$ that $(1)^{K}$, $d_{0},\dots,d_{i}$ and
  $d_{0}^{\mirror},\dots,d_{i}^{\mirror}$ are selected. The base case
  $i=0$ follows by Lemma~\ref{lem:FF-SP-connector}.  For the inductive
  step, assume that $(1)^{K}$, $d_{0},\dots,d_{i-1}$ and
  $d_{0}^{\mirror},\dots,d_{i-1}^{\mirror}$ are selected, where
  $i\le N-2$. We will show that $d_{i}$ and $d_{i}^{\mirror}$ are also
  selected. The factor $d_{i}d_{i}^{\mirror}$ of $w$ has length $2K$,
  hence, there is at least one cut point inside it.  Let $j$ be the
  first such cut point. Obviously, $j\le K$. Assume that $j<K$. Let
  $w_{p}$ be the selected string starting at cut point $j$.  We will
  consider two cases:

  \emph{Case 1.}  $j\ge i+1$. Then $w_{p}$ is a prefix of $(1)^{K}$,
  a contradiction since $(1)^{K}$ is already selected.

  \emph{Case 2.} $j\le i$. Then $w_{p}$ is a prefix of $d_{i-j}$, a
  contradiction since $d_{i-j}$ is already selected.

  Hence, the first cut point inside the factor $d_{i}d_{i}^{\mirror}$
  of $w$ is at position $K$. By symmetrical argument, this is also the
  last such cut point. It follows that both $d_{i}$ and
  $d_{i}^{\mirror}$ are selected.
  \qed
\end{proof}

\begin{corollary}\label{c:delimiters}
  Let $K\ge1$ and for any $i\le K$, let $d_{i}=(1)^{i}0(1)^{K-1-i}$.
  Consider the string
  \begin{equation*}
    w=u_{1}d_{0}(1)^{K}d_{0}^{\mirror}u_{2}d_{1}d_{1}^{\mirror}\dots
    d_{N-2}d_{N-2}^{\mirror}u_{N},
  \end{equation*}  
  where $N\le K/2$ and $u_{1},\dots,u_{N}$ are arbitrary strings.  If
  the string $w$ has a factor-free $K$-partition then the sequence of
  strings $u_{1},\dots,u_{N}$ has a factor-free multiple
  $K$-partition. On the other hand, if the sequence
  $u_{1},\dots,u_{N}$ has a factor-free multiple $K$-partition such
  that each selected string contains at least two $0$'s then $w$ has a
  factor-free $K$-partition.
\end{corollary}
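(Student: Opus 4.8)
The plan is to derive both implications almost directly from Lemma~\ref{l:delimiters}, the only extra ingredient being a bookkeeping argument based on the number of $0$'s occurring in a selected string. Throughout, note that $d_0(1)^K d_0^{\mirror}=0(1)^{3K-2}0$ and $d_id_i^{\mirror}$ has length $2K$, so Lemma~\ref{l:delimiters} (and Lemma~\ref{lem:FF-SP-connector}) applies to each separator block of $w$.

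For the forward direction I would take a factor-free $K$-partition $P$ of $w$ and invoke Lemma~\ref{l:delimiters}: not only are $(1)^K$ and all $d_i,d_i^{\mirror}$ selected, but its proof locates the first and the last cut point inside each separator block, so each block $d_0(1)^Kd_0^{\mirror}$ is partitioned exactly as $d_0,(1)^K,d_0^{\mirror}$ and each $d_id_i^{\mirror}$ exactly as $d_i,d_i^{\mirror}$. In particular $P$ has a cut point at every boundary between a factor $u_j$ and an adjacent separator (and the two endpoints of $w$ are trivial cut points), so $P$ restricts to a $K$-partition of the sequence $u_1,\dots,u_N$; this restriction is factor-free because $P$ is, giving a factor-free multiple $K$-partition of $u_1,\dots,u_N$.

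For the backward direction, given a factor-free multiple $K$-partition $Q$ of $u_1,\dots,u_N$ in which every selected string contains at least two $0$'s, I would build $P$ by partitioning each $u_j$ as in $Q$, partitioning the first separator as $d_0,(1)^K,d_0^{\mirror}$, and partitioning each later separator $d_id_i^{\mirror}$ as $d_i,d_i^{\mirror}$. The selected strings of $P$ fall into three tiers according to their number of $0$'s: the $u_j$-pieces (at least two $0$'s), the delimiters $d_i$ and $d_i^{\mirror}$ (exactly one $0$, and length exactly $K$), and $(1)^K$ (no $0$). No collision can cross tiers: a string with at least two $0$'s neither contains nor is contained in one with at most one $0$, and $(1)^K$ cannot occur in a length-$\le K$ string that already carries two $0$'s (too few consecutive $1$'s) nor can such a string occur in $(1)^K$. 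Within the one-$0$ tier every string has length exactly $K$, so a collision would force equality; $d_i=d_{i'}$ and $d_i^{\mirror}=d_{i'}^{\mirror}$ each force $i=i'$, while $d_i=d_{i'}^{\mirror}$ would require $i+i'=K-1$, which is impossible since $i,i'\le N-2\le K/2-2$ give $i+i'\le K-4$. Finally $d_0,(1)^K,d_0^{\mirror}$ are pairwise non-colliding and $Q$ being factor-free settles the $u_j$-tier, so $P$ is factor-free.

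I expect the only delicate point to be the very first step of the forward direction: extracting from Lemma~\ref{l:delimiters} that $P$ actually has cut points at the $u_j$/separator boundaries — i.e.\ that the separators are not absorbed into neighbouring material — rather than merely that the strings $d_i,d_i^{\mirror},(1)^K$ appear somewhere in $P$. This is implicit in the proof of Lemma~\ref{l:delimiters}, where the first and last cut points inside each separator block are pinned down, so I would state this consequence explicitly before restricting $P$. Everything else is the routine $0$-counting case analysis sketched above.
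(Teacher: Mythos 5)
Your proof is correct and follows the same route as the paper: the forward direction is read off from Lemma~\ref{l:delimiters} (whose proof indeed pins down the cut points at the $u_j$/delimiter boundaries, the point you rightly flag), and the backward direction is the paper's observation that each delimiter contains only one $0$ while every selected string of the $u_j$'s contains at least two. Your write-up is in fact more careful than the paper's two-sentence argument, e.g.\ in checking that $d_i\ne d_{i'}^{\mirror}$ and that $(1)^K$ collides with nothing, but the underlying idea is identical.
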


\begin{proof}
  The first implication follows immediately by
  Lemma~\ref{l:delimiters}. The second implication follows by the
  fact that each delimiter contains only one $0$; hence, none of the
  selected strings in $u_{1},\dots,u_{N}$ can be a factor of a delimiter.
  \qed
\end{proof}

As the immediate consequence of Theorem~\ref{t:FF-MSP-2} and
Corollary~\ref{c:delimiters}, we have the following.

\begin{theorem}\label{t:FF-SP-2}
  Factor-free String Partition Problem for the binary alphabet
  (FF-SP(2)) is \NP-complete.
\end{theorem}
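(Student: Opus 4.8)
The statement is an immediate corollary of the hardness of FF-MSP(2) (Theorem~\ref{t:FF-MSP-2}) together with the string-to-set gadget of Corollary~\ref{c:delimiters}; the only real work is to confirm that the side condition in that corollary is met. Membership in \NP{} is clear: a nondeterministic algorithm guesses a $K$-partition and verifies in polynomial time that no selected string is a factor of another. For hardness I would start from the FF-MSP(2) instance $\mathcal{W}=\mathcal{C}\cup\mathcal{E}\cup\mathcal{F}$ built from a 3SAT(3) formula $\phi$ in the proof of Theorem~\ref{t:FF-MSP-2}, with partition bound $K=2t+1$, list its strings as $u_{1},\dots,u_{N}$, and feed them into Corollary~\ref{c:delimiters} to obtain the single binary string
\[
  w=u_{1}d_{0}(1)^{K}d_{0}^{\mirror}u_{2}d_{1}d_{1}^{\mirror}\dots d_{N-2}d_{N-2}^{\mirror}u_{N}.
\]
The hypothesis $N\le K/2$ of the corollary is easily arranged: $N=|\mathcal{W}|$ is linear in $m+n$, and since Theorem~\ref{t:FF-MSP-2} holds for every sufficiently large codeword length $t$, one may take $t$ large enough that $N\le (2t+1)/2=K/2$.

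The substantive point is the hypothesis of the \emph{second} implication of Corollary~\ref{c:delimiters}: one needs a factor-free multiple $K$-partition of $\mathcal{W}$ in which \emph{every} selected string contains at least two $0$'s---this is exactly what stops a selected piece of some $u_j$ from being a factor of a delimiter, each of which carries only a single $0$. For this I would reuse verbatim the partition built in the proof of Theorem~\ref{t:FF-MSP-2} from a satisfying assignment of $\phi$: every string it selects is either one of the two forbidden strings $000,010$ (with three and two $0$'s respectively) or a string of one of the types $0aa,\,aa0,\,aa,\,1aa,\,aa1,\,0ab,\,ab0,\,ab1,\,ab,\,0a,\,a0,\,1a,\,a1$, each of which contains a complete codeword $0(1)^{i}0(1)^{t-3-i}0$ and hence at least three $0$'s. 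So the side condition is satisfied automatically.

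Given these observations the equivalence is a direct appeal to the two cited results: if $\phi$ is satisfiable then $\mathcal{W}$ has the special factor-free $K$-partition just described, so by the second implication of Corollary~\ref{c:delimiters} the string $w$ has a factor-free $K$-partition; conversely, a factor-free $K$-partition of $w$ yields, by the first implication of Corollary~\ref{c:delimiters}, a factor-free multiple $K$-partition of $\mathcal{W}$, and then $\phi$ is satisfiable by Theorem~\ref{t:FF-MSP-2}. The reduction is polynomial since each $d_i$ and the block $(1)^{K}$ has length $K$ and there are $O(N)$ of them, so $|w|=||\mathcal{W}||+O(NK)$. The only step that needs genuine care---and the main (modest) obstacle---is verifying the two-$0$'s property of the canonical good partition; everything else is bookkeeping dictated by the shape of the codewords.
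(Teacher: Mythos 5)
Your proposal is correct and follows exactly the paper's route: the paper states this theorem as an immediate consequence of Theorem~\ref{t:FF-MSP-2} and Corollary~\ref{c:delimiters}, which is precisely the reduction you describe. Your explicit verification of the side conditions ($N\le K/2$ and that every selected string in the canonical partition contains at least two $0$'s, versus one $0$ per delimiter) fills in details the paper leaves implicit, and it checks out.
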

}

\long \def \PSF {
\section{Prefix/Suffix-Free String Partition Problems}
\label{sec:prefix-free-part}

All proofs presented in this section are for the prefix-free string
partition problems. The results for the suffix-free string problems
follow by symmetry.

\subsection{Prefix/Suffix-Free Multiple String Partition with Unbounded Alphabet}
\label{sec:prefix-free-mult-unbounded-alphabet}

Similar to the equality-free and factor-free cases, we will show a polynomial reduction
from an arbitrary instance of 3SAT(3).

Let $\phi$ be an instance of 3SAT(3), with set
$C=\{c_{1},\dots,c_{m}\}$ of clauses, and set $X=x_{1},\dots,x_{n}$ of
variables.  We shall define an alphabet $\Sigma$ and construct a set
of strings $\mathcal{W}$ over $\Sigma^*$, such that $\mathcal{W}$ has
a prefix-free 2-partition if and only if $\phi$ is satisfiable.
Let $|c_{i}|$ denote the number of literals contained in the clause
$c_{i}$ and let $c_{i}^{1},\dots,c_{i}^{|c_i|}$ be the literals of clause
$c_i$.

We construct $\mathcal{W}$ to be the union of three sets of strings:
clause strings ($\mathcal{C}$), enforcer strings ($\mathcal{E}$) and
forbidden strings ($\mathcal{F}$) with the same function as in the
equality-free and factor-free cases. 
We construct an alphabet $\Sigma$, formally defined below, which
includes four letters for each variable, a letter for each literal
occurrence in the clauses and the letter $\dollar$.
\begin{align*}
  \Sigma &= \{\var_i^j;\; x_i \in X \wedge 1 \leq j \leq 4\} \cup 
  \{\lit_{i}^{j};\; c_i \in C \wedge 1 \leq j \leq |c_i|\} \cup \{\dollar\}
\end{align*}

Note that $|\Sigma|$ is linear in the size of
the 3SAT(3) problem $\phi$ (at most $4n + 3m + 1$).

\paragraph{Construction of forbidden strings:} 

The forbidden set, $\mathcal{F}$, consists of the single string $\dollar
\dollar$.  Without loss of generality, we refer to this as the
\emph{forbidden string}.

\begin{lemma}\label{lem:PF-MSP-forb}
  No factor of the forbidden string
  can be selected in $\mathcal{C}$ nor in $\mathcal{E}$.
\end{lemma}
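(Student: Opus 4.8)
The plan is to show that the forbidden string $\dollar\dollar$ is forced to be selected as a single block, and then let prefix-freeness rule out every proper occurrence of $\dollar$ and every further occurrence of $\dollar\dollar$. This mirrors the forbidden-string gadget already used in the equality-free case (Observation~\ref{obs:EF-MSP-forb}).

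First I would note that $\dollar\dollar \in \mathcal{F} \subseteq \mathcal{W}$, so any $2$-partition of $\mathcal{W}$ must partition this string; since $K = 2$, the only two possibilities are the single block $\dollar\dollar$ and the two blocks $\dollar,\dollar$. The latter is not prefix-free, because $\dollar$ is a prefix of $\dollar$. Hence in every valid prefix-free $2$-partition of $\mathcal{W}$ the string $\dollar\dollar$ is selected as a whole.

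Next I would apply prefix-freeness to the remaining nonempty factors of $\dollar\dollar$, namely $\dollar$ and $\dollar\dollar$ itself. If $\dollar$ were selected anywhere in $\mathcal{C}$ or $\mathcal{E}$, the selected strings $\dollar$ and $\dollar\dollar$ would collide, since $\dollar$ is a prefix of $\dollar\dollar$; and if $\dollar\dollar$ were selected in $\mathcal{C}$ or $\mathcal{E}$, the two copies of $\dollar\dollar$ would collide trivially (a string is a prefix of itself). Since the empty string is never selected, this exhausts the factors of the forbidden string and establishes the lemma.

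I do not expect any real obstacle; the only point requiring care is the degenerate partition $\dollar,\dollar$, where one must invoke the convention that in the prefix-free setting a string counts as a prefix of itself, so that this partition is already a collision. (Without that convention one would instead need a longer forbidden block, as is done in the factor-free binary reduction.)
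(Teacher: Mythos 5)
Your proof is correct and follows essentially the same route as the paper: the two-block partition $\dollar,\dollar$ of the forbidden string is itself a prefix collision (the paper's definition of prefix, $w=x\beta$ with $\beta$ possibly empty, makes a string a prefix of itself, so your worry about the convention is already settled), forcing $\dollar\dollar$ to be selected whole, after which any factor selected in $\mathcal{C}$ or $\mathcal{E}$ collides with it. The paper's proof is just a terser statement of the same argument.
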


\begin{proof}
  No proper factor of the forbidden string can be selected without
  creating a collision.  Therefore, the entire string must be
  selected.  Regardless of the construction of $\mathcal{C}$ and
  $\mathcal{E}$, it follows that in any valid partition, since
  $\dollar \dollar$ is selected in $\mathcal{F}$, a factor of it cannot
  be selected in $\mathcal{C}$ nor in $\mathcal{E}$.
  \qed
\end{proof}

\paragraph{Construction of clause strings:}  
For each clause $c_i \in C$, construct the \emph{$i$-th clause string}
to be $\lit_{i}^1 \dollar \lit_{i}^2$, if $|c_i|
= 2$ and $\lit_{i}^1 \dollar \lit_{i}^2 \dollar \lit_{i}^3$, if $|c_i|
= 3$. 

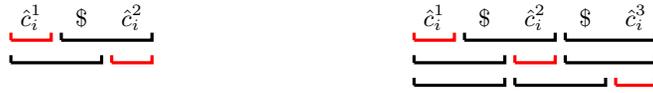
\begin{figure}
  \centering \small

  \begin{tikzpicture}[xscale=0.67,scale=1.0]
    \begin{scope}

      \underscore[red]{1}{4}{0}
      \underscore{2}{4}{1}
      
      \underscore{1}{3.7}{1} 
      \underscore[red]{3}{3.7}{0}
       
      \draw (1,4) +(0,0) node{\ensuremath{\lit_{i}^1}};
      \draw (1,4) +(1,0) node{\ensuremath{\dollar}};
      \draw (1,4) +(2,0) node{\ensuremath{\lit_{i}^2}};
    \end{scope}
    
    \begin{scope}[shift={(8,0)}]

      \underscore[red]{1}{4}{0}
      \underscore{2}{4}{1}
      \underscore{4}{4}{1}

      \underscore{1}{3.7}{1} 
      \underscore[red]{3}{3.7}{0}
      \underscore{4}{3.7}{1}

      \underscore{1}{3.4}{1} 
      \underscore{3}{3.4}{1}
      \underscore[red]{5}{3.4}{0}
      
      \draw (1,4) +(0,0) node{\ensuremath{\lit_{i}^1}};
      \draw (1,4) +(1,0) node{\ensuremath{\dollar}};
      \draw (1,4) +(2,0) node{\ensuremath{\lit_{i}^2}};
      \draw (1,4) +(3,0) node{\ensuremath{\dollar}};
      \draw (1,4) +(4,0) node{\ensuremath{\lit_{i}^3}};
    \end{scope}
    
  \end{tikzpicture}
  \caption{The 2-literal clause gadget (left) and 3-literal clause
    gadget (right) used in the reduction from 3SAT(3) to PF-MSP.
    Shown below each gadget are all valid partitions.  \emph{Selected}
    literals of a partition are shown
    in \textcolor{red}{red}.}
  \label{fig:PF-MSP-clause-string}
\end{figure}

\begin{lemma}\label{lem:PF-MSP-clause-string} 
  Given that no factor of the forbidden string
  is selected in $\mathcal{C} \cup \mathcal{E}$,
  exactly one literal must be selected for each clause string in any
  prefix-free 2-partition of $\mathcal{W}$.
\end{lemma}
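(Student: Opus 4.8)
The plan is to reuse the argument from the equality-free case (Lemma~\ref{lem:EF-MSP-clause-string}), since the $\dollar$-delimited clause strings here play exactly the same role as the $\boxminus$-delimited ones there. The only consequence of the hypothesis I would actually invoke is that a lone $\dollar$ is never a selected string: $\dollar$ is a factor of the forbidden string $\dollar\dollar$, so by Lemma~\ref{lem:PF-MSP-forb} the one-letter string $\dollar$ cannot be selected in $\mathcal{C}$. Hence, in the $2$-partition that a prefix-free $2$-partition of $\mathcal{W}$ induces on the $i$-th clause string, every occurrence of $\dollar$ must lie in a block of length exactly $2$; and because the $\dollar$'s of a clause string are pairwise non-adjacent (they are separated by literal letters), each such block pairs a $\dollar$ with a neighbouring literal letter, i.e.\ it has the form $\lit_i^j\dollar$ or $\dollar\lit_i^j$. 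In particular no block can consist of two literal letters, and no block can be a single $\dollar$.

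Next I would split on $|c_i|$. If $|c_i|=2$, the clause string $\lit_i^1\dollar\lit_i^2$ has length $3$; its single $\dollar$ is absorbed either to the left or to the right, and in both cases the one remaining literal letter is forced to form a singleton block, so exactly one literal is selected. If $|c_i|=3$, the clause string $\lit_i^1\dollar\lit_i^2\dollar\lit_i^3$ has length $5$ with $\dollar$'s in positions $2$ and $4$; these two $\dollar$'s cannot both be paired with the only letter lying between them (position $3$), so enumerating the remaining possibilities leaves exactly the three partitions $[\lit_i^1\dollar][\lit_i^2\dollar][\lit_i^3]$, $[\lit_i^1\dollar][\lit_i^2][\dollar\lit_i^3]$, and $[\lit_i^1][\dollar\lit_i^2][\dollar\lit_i^3]$ (cf.\ Figure~\ref{fig:PF-MSP-clause-string}), each of which selects exactly one literal letter. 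Combining the two cases yields the statement for every clause string, and since the argument is purely local to a single clause string it holds for all clause strings simultaneously in any prefix-free $2$-partition of $\mathcal{W}$.

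I do not expect a genuine obstacle here; the proof is a short case analysis. The only points needing care are verifying that the three-partition enumeration for the $3$-literal string is exhaustive, and that a selected block containing two literal letters is impossible — both of which are immediate from $K=2$ together with the non-adjacency of literal letters inside a clause string. Note that prefix-freeness of the partition is not actually used in this lemma; exactly as in the equality-free reduction, only the forbidden-string hypothesis is needed, and the prefix-free property will come into play only later, when the enforcer strings are analysed.
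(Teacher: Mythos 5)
Your proposal is correct and follows exactly the paper's argument: the hypothesis rules out selecting the single letter $\dollar$ (a factor of the forbidden string $\dollar\dollar$), so each $\dollar$ must be absorbed into a length-$2$ block with an adjacent literal letter, and the resulting case analysis (the paper delegates it to Figure~\ref{fig:PF-MSP-clause-string}) shows every valid $2$-partition selects exactly one literal. Your added observations — that the enumeration is exhaustive and that prefix-freeness itself is not needed here — are accurate elaborations of the same proof.
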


\begin{proof}\label{pf:PF-MSP-clause-string}
  We say a literal $c_{i}^j$ is selected in clause $c_i$ if and only
  if the string $\lit_{i}^j$ is selected in the clause string for
  $c_{i}$.  Whether $c_i$ has two or three literals, the forbidden
  string $\dollar$ cannot be selected alone.  A simple case analysis
  shows that in any valid partition exactly one literal is selected
  (see Figure \ref{fig:PF-MSP-clause-string}).
  \qed
\end{proof}

\paragraph{Construction of enforcer strings:} 
We must now ensure that no literal of $\phi$ that is selected in
$\mathcal{C}$ is the negation of another selected literal.  By
definition of 3SAT(3), each variable appears exactly three times:
twice positive and once negated. Let $c_i^p$ and $c_j^q$ be the two
positive and $c_k^r$ the negated occurrences of a variable $x_{v}$.
Then construct two \emph{enforcer strings} for this variable as
shown in Figure \ref{fig:PF-MSP-enforcer-str}.

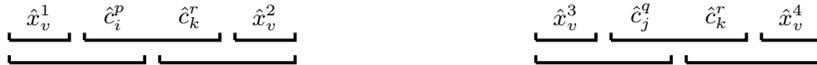
\begin{figure}
  \centering \small

  \begin{tikzpicture}[xscale=1,scale=1]

    \begin{scope}[shift={(0,0)}]

      \underscore{1}{4}{0}
      \underscore{2}{4}{1}
      \underscore{4}{4}{0}
      
      \underscore{1}{3.7}{1} 
      \underscore{3}{3.7}{1}
            
      \draw (1,4) +(0,0) node{\ensuremath{\var_v^1}};
      \draw (1,4) +(1,0) node{\ensuremath{\lit_{i}^p}};
      \draw (1,4) +(2,0) node{\ensuremath{\lit_{k}^r}};
      \draw (1,4) +(3,0) node{\ensuremath{\var_v^2}};
    \end{scope}

    \begin{scope}[shift={(7,0)}]
      \underscore{1}{4}{0} 
      \underscore{2}{4}{1}
      \underscore{4}{4}{0}

      \underscore{1}{3.7}{1}
      \underscore{3}{3.7}{1}
      
      \draw (1,4) +(0,0) node{\ensuremath{\var_v^3}};
      \draw (1,4) +(1,0) node{\ensuremath{\lit_{j}^q}};
      \draw (1,4) +(2,0) node{\ensuremath{\lit_{k}^r}};
      \draw (1,4) +(3,0) node{\ensuremath{\var_v^4}};
    \end{scope}
    
  \end{tikzpicture}
  \caption{The pair of enforcer strings for a variable $x_v$ used in
    the reduction from 3SAT(3) to PF-MSP.  The two positive literals
    for variable $x_v$ are denoted as $\lit_{i}^p$ and $\lit_{j}^q$
    and the negative literal as $\lit_{k}^r$.}
  \label{fig:PF-MSP-enforcer-str}
\end{figure}

\begin{lemma}\label{lem:PF-MSP-enforce}
  Given that no factor of the forbidden string
  is selected in $\mathcal{C} \cup \mathcal{E}$, any
  prefix-free 2-partition of $\mathcal{W}$ must be consistent.
\end{lemma}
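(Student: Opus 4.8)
The plan is to argue one variable at a time, using its two enforcer strings. Fix a variable $x_v$ with positive occurrences $c_i^p, c_j^q$ and negated occurrence $c_k^r$, and recall its enforcer strings $\var_v^1\lit_i^p\lit_k^r\var_v^2$ and $\var_v^3\lit_j^q\lit_k^r\var_v^4$. As in Lemma~\ref{lem:PF-MSP-clause-string}, say that a literal $c_i^j$ is \emph{selected in $\mathcal{C}$} when the one-letter string $\lit_i^j$ occurs as a selected piece of the $i$-th clause string. A partition is consistent precisely when, for every variable $x_v$, it does not select in $\mathcal{C}$ the negated literal $c_k^r$ together with one of the positive literals $c_i^p, c_j^q$; so it is enough to prove, for a fixed valid prefix-free 2-partition of $\mathcal{W}$ and a fixed $x_v$, that $c_k^r$ and $c_i^p$ are not both selected in $\mathcal{C}$, and likewise that $c_k^r$ and $c_j^q$ are not both selected in $\mathcal{C}$.

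For the first of these I would take the enforcer string $\var_v^1\lit_i^p\lit_k^r\var_v^2$ and assume, toward a contradiction, that the one-letter strings $\lit_i^p$ and $\lit_k^r$ are both selected in the partition of $\mathcal{C}$. Since $K = 2$, the piece covering the second position of this enforcer string is one of $\lit_i^p$, $\var_v^1\lit_i^p$, or $\lit_i^p\lit_k^r$: the first repeats the already-selected string $\lit_i^p$ (which is a prefix collision, since a string is a prefix of itself), and in the third $\lit_i^p$ is a proper prefix of $\lit_i^p\lit_k^r$, so the covering piece must be $\var_v^1\lit_i^p$. But then the third position, carrying $\lit_k^r$, must be covered by a piece contained in the last two positions, hence by $\lit_k^r$ or by $\lit_k^r\var_v^2$ --- the former repeats the already-selected $\lit_k^r$ and in the latter $\lit_k^r$ is a proper prefix. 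Every branch produces a prefix collision, so $c_i^p$ and $c_k^r$ cannot both be selected in $\mathcal{C}$. The identical case split applied to $\var_v^3\lit_j^q\lit_k^r\var_v^4$ rules out $c_j^q$ and $c_k^r$ being selected together, and quantifying over all variables gives consistency.

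I do not expect a genuine obstacle: the enforcer gadget is built so that fixing one side of it --- forced by $K=2$ together with the fact (from Lemmas~\ref{lem:PF-MSP-forb} and~\ref{lem:PF-MSP-clause-string}) that a literal selected in $\mathcal{C}$ occupies a one-letter piece --- immediately pins the other side, so the whole argument is a two-line case split per enforcer string. The points that need care are purely bookkeeping: the term \emph{prefix collision} must be read to include equality, so a literal letter selected both in $\mathcal{C}$ and inside an enforcer string already counts as a collision; and the bumper letters $\var_v^1, \dots, \var_v^4$ occur nowhere else in $\mathcal{W}$, which is exactly why the pieces covering the two middle positions of an enforcer string can only be the options listed above. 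This is also why, unlike the factor-free construction, the single forbidden string $\dollar\dollar$ is enough --- collisions here need only be propagated through prefixes.
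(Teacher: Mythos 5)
Your proof is correct and follows essentially the same route as the paper: a case analysis of the possible 2-partition pieces covering the two literal positions of each enforcer string, using that a literal selected in $\mathcal{C}$ is a one-letter piece and that equality counts as a prefix collision, so that selecting the negated literal forces the enforcer partition to block the positive one and vice versa. The paper phrases it as a forward case split over which literal is selected while you phrase it as a contradiction per enforcer string, but the content is identical.
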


\begin{proof}\label{pf:PF-MSP-enforce}
  Consider the two enforcer strings for variable $x_{v}$ with positive
  literals $c_{i}^p=c_{j}^q=x_{v}$, and the negated literal
  $c_{k}^r=\neg x_{v}$ shown in Figure~\ref{fig:PF-MSP-enforcer-str}.
 
  Suppose the negative literal is selected in $\mathcal{C}$.  Then the
  only partition which can be selected without creating a collision
  selects strings containing both $\lit_{i}^p$ and $\lit_{j}^q$ as a
  prefix, thus forbidding them from being selected in $\mathcal{C}$
  (see Figure~\ref{fig:PF-MSP-enforcer-str} (top row)).

  Likewise, if one or both of the positive literals is selected
  in $\mathcal{C}$ then in any collision-free 2-partition a string is
  selected containing $\lit_k^r$ as a prefix, thus forbidding the
  negative literal from being selected in $\mathcal{C}$ (see
  Figure~\ref{fig:PF-MSP-enforcer-str} (bottom row)).
  \qed
\end{proof}

This completes the reduction.  Notice that the reduction is linear as
the combined length of the constructed set of strings $\mathcal{W} =
\mathcal{C} \cup \mathcal{E} \cup \mathcal{F}$ is at most $5m + 8n +
2$.

\begin{theorem}\label{thm:PF-MSP-NPC}
  Prefix(Suffix)-Free Multiple String Partition (PF-MSP) is \NP-complete.
\end{theorem}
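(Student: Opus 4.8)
The plan is to follow the template of the proofs of Theorems~\ref{thm:EF-MSP-NPC} and~\ref{thm:FF-MSP-NPC}, feeding in the three lemmas just established. First I would note that PF-MSP is in \NP: a nondeterministic algorithm guesses a partition $P$ of $\mathcal{W}$ in which every block has length at most $K$ and then checks in polynomial time that no selected string is a prefix of another. Second, the reduction above sends a 3SAT(3) instance $\phi$ to the family $\mathcal{W}=\mathcal{C}\cup\mathcal{E}\cup\mathcal{F}$ of total length at most $5m+8n+2$ over an alphabet of size at most $4n+3m+1$, so it runs in polynomial (indeed linear) time and space. It then remains to prove that $\phi$ is satisfiable if and only if $\mathcal{W}$ has a prefix-free 2-partition.

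For the ``only if'' direction I would start from a satisfying assignment of $\phi$, select in each clause string exactly one literal made true by the assignment (taking $\lit_i^j$ as a one-letter block) and partition the rest of that clause string around its $\dollar$'s as in Figure~\ref{fig:PF-MSP-clause-string}. Since no satisfying assignment makes both a positive and the negative literal of a variable true, for each variable $x_v$ I can choose, for each of its two enforcer strings, the partition of Figure~\ref{fig:PF-MSP-enforcer-str} that is compatible with which of $\lit_i^p,\lit_j^q,\lit_k^r$ were selected in $\mathcal{C}$: if $\lit_k^r$ is selected, use the partitions that bury $\lit_k^r$ inside a length-two block of each enforcer string; otherwise use the partitions that bury each selected positive literal inside a length-two block beginning with a variable letter. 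Finally $\dollar\dollar$ is selected whole. I would then verify that there are no prefix collisions: a block from $\mathcal{C}$ containing $\dollar$ can neither equal nor be a prefix of a block from $\mathcal{E}$ or of $\dollar\dollar$, since $\dollar$ occurs in no enforcer string; every letter $\var_v^j$ occurs only once in $\mathcal{W}$; and the one remaining danger, a one-letter literal block being a prefix of a two-letter enforcer block that begins with the same literal, is exactly what the consistent choice of enforcer partitions avoids.

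For the ``if'' direction I would take an arbitrary prefix-free 2-partition of $\mathcal{W}$. By Lemma~\ref{lem:PF-MSP-forb} no factor of $\dollar\dollar$ is selected in $\mathcal{C}$ or $\mathcal{E}$, so Lemma~\ref{lem:PF-MSP-clause-string} applies and exactly one literal is selected per clause string; reading these literals off yields a truth assignment, in which a variable none of whose three literal letters is selected in $\mathcal{C}$ may be assigned arbitrarily. Lemma~\ref{lem:PF-MSP-enforce} guarantees this choice is consistent, so the assignment is well defined and satisfies every clause of $\phi$. Combined with membership in \NP\ this yields \NP-completeness, even when the partition size is fixed to $2$; the suffix-free statement follows by reversing every string, as announced at the start of the section.

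The step I expect to be the main obstacle is the collision check in the forward direction. Unlike in the equality-free case, prefix-freeness makes a one-letter block $\lit$ collide with \emph{every} two-letter block having $\lit$ as its first letter, so I must argue with some care that the enforcer partitions can always be selected so as to avoid precisely these configurations --- in particular handling the sub-case in which a positive literal of $x_v$ is true but is not the literal selected in its own clause, so that the enforcer strings of $x_v$ still admit a partition that introduces no prefix relation.
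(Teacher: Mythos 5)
Your proposal is correct and follows essentially the same route as the paper: membership in \NP, the polynomial bound on the reduction, and the two directions of the equivalence argued via Lemmas~\ref{lem:PF-MSP-forb}, \ref{lem:PF-MSP-clause-string} and~\ref{lem:PF-MSP-enforce}. If anything, your explicit collision check in the forward direction (in particular that a one-letter literal block must not be the first letter of any selected two-letter enforcer block) spells out a detail the paper delegates to the case analysis inside Lemma~\ref{lem:PF-MSP-enforce} without restating it in the theorem's proof.
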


\begin{proof}\label{pf:PF-MSP-NPC}
  It is easy to see that PF-MSP is in \NP: a nondeterministic algorithm
  need only guess a partition $P$ where $|p_i| \leq K$ for all $p_i$
  in $P$ and check in polynomial time that no string in $P$ is a
  prefix of another.  Furthermore, it is clear that an arbitrary
  instance $\phi$ of 3SAT(3) can be reduced to an instance of PF-MSP,
  specified by a set of strings $\mathcal{W}=\mathcal{C} \cup
  \mathcal{E} \cup \mathcal{F}$, in polynomial time and
  space by the reduction detailed above.
	
  Now suppose there is a satisfying truth assignment for $\phi$.
  Simply select one corresponding true literal per clause in
  $\mathcal{C}$. The construction of clause strings guarantees that a
  2-partition of the rest of each clause string is possible.  Also,
  since a satisfying truth assignment for $\phi$ cannot assign truth
  values to opposite literals, then Lemma \ref{lem:PF-MSP-enforce}
  guarantees that a valid partition of the enforcer strings are
  possible.  Therefore, there exists a prefix-free multiple string
  partition of $\mathcal{W}$.
	
  Likewise, consider a prefix-free multiple string partition of
  $\mathcal{W}$.  Lemma \ref{lem:PF-MSP-clause-string} ensures that 
  exactly one literal per clause is selected.  Furthermore, Lemma
  \ref{lem:PF-MSP-enforce} guarantees that if there is no collision,
  then no two selected variables in the clauses are negations of each
  other.  Therefore, this must correspond to a satisfying truth
  assignment for $\phi$ (if none of the three literals of a variable
  is selected in the partition of $\mathcal{C}$ then this variable can
  have an arbitrary value in the truth assignment without affecting
  satisfiability of $\phi$).
  \qed
\end{proof}

\subsection{Prefix/Suffix-Free String Partition with Unbounded Alphabet}
\label{sec:prefix-free-unbounded-alphabet}

To show the single string restriction of this problem is \NP-complete,
we design the same delimiter strings as specified in the factor-free
construction.  The result follows immediately by Theorem
\ref{thm:PF-MSP-NPC} and Lemma \ref{lem:FF-SP-connector}.

\begin{theorem}\label{thm:PF-SP-NPC}
  Prefix(Suffix)-Free String Partition (PF-SP) is
  \NP-complete.
\end{theorem}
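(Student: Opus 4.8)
The plan is to reimitate, almost verbatim, the single‑string reduction used for the factor‑free case (Theorem~\ref{thm:FF-SP-NPC}). First, PF-SP is in \NP: a nondeterministic algorithm guesses a $K$-partition $P$ with every block of length at most $K$ and checks in polynomial time that no selected string is a prefix of another. For hardness I would reduce from PF-MSP, which is \NP-complete by Theorem~\ref{thm:PF-MSP-NPC} (in fact already with $K=2$, which is what Theorem~\ref{thm:PF} needs). Given a PF-MSP instance with strings $\mathcal{W}=\{w_1,\dots,w_n\}$ over $\Sigma$ and bound $K$, introduce fresh letters $\alpha$ and $\gamma_1,\dots,\gamma_{n-1}$ disjoint from $\Sigma$, keep $K'=K$, and form the single string
\[
  w' = w_1\,\alpha(\gamma_1)^{3K-2}\alpha\,w_2\,\alpha(\gamma_2)^{3K-2}\alpha\,w_3\cdots\alpha(\gamma_{n-1})^{3K-2}\alpha\,w_n .
\]

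The key point is that each connector $\alpha(\gamma_i)^{3K-2}\alpha$ is forced, by Lemma~\ref{lem:FF-SP-connector}, into the unique three‑block partition $\alpha(\gamma_i)^{K-1},\,(\gamma_i)^{K},\,(\gamma_i)^{K-1}\alpha$. I would first note that Lemma~\ref{lem:FF-SP-connector} transfers unchanged to the prefix‑free setting: its proof only uses that at most one of the blocks covering the connector may consist solely of $\gamma_i$'s, and two distinct unary $\gamma_i$-strings are in a prefix relation (equal‑length ones are identical), so they collide under the prefix‑free definition exactly as under the factor‑free one. Hence every valid (prefix‑free) $K$-partition of $w'$ contains, for each $i<n$, precisely those three connector blocks.

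The two directions are then routine. From a prefix‑free $K$-partition of $w'$, discard the forced connector blocks; the remaining blocks form a prefix‑free $K$-partition of $\mathcal{W}$. Indeed each connector block contains a letter ($\alpha$ or some $\gamma_i$) not in $\Sigma$, so it is neither a prefix of, nor has as a prefix, any block of a $w_j$; and the connector blocks are pairwise non‑prefix‑comparable — two of them either start with $\alpha$ versus with $\gamma_i$, or already differ in their second letter (here $K=2$, so this is clear), or have equal length $K$ but one ends in $\alpha$. Conversely, any prefix‑free $K$-partition of $\mathcal{W}$ extends to one of $w'$ by partitioning each connector in the forced way. Thus $w'$ has a prefix‑free $K$-partition iff $\mathcal{W}$ does; since $|w'|=\sum_i|w_i|+3K(n-1)$ the reduction is polynomial and preserves $K=2$, giving Theorem~\ref{thm:PF} part~(a) for the prefix‑free case as well. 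The suffix‑free statements follow by reversing every string in the construction.

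I do not expect a serious obstacle here; the only things that need genuine (but quick) verification are (i) that Lemma~\ref{lem:FF-SP-connector} really does apply to prefix‑collisions — immediate, as argued above — and (ii) that the forced connector fragments create no spurious prefix‑collisions, either among themselves or against fragments of the $w_j$'s. Both are handled by the single observation that every such cross‑collision would have to involve one of the new letters $\alpha,\gamma_1,\dots,\gamma_{n-1}$, none of which occurs outside the connectors.
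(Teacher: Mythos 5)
Your proposal is correct and follows essentially the same route as the paper: the paper's (very terse) proof likewise reduces from PF-MSP by joining the strings of $\mathcal{W}$ with the factor-free delimiters $\alpha(\gamma_i)^{3K-2}\alpha$ and invoking Lemma~\ref{lem:FF-SP-connector} together with Theorem~\ref{thm:PF-MSP-NPC}. You merely spell out the verification the paper leaves implicit, namely that the lemma's collision argument (two unary $\gamma_i$-blocks are prefix-comparable) and the absence of cross-collisions both carry over to the prefix-free setting.
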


\subsection{Prefix/Suffix-Free Multiple String Partition with Binary Alphabet}
\label{sec:prefix-free-mult-binary-alphabet}

In this section we start with the same construction as the multiple string unbounded alphabet
case to form a set of strings $\mathcal{W}$, but show how the letters
can be encoded into binary.  We map the $\dollar$ letter to $1$ and
map all others letters of the original unbounded alphabet $\Sigma$ to
distinct binary strings of length $t$, called codewords.
Consequently, $K$ will be set to $2t$.  We will establish that no
codeword can properly contain a cut point.  Furthermore, by design, no
codeword is a prefix of another.  Since the mapping to binary does not
introduce new collisions, and since codewords cannot be cut in the
middle, the correctness of the construction will follow from the
results on the unbounded case.

\paragraph{Construction of codewords:} We use codewords of
the form $00(1)^{i}0(1)^{t-4-i}0$, where $i\in\{2,\dots,t-6\}$. To
ensure we have enough codewords for all literal and variable
letters (at most $3m + 4n$), we have to choose $t \ge 3m + 4n + 7$.

\paragraph{Construction of forbidden string:} We will use the
following set of forbidden strings: $\{11$, $01$, $101$, $0001$, $10001\}$.
Considering only the forbidden set, a simple case analysis
shows that each forbidden string must be entirely selected, otherwise
a collision occurs.  This set of forbidden strings ensures that no
codeword can be cut in the middle.

\begin{lemma}\label{lem:no-split-codewords}
  Given that no strings selected in
  $\mathcal{C} \cup \mathcal{E}$ have a forbidden prefix, any prefix-free $K$-partition of
  $\mathcal{W}$ must not contain a cut point within a codeword.
\end{lemma}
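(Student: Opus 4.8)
The plan is a proof by contradiction. Suppose a prefix-free $K$-partition of $\mathcal{W}$ has a cut point strictly inside an occurrence of a codeword $c=00(1)^{i}0(1)^{t-4-i}0$, where $2\le i\le t-6$, and let $p$ be the selected string that begins at that cut point. Since codewords occur only inside clause and enforcer strings, $p$ is selected in $\mathcal{C}\cup\mathcal{E}$; I will show that the first few letters of $p$ are forced, and that in every case $p$ either has one of $11,01,101,0001,10001$ as a prefix (contradicting the hypothesis that no string selected in $\mathcal{C}\cup\mathcal{E}$ has a forbidden prefix) or is itself a proper prefix of one of these strings, and hence collides with the corresponding forbidden string, which --- as already noted --- must be selected in its entirety.

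The heart of the argument is a short case analysis on the position at which $p$ starts inside $c=0_10_21_3\cdots 1_{i+2}0_{i+3}1_{i+4}\cdots 1_{t-1}0_t$; the blocks $\{2\}$, $\{3,\dots,i+1\}$, $\{i+2\}$, $\{i+3\}$, $\{i+4,\dots,t-2\}$, $\{t-1\}$, $\{t\}$ partition $\{2,\dots,t\}$, using only $i\ge 2$ and $t-4-i\ge 2$. If $p$ starts at position $2$ or $i+3$ it begins $01$; if it starts inside either run of ones but not at that run's last letter it begins $11$; if it starts at position $i+2$ it begins $101$. The subtle positions are $t-1$ and $t$, where $p$ overruns $c$ and what follows $c$ in $\mathcal{W}$ matters. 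By construction each codeword is followed in $\mathcal{W}$ either by another codeword (beginning $00$), by a $\dollar$ (the single letter $1$), or by the end of its clause or enforcer string; so starting at $t-1$ the string $p$ begins $10001$, $101$, or $10$ respectively, and starting at $t$ it begins $0001$, $01$, or $0$. Finally, the short truncations of $p$ that can occur in the various subcases --- namely $0,1,00,000,10,100,1000$ --- are each a prefix of one of the forbidden strings, so they too produce a collision. This exhausts every internal starting position, so no such cut point exists.

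I expect the only real work to be the bookkeeping for the two boundary subcases, since those are the only places where the global layout of $\mathcal{W}$ --- that consecutive codewords are either adjacent or separated by exactly one $\dollar=1$, and that a codeword may sit at the end of a clause or enforcer string --- is needed; every other subcase is immediate from the explicit shape $00(1)^{i}0(1)^{t-4-i}0$ of the codewords together with $i\ge 2$ and $t-4-i\ge 2$, and from the fact that each of $11,01,101,0001,10001$ is entirely selected, so that any selected string that has one of them as a prefix, or is a prefix of one of them, creates a collision.
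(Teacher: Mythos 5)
Your proposal is correct and follows essentially the same route as the paper's proof: a case analysis on the starting position of the selected string inside the codeword $00(1)^i0(1)^{t-4-i}0$, with the interior positions handled directly from the codeword's shape and the last two positions handled by splitting on whether the codeword is followed by another codeword, by a $1$, or by the end of its string, and with short selections dismissed because every string of length at most two (and each truncation you list) is a prefix of some forbidden string that must itself be entirely selected. The only cosmetic difference is that you split the interior positions into finer blocks where the paper observes in one stroke that every proper suffix of the codeword of length greater than two begins with one of $011,111,101,110$.
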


\begin{proof}
Recall that codewords are of length $t$ and that all length two binary
strings are prefixes of a forbidden string and therefore cannot be
selected.  Let us consider any cut point beginning within an arbitrary
codeword $w$.  For any proper suffix of $w$ longer than two, it
contains a prefix in the set $\{011, 111, 101, 110\}$.  Each of these
contains a forbidden string as a prefix and therefore a cut point in
$w$ cannot begin prior to positions $2,3,\ldots,t-2$.  We must now
show that a cut point cannot begin prior to position $t-1$ or prior to
position $t$.  Recall that by construction, $w$ is followed by either
another codeword, the letter $1$, or the empty string.  If the empty
string, it is not possible to a have a cut point in the position prior
to $t-1$ or $t$ since a string will be selected that has length less
than 3 and will therefore be a prefix of a forbidden string.  If $w$
is followed by the letter $1$ a cut point prior to $t-1$ or $t$ will
have a prefix in the set $\{101, 01\}$, both of which are forbidden
strings.  Finally consider the case that $w$ is followed by another
codeword and recall that all codewords begin with $001$.  If a
cut point occurs prior to position $t-1$, and the selected string
beginning at that position has length at least five, then it will
contain $10001$ as a prefix which is a forbidden string; any shorter
selection will be a prefix of the forbidden string $10001$.  If a
cut point occurs prior to position $t$ and the selected string has
length at least four, it will contain $0001$ as a prefix which is a
forbidden string; any shorter selection will be a prefix of the
forbidden string $0001$.\qed
\end{proof}

The above lemma ensures no codeword is divisible.  The result is that
the binary encoded instance $I'$ of an unbounded alphabet instance $I$
can be partitioned exactly in the same relative positions as the
original instance.  Since each codeword cannot be a prefix of another
by design, then correctness of the binary case immediately follows.

\begin{theorem}\label{thm:PF-MSP-NPC-binary}
  Prefix(Suffix)-Free Multiple String Partition (PF-MSP) is
  \NP-complete for binary alphabet ($L=2$).
\end{theorem}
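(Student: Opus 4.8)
The plan is to compose the unbounded-alphabet reduction of Theorem~\ref{thm:PF-MSP-NPC} with the binary encoding described above and argue that the encoding neither destroys nor creates solutions. Membership in \NP{} is immediate: guess a partition with every part of length at most $K$ and verify prefix-freeness in polynomial time. For the reduction itself, given a 3SAT(3) instance $\phi$ I would build the unbounded instance $\mathcal{W}=\mathcal{C}\cup\mathcal{E}\cup\mathcal{F}$ of Section~\ref{sec:prefix-free-mult-unbounded-alphabet}, apply the letter-homomorphism $h$ sending $\dollar$ to $1$ and every other letter to a distinct length-$t$ codeword $00(1)^{i}0(1)^{t-4-i}0$, set $K=2t$, and output the binary instance consisting of $h(\mathcal{C})\cup h(\mathcal{E})$ together with the replacement forbidden set $\{11,01,101,0001,10001\}$. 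Since $t=O(m+n)$, each string grows by a factor of at most $t+1$, and only five forbidden strings are added, the reduction is polynomial.

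For the backward direction, suppose the binary instance has a valid prefix-free $K$-partition $P'$. I would first invoke the simple case analysis noted when the forbidden set was introduced to conclude that each forbidden string is selected in its entirety; hence no string selected inside $h(\mathcal{C})\cup h(\mathcal{E})$ can have a forbidden string as a prefix, as that would collide with the fully selected forbidden string. This is precisely the hypothesis of Lemma~\ref{lem:no-split-codewords}, which then yields that $P'$ has no cut point strictly inside a codeword. Consequently every cut point of $P'$ falls at a codeword boundary or immediately before/after an occurrence of $h(\dollar)=1$, and since the single letter $1$ is a prefix of the forbidden string $11$ it cannot be selected alone; so restricting $P'$ to each $h(w)$ with $w\in\mathcal{C}\cup\mathcal{E}$ and pulling back through $h$ gives a legitimate $2$-partition $P$ of $w$. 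That $P$ is collision-free in the unbounded sense follows because $h$ is injective on letters, distinct codewords are mutually prefix-free by design, and $1=h(\dollar)$ is neither a prefix of nor prefixed by any codeword (all codewords start with $00$); thus two parts of $P$ collide only if their $h$-images collide in $P'$. The correctness half of Theorem~\ref{thm:PF-MSP-NPC} then gives that $\phi$ is satisfiable.

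For the forward direction, if $\phi$ is satisfiable I would take the valid prefix-free $2$-partition $P$ of $\mathcal{W}$ guaranteed by Theorem~\ref{thm:PF-MSP-NPC}, apply $h$ part by part, and additionally select each of the five forbidden strings whole. The parts of the resulting partition $P'$ are then exactly codewords, a codeword preceded or followed by the letter $1$, or one of the five forbidden strings, and the routine check is that no part of $P'$ is a prefix of another: distinct codewords are mutually prefix-free; $1$ is not a prefix of any codeword; and because every codeword begins with $0011$, neither a codeword nor $1\cdot\text{codeword}$ nor $\text{codeword}\cdot 1$ has any of $11,01,101,0001,10001$ as a prefix, while conversely each of those strings is either too short to be an interesting part or disagrees with such a part at an early position. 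Hence $P'$ is a valid prefix-free $K$-partition of the binary instance, completing the equivalence; together with membership in \NP{} this proves the theorem.

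I expect the main obstacle to be the bookkeeping in the two collision checks — specifically, confirming that replacing the single forbidden string $\dollar\dollar$ by $\{11,01,101,0001,10001\}$ still enforces the ``$\dollar$ may not be selected alone'' behaviour required by the clause and enforcer gadgets of Section~\ref{sec:prefix-free-mult-unbounded-alphabet}, and simultaneously introduces no collision with a legitimately selected codeword-bearing part. Both points hinge on the precise shape $0011\ldots0$ of the codewords and on Lemma~\ref{lem:no-split-codewords}, so the verification is mechanical but must be carried out case by case.
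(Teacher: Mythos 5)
Your proposal follows the paper's route exactly: the same codewords $00(1)^{i}0(1)^{t-4-i}0$, the same map $\dollar\mapsto 1$, the same five forbidden strings $\{11,01,101,0001,10001\}$, and Lemma~\ref{lem:no-split-codewords} as the crux, so there is no methodological divergence. One intermediate step is overstated, however (the paper is equally terse at this point): it is \emph{not} true that every valid binary $K$-partition pulls back to a $2$-partition of $\mathcal{W}$. With $K=2t$, the part $1c1$ of length $t+2\le 2t$ --- the image of $\dollar\,\lit_{i}^{2}\,\dollar$ inside a three-literal clause string --- is a permissible, collision-free selection (its cut points sit at codeword boundaries and it has no forbidden prefix, since codewords begin with $0011$), yet it pulls back to a block of three letters of the unbounded alphabet. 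So you cannot literally invoke the backward half of Theorem~\ref{thm:PF-MSP-NPC} on the pulled-back partition. The repair is routine: observe that any valid binary partition still selects at least one literal codeword as a singleton in each clause string (in the exceptional partition \emph{both} flanking literals are singletons), and that the enforcer-string images, being concatenations of four codewords with no interleaved $1$'s, only admit cuts at codeword boundaries into parts of one or two codewords, so the consistency argument of Lemma~\ref{lem:PF-MSP-enforce} carries over verbatim; satisfiability of $\phi$ then follows directly from these two facts rather than from an exact bijection with $2$-partitions. A second, smaller omission: your forward-direction inventory of parts (``codewords, a codeword preceded or followed by $1$, or a forbidden string'') misses the two-codeword parts $cc'$ coming from the enforcer strings, which also need the (easy) prefix checks. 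With those patches the argument is the paper's.
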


\subsection{Prefix/Suffix-Free String Partition with Binary Alphabet}
\label{sec:prefix-free-binary-alphabet}

Similar to the factor-free case, we will design delimiters to join the
set $\mathcal{W}$ of the multiple string case, into one string,
without changing the possibilities for partitioning the original set
of strings and without introducing new types of collisions.
Specifically, we will create a new string instance $I=WF$ where $W$ is
a string that concatenates all strings in $\mathcal{W}$, expect from
the forbidden set $\mathcal{F}$, using delimiters we describe below.
The string $F$ has a special construction to ensure the strings from
the forbidden set must be selected in $F$ for any collision-free
partition of $I$.  Thus, the new instance $I$ will have a
collision-free partition if and only if $\mathcal{W}$ does.

\paragraph{Construction of delimiters:} 
We design delimiters similar to the codewords of
section~\ref{sec:prefix-free-mult-binary-alphabet} that instead have
length $K$.  Specifically, they are of the form
$00(1)^{i}0(1)^{K-4-i}0$, where $i\in\{2,\dots,K-6\}$.

\begin{lemma}\label{lem:prefix-delim}
  Given that no strings selected in $W$ have a forbidden prefix, any
  prefix-free $K$-partition of $W$ must select all delimiters.
  Furthermore, the delimiters do not collide with any valid selection
  of the original strings from the set $\mathcal{W}$.
\end{lemma}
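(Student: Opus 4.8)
The plan is to mimic the proof of Lemma~\ref{lem:no-split-codewords} almost verbatim, exploiting that here a delimiter has length \emph{exactly} $K$, so ``cannot be cut in the middle'' strengthens to ``must be selected as a single piece''. As in that proof, the hypothesis will be used only through two consequences: no string selected in $W$ has length at most two (every binary string of length $\le2$ is a prefix of one of the forbidden strings $11,01,101,0001,10001$), and no string selected in $W$ has one of those forbidden strings as a prefix. I would also record the single structural fact about $W=w_1 d_1 w_2\cdots d_{\ell-1}w_\ell$ that is needed: each delimiter $d_j$ is immediately followed by one of the encoded strings $w_{j+1}$, and every such $w_{j+1}$ starts with a codeword, hence with the block $001$.

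First I would show that a prefix-free $K$-partition of $W$ has no cut point strictly inside a delimiter $d=00(1)^{i}0(1)^{K-4-i}0$. Say $d$ occupies positions $s+1,\dots,s+K$ of $W$, suppose there were a cut point at $s+j$ with $1\le j\le K-1$, and let $y$ be the selected string beginning at position $s+j+1$. Since $i\in\{2,\dots,K-6\}$, both runs of ones in $d$ have length $\ge2$, and a case analysis on the letter at position $s+j+1$ yields: $y$ begins with $11$ if that letter lies in a run of ones but is not its last; with $101$ if it is the last letter of the first run; with $01$ if it is a $0$ of $d$ immediately followed by a $1$ (position $2$ or position $3+i$ of $d$); with $10001$ if it is the last letter of the second run (using the $001$ that follows $d$); and with $0001$ if $s+j+1=s+K$ (again using that following $001$). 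In every case $y$, or one of its prefixes of length $\le2$, is comparable under the prefix order with a forbidden string, contradicting the hypothesis. Hence the only cut points meeting $\{s,\dots,s+K\}$ are $s$ and $s+K$, and since $K$ is the maximum length of a selected string, the selected string covering position $s+1$ is exactly $d$. Thus every delimiter is selected.

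For the second statement I would observe that in a valid partition of the (binary-encoded) set $\mathcal{W}$ no codeword is cut (Lemma~\ref{lem:no-split-codewords}) and $\dollar$ is never selected alone, so a string selected from a $w_i$-block is a codeword $c$, a concatenation $c_1c_2$ of two codewords, $c1$, or $1c$. A delimiter $d$ has length $K=2t$, its set of $0$-positions is $\{1,2,3+i,K\}$, and it contains no factor $000$. Therefore $d$ is too long to be a prefix of $c$, $c1$, or $1c$; conversely $1c$ starts with $1$ while $d$ starts with $0$, and the length-$t$ prefix of $d$ contains at most three $0$'s whereas a codeword contains four, so $c$ --- and hence $c1$ --- is not a prefix of $d$ either; and although $c_1c_2$ also has length $K$, it contains the factor $000$ at the junction of $c_1$ and $c_2$, so $c_1c_2\neq d$. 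Distinct delimiters have the same length $K$ and distinct parameter $i$, so no two of them collide either.

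The hard part will be the tail of the case analysis in the second paragraph: a cut point at the last one or two letters of a delimiter cannot be excluded by looking at the delimiter in isolation. This is precisely why the construction interleaves the encoded strings with delimiters rather than concatenating them directly --- every delimiter is immediately followed by a codeword, so $0001$ (respectively $10001$) straddles the boundary and is forbidden. The remaining work --- the $0$-run bookkeeping in the collision check, and verifying that $t$ can be chosen large enough to supply distinct codewords and delimiters for all literals and variables --- is routine.
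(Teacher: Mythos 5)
Your proof is correct and follows essentially the same route as the paper: delimiters, being length-$K$ analogues of codewords, admit no internal cut point and hence must be selected whole, and the collision check reduces to comparing delimiters against the four shapes of strings selectable in the $w_i$-blocks. The only cosmetic differences are that you re-derive the no-internal-cut-point property instead of citing Lemma~\ref{lem:no-split-codewords} directly, and you rule out the two-codeword case via the $000$ factor at the junction rather than the paper's uniform count of at least four $0$'s in the first $K/2+1$ positions.
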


\begin{proof}
  By Lemma~\ref{lem:no-split-codewords} the delimiters, which are
  simply longer codewords, cannot contain a cut point in a middle
  position.  Since they are of the maximum string length, $K$, they
  must be entirely selected in $W$.  As a consequence of
  Lemma~\ref{lem:no-split-codewords}, in any valid partition of the
  set $\mathcal{W}$, the selected strings are either: (i) single
  codewords, (ii) a codeword prefixed by a $1$, (iii) a codeword
  suffixed by a $1$, or (iv) two adjacent codewords.  Since each case
  is no longer than a delimiter, it is sufficient to show that none
  could be a prefix of a delimiter.  In all four cases, none could be
  a prefix as the selected string would contain at least four $0$s in
  the first $K/2 + 1$ positions, whereas a delimiter contains at most
  three.\qed
\end{proof}

\paragraph{Construction of forbidden string:}

We now construct the string $F=F_4F_3F_2F_1$.  The string is
constructed in a meticulous manner to ensure that each sub-part must
select forbidden strings from a different partition of the forbidden
set $\mathcal{F}$.  In particular, $F_1=10^{3K-2}111$ and it forces $11$
to be selected; $F_2=0010^{K-3}0^{K-2}11101001^{K-2}01$ and it forces
$101$ and $01$ to be selected; $F_3=001^{K-3}010001$ and it forces
$10001$ to be selected; and $F_4=001^{k-4}010001$ and it forces $0001$
to be selected.

\begin{lemma}\label{lem:prefix-F}
  In any prefix-free partition of $F$ the set of forbidden strings
  must be selected.  Furthermore, there exists a prefix-free partition
  of $F$ that does not collide with a valid partition of $W$ assuming
  that $W$ does not contain a forbidden prefix.
\end{lemma}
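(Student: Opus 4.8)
The plan is to analyze the string $F = F_4 F_3 F_2 F_1$ one block at a time, from the right end (where $F_1$ sits) toward the left, exploiting the fact that each $F_i$ has been engineered to make exactly one element of $\mathcal{F} = \{11, 01, 101, 0001, 10001\}$ unavoidable while being compatible with the earlier blocks. First I would recall the elementary fact (already used in Section~\ref{sec:prefix-free-mult-binary-alphabet}) that in any prefix-free partition, short binary strings that are prefixes of other would-be selected strings cannot coexist; in particular, once a forbidden string is selected anywhere, no prefix of it can be selected. The argument for $F_1 = 10^{3K-2}111$ is the base: the long run $0^{3K-2}$ combined with the trailing $111$ forces, by a counting/pigeonhole argument on cut points inside the run of $0$'s analogous to Lemma~\ref{lem:FF-SP-connector} and Lemma~\ref{l:cut-around-0}, that $11$ must be selected (the only way to cover the boundary between the zero-run and $111$ without selecting a string of length $<3$ that is a prefix of something, or a $0$-only string that repeats). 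Then, knowing $11$ is selected and hence $1$ and (where relevant) other prefixes of $11$ are excluded, I turn to $F_2$: its explicit form $0010^{K-3}0^{K-2}11101001^{K-2}01$ is chosen so that the available cut points, after the exclusion of $11$, leave $101$ and $01$ as the only coverings of certain boundary positions; a short case analysis on where cut points may fall inside the two long runs yields this. Continuing leftward, $F_3 = 001^{K-3}010001$ forces $10001$, and $F_4 = 001^{K-4}010001$ forces $0001$, each by the same style of bounded case analysis, now additionally using that all previously forced forbidden strings and their prefixes are unavailable.

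For the second part, I would exhibit one explicit prefix-free partition of $F$: partition each $F_i$ so that it selects precisely its intended forbidden string(s) together with some ``filler'' strings (runs of $1$'s of distinct lengths, and codeword-like blocks of length $K$), chosen exactly as in the delimiter constructions so that the filler strings are mutually non-colliding and, crucially, are never prefixes of — and do not have as a prefix — any of the selected strings coming from a valid partition of $W$. Here I would invoke Lemma~\ref{lem:prefix-delim}: the selected strings of $W$ are of the four restricted shapes (single codeword, codeword with a leading or trailing $1$, or two adjacent codewords), each containing at least four $0$'s among its first $K/2+1$ positions, whereas the forbidden strings are all of length at most $K$ with very few $0$'s, and the fillers are either all-$1$ runs or explicitly designed $K$-length blocks; a direct comparison shows no prefix relation in either direction. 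One also checks the forbidden strings are pairwise prefix-free among themselves (immediate from the list) and that the fillers within $F$ do not collide (distinct lengths / distinct explicit contents).

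The main obstacle I anticipate is the first part, and within it the treatment of $F_2$: it is the one block forcing two forbidden strings at once, and its internal structure ($0010^{K-3}0^{K-2}11101001^{K-2}01$) is the most intricate, so ruling out every alternative placement of cut points — especially near the junctions $0^{K-3}0^{K-2}$ and $001^{K-2}01$ — requires the most careful enumeration, all while tracking which prefixes have already been banned by the forced selection of $11$. The $F_1$ base case is essentially a reprise of the connector lemmas and should go through cleanly; $F_3$ and $F_4$ are short and differ only by the length of a $1$-run, so once $F_2$ is handled they follow by nearly identical reasoning. The compatibility claim in the second part is routine once Lemma~\ref{lem:prefix-delim}'s structural description of $W$-selections is in hand, since the ``$\geq 4$ zeros early vs.\ $\leq 3$'' separation cleanly decouples the $F$-side strings from the $W$-side strings.
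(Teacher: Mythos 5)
Your proposal is correct and follows essentially the same route as the paper's (itself only sketched) proof: a right-to-left, block-by-block analysis of $F_1,\dots,F_4$ that forces each forbidden string via a connector-style counting argument on $10^{3K-2}1$ and cascading exclusions from previously forced strings, followed by a structural separation between $F$-selections and $W$-selections for the compatibility claim. The only cosmetic difference is the separation criterion for part two (you reuse the ``many $0$'s early'' test from Lemma~\ref{lem:prefix-delim}, the paper counts runs of $0$'s), which does not change the argument.
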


\begin{proof} (Sketch).  At a high level, each sub-part of $F$ is
  constructed to ensure that one or more forbidden strings are
  selected, and the remainder of the sub-part consists of $K$ length
  sub-strings that must be entirely selected.  Furthermore, the
  construction of $F$ ensures there is always a cut point between
  sub-parts and sub-part $F_j$ is constructed with the knowledge of
  strings forbidden in $F_i$, for all $i < j$.  The proof requires a
  detailed and exhaustive case analysis.  We give a sketch of the
  correctness.  Consider sub-part $F_1$.  It contains the $3K$-length
  substring $\alpha=10^{3K-2}1$.  At least three strings must be
  selected to cover $\alpha$.  However, it cannot be more than three
  as otherwise at least two contain only $0$ letters and therefore one
  must be a prefix of another.  The only cover for $\alpha$ consisting
  of three strings must select $10^{K-1}$, $0^{K}$, and $0^{K-1}1$,
  regardless of the sub-strings preceding or succeeding $\alpha$.
  Since $1$ cannot be selected alone (without being a prefix of
  $10^{K-1}$) the string $11$ must be selected.  Similar arguments,
  and the fact that $11$ must already be selected, ensures that $F_2$
  is partitioned as $0010^{K-3}$, $0^{K-2}11$, $101$, $001^{K-2}$,
  $01$, thus forbidding $101$ and $01$.  For $F_3$, since $101$ and
  $11$ are already forbidden, there cannot be a cut point prior to any
  $1$ within the left-most run of $1$s.  Since $01$ is forbidden,
  there cannot be a cut point in the second position, nor immediately
  after the left-most run of $1$s.  It follows that $001^{K-3}$ must be
  entirely selected, regardless of the string preceding $F_3$.  The
  remaining string $10001$ must be entirely selected, otherwise it
  would conflict with $101$ or $01$.  Similarly, the partitioning of
  $F_4$ ensures that $001^{k-4}01$ and $0001$ must be selected.

  Note that all strings selected in $F$ not in the forbidden set have
  length $K$.  Thus, to show no new collisions have been introduced,
  it is sufficient to show that no string selected in a valid
  partition of $W$, that does not contain a forbidden prefix, cannot
  be a prefix of one of these strings.  As noted earlier, every
  selected string in a valid partition of $W$ contains at least one
  codeword as a sub-string.  Each codeword contains three runs of
  $0$s; however, each $K$-length selected string in $F$ contains no
  more than two runs of $0$s.\qed
\end{proof}

By our straightforward polynomial time and space reduction of a binary
PF-MSP instance into a binary PF-SP instance and by
Lemmas~\ref{lem:prefix-delim} and ~\ref{lem:prefix-F} and
Theorem~\ref{thm:PF-MSP-NPC-binary} we have the following result.

\begin{theorem}\label{thm:PF-SP-NPC-binary}
  Prefix(Suffix)-Free String Partition (PF-MSP) is \NP-complete for
  binary alphabet ($L=2$).
\end{theorem}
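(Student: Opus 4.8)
The plan is to reduce the binary PF-MSP problem, shown \NP-complete in Theorem~\ref{thm:PF-MSP-NPC-binary}, to the binary PF-SP problem via the single string $I = WF$ already introduced, where $W = w_{1}d_{1}w_{2}\cdots d_{\ell-1}w_{\ell}$ concatenates the clause and enforcer strings $w_{1},\dots,w_{\ell}$ of $\mathcal{C}\cup\mathcal{E}$ using the delimiters $d_{j}$, and $F = F_{4}F_{3}F_{2}F_{1}$ is the forbidden-set gadget. Membership in \NP\ is routine: a nondeterministic machine guesses a $K$-partition whose parts have length at most $K$ and checks in polynomial time that no part is a prefix of another. The reduction is polynomial in time and space, since each delimiter has length $K = 2t$, there are $\ell - 1 < |\mathcal{W}|$ of them, and $|F| = |F_{1}|+|F_{2}|+|F_{3}|+|F_{4}| = O(K)$ (the longest part $F_{1} = 10^{3K-2}111$ has length $3K+2$), so $|I| = ||\mathcal{W}|| + O(K\,|\mathcal{W}|)$. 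Correctness is then an assembly of Lemma~\ref{lem:prefix-delim}, Lemma~\ref{lem:prefix-F} and Theorem~\ref{thm:PF-MSP-NPC-binary}.

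For the forward direction, suppose the PF-MSP instance $\mathcal{W}=\mathcal{C}\cup\mathcal{E}\cup\mathcal{F}$ admits a prefix-free $K$-partition; by the analysis of Section~\ref{sec:prefix-free-mult-binary-alphabet} such a partition selects every forbidden string of $\mathcal{F}$ entirely, cuts no codeword, and selects no string with a forbidden prefix inside $\mathcal{C}\cup\mathcal{E}$. Partition $W$ by keeping the partition of each $w_{i}$ and selecting every $d_{j}$ whole, and partition $F$ by the compatible partition furnished by Lemma~\ref{lem:prefix-F}. Collisions among the three resulting families are excluded as follows: $w_{i}$-selections versus delimiters by Lemma~\ref{lem:prefix-delim}; $w_{i}$-selections versus strings selected in $F$ by the ``furthermore'' clause of Lemma~\ref{lem:prefix-F}; delimiters versus the length-$K$ strings of $F$ because a delimiter, being a longer codeword, contains three runs of $0$'s whereas every $K$-length string selected in $F$ contains at most two, so they are distinct and, having equal length $K$, neither is a prefix of the other; and the short forbidden strings selected in $F$ are not prefixes of any delimiter (a delimiter begins $001$, so $0001$ is ruled out, and $11,01,101,10001$ each fail already in their first one or two letters). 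Hence $I$ has a prefix-free $K$-partition. For the converse, let $P$ be a prefix-free $K$-partition of $I$; the construction of $F$ forces a cut point at the $W$--$F$ junction, so the restriction of $P$ to $F$ is a prefix-free partition of $F$, and by Lemma~\ref{lem:prefix-F} it selects every forbidden string of $\mathcal{F}$. Consequently no string $P$ selects inside $W$ may have a forbidden prefix, since it would collide with its copy in $F$; by Lemma~\ref{lem:prefix-delim} this forces $P$ to select every delimiter whole, so the cut points of $P$ strictly inside the $w_{i}$'s yield a prefix-free multiple $K$-partition of $\mathcal{C}\cup\mathcal{E}$ with no forbidden prefixes, and adjoining the forbidden strings (selected in $F$) gives a prefix-free $K$-partition of $\mathcal{W}$. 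By Theorem~\ref{thm:PF-MSP-NPC-binary} the underlying 3SAT(3) instance is then satisfiable.

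The genuine difficulty of the whole construction has been discharged in the two lemmas rather than in the theorem itself: Lemma~\ref{lem:prefix-F} needs the meticulous, staged case analysis showing that $F_{4}F_{3}F_{2}F_{1}$ forces \emph{exactly} the forbidden set $\{11,01,101,0001,10001\}$ --- each sub-part built with knowledge of the strings the earlier sub-parts already forbid, and with forced cut points between sub-parts --- while still leaving a partition of $F$ compatible with an arbitrary valid partition of $W$; and Lemma~\ref{lem:prefix-delim} carries the codeword-cutting and delimiter-selection arguments inherited from the binary PF-MSP construction. Granting those, the only thing to watch in the theorem proof is the bookkeeping verifying that the three groups of selected strings (original $w_{i}$-selections, delimiters, and $F$-selections) are pairwise collision-free, which the design --- all non-forbidden $F$-selections of length $K$ with at most two runs of $0$'s, all delimiters of length $K$ with three runs of $0$'s, and codewords forced not to be split --- reduces to the short checks above.
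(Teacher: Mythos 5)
Your proposal is correct and follows essentially the same route as the paper, which proves this theorem in a single sentence by combining the $I=WF$ construction with Lemma~\ref{lem:prefix-delim}, Lemma~\ref{lem:prefix-F} and Theorem~\ref{thm:PF-MSP-NPC-binary}. You simply make explicit the bookkeeping (polynomial size, the two directions, and the pairwise collision checks among $w_i$-selections, delimiters, and $F$-selections) that the paper leaves implicit in those two lemmas.
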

}

\section{Conclusion}
We have established the complexity of the following fundamental
question: given a string $w$ over an alphabet $\Sigma$ and an integer
$K$, can $w$ be partitioned into factors no longer than $K$ such that
no two \emph{collide}?  We have shown this problem is \NP-complete for
versions requiring that no string in the partition is a
copy/factor/prefix/suffix of another.  Furthermore, we have shown the
problems remain hard even for binary strings.  This resolves a number
of open questions from previous work~\cite{ConManTha2008} and
establishes the theoretical hardness of a practical problem in
contemporary synthetic biology, specifically, the oligo design for
gene synthesis problem.

\bibliography{bibliography}

\newpage

\appendix{
\FF 
\PSF }

\end{document}